\def\fskip#1{}
\newtheorem{theorem}{Theorem}
\newtheorem{assumption}{Assumption}
\newtheorem{definition}{Definition}
\newtheorem{example}{Example}
\newtheorem{lemma}{Lemma}
\newtheorem{proposition}[theorem]{Proposition}
\newtheorem{remark}{Remark}
\def\1{{\bf 1}}
\newcommand{\remove}[1]{}
\def\argmin{\mathop{\rm argmin}}
\def\argmax{\mathop{\rm argmax}}
\begin{document}
\title{Decentralized and Uncoordinated Learning of Stable Matchings: A Game-Theoretic Approach}
%\vspace{-0.5cm}
\author{\authorblockN{S. Rasoul Etesami$\dag$ and R. Srikant$\ddag$\vspace{-0.5cm}}
\thanks{$\dag$Department of Industrial and Systems Engineering, Department of Electrical and Computer Engineering, and Coordinated Science Laboratory, University of Illinois Urbana-Champaign,  Urbana, IL 61801 (Email: etesami1@illinois.edu).}
\thanks{$\ddag$Department of Electrical and Computer Engineering and Coordinated Science Laboratory, University of Illinois Urbana-Champaign,  Urbana, IL 61801 (Email: rsrikant@illinois.edu).}
\thanks{This research was supported by the AFOSR YIP FA9550-23-1-0107, AFOSR MURI FA9550-24-1-0002, NSF CAREER Award EPCN-1944403, and NSF CCF 22-07547.}
}
\maketitle
%\begin{abstract}
  
%\end{abstract}
%\begin{keywords}
%Data placement; Glauber dynamics; potential games; approximation algorithms; distributed computation; linear programming duality.
%\end{keywords}
%%%%%%%%%%%%%%%%%%%%%%%

%In particular, a new performance notion is required to measure the closeness of the underlying dynamics to a stable matching while reflecting the selfish incentives of the agents.
           
\begin{abstract}
We consider the problem of learning stable matchings with unknown preferences in a decentralized and uncoordinated manner, where ``decentralized" means that players make decisions individually without the influence of a central platform, and ``uncoordinated" means that players do not need to synchronize their decisions using pre-specified rules. First, we provide a game formulation for this problem with known preferences, where the set of pure Nash equilibria (NE) coincides with the set of stable matchings, and mixed NE can be rounded to a stable matching. Then, we show that for \emph{hierarchical} markets, applying the exponential weight (EXP) learning algorithm to the stable matching game achieves logarithmic regret in a fully decentralized and uncoordinated fashion. Moreover, we show that EXP converges locally and exponentially fast to a stable matching in general markets. We also introduce another decentralized and uncoordinated learning algorithm that globally converges to a stable matching with arbitrarily high probability. Finally, we provide stronger feedback conditions under which it is possible to drive the market faster toward an approximate stable matching. Our proposed game-theoretic framework bridges the discrete problem of learning stable matchings with the problem of learning NE in continuous-action games.
\end{abstract}
\begin{keywords}
Learning stable matchings, learning Nash equilibrium, weakly acyclic games, monotone games, decentralized learning, uncoordinated learning, online mirror descent, regret minimization. 
\end{keywords}

\section{Introduction}

Learning stable matchings is one of the fundamental problems in computer science, economics, and engineering that has received considerable attention over the past decades. Stable matchings provide a desirable notion of stability in two-sided matching markets where agents on each side of the market have preferences over the other side. A matching is called stable if no two agents prefer each other over their current matches. For instance, in college admissions, applicants have different preferences for colleges and vice versa. The goal is to match applicants to colleges in such a way that no applicant-college pair would prefer to break their current matches and instead be matched to each other. Similar situations arise in other applications, such as kidney exchange programs, job assignment to workers, matching in online dating platforms, and scheduling jobs on heterogeneous machines. 

It is known that when all preferences are known, stable matchings always exist, and a simple decentralized and uncoordinated \emph{Deferred-Acceptance} (DA) algorithm, first proposed by \cite{gale1962college}, can find such stable matches in a polynomial number of iterations. Here, a decentralized algorithm refers to one in which players make decisions individually without the influence of a central platform, and uncoordinated means that players do not need to synchronize their decisions using pre-specified rules. However, when preferences are unknown, developing such an algorithm faces major challenges due to the lack of coordination. While the DA algorithm provides a satisfactory solution for many practical applications, there are scenarios where either the information structure of the problem hinders the implementation of the DA algorithm or the stable matching achieved by the DA algorithm is suboptimal in terms of social welfare, especially when there are many feasible stable matchings in the market. For instance, there has been a recent emergence of online matching markets, such as online labor markets (e.g., TaskRabbit, Upwork), online dating markets (e.g., Tinder, Match.com), and online crowdsourcing platforms (e.g., Amazon Mechanical Turk), where users do not know their preferences a priori and can repeatedly interact with the market to improve their matching quality \cite{maheshwari2022decentralized}. The more a pair on both sides of the market gets matched, the more certain they become about their preferences. Therefore, an important question is how agents should interact with the market so that, in the absence of any coordination, they can learn their preferences as quickly as possible and achieve a stable matching through their collective behavior.

In this work, we consider a two-sided matching market framed as a marriage problem, where the agents on one side are referred to as men and the agents on the other side as women. We assume that both men and women have preferences regarding the other side, and that women are aware of their ordinal preferences for men. However, men do not know their preferences for women and only learn them if they propose and get matched. In this case, the matched men receive a noisy estimate of their preferences. We assume that agents cannot observe any other information (e.g., who is rejected or accepted) and cannot coordinate in any way. The goal for men is to follow a decentralized and uncoordinated proposal strategy such that the entire market converges to a stable matching over time.

One of the major challenges in learning a stable matching in markets with unknown preferences is handling collisions. More precisely, when multiple men propose to the same woman, only one of them gets matched and receives useful information, while all others are rejected and receive no information about their preferences for that woman. Therefore, resolving such collisions without coordination is a significant issue. Moreover, even if a man is matched with a woman, he receives a noisy utility drawn from an unknown distribution that may differ from his true preference. Consequently, a collision-avoidance process needs to be repeated many times before men can accurately estimate their true preferences. One of our goals in this work is to provide decentralized and uncoordinated algorithms that can effectively learn the true preferences while minimizing the number of collisions. Additionally, our work offers an alternative approach for measuring the closeness of the generated dynamics to any stable matching in general matching markets, rather than focusing on a specific stable matching in structured markets, as has been the main focus in past literature. To this end, we establish a connection between learning stable matchings in general markets with unknown preferences and learning Nash equilibria (NE) in noncooperative games. We leverage this connection to develop decentralized and uncoordinated algorithms for learning stable matchings in a principled manner.      

\subsection{Related Work}

Stable matching was first introduced by \cite{gale1962college} as a model for college admissions and to study the stability of marriages. Since then, there has been a tremendous effort to generalize and extend stable matchings to various market settings \cite{roth2008deferred}. In their seminal work, Gale and Shapley provided a simple Deferred Acceptance (DA) algorithm in which men propose to their most preferred women, and the women reject all proposals except the one from their most preferred man. They showed that such a decentralized algorithm converges to a stable matching in at most $O(n^2)$ steps, where $n$ is the total number of men and women in the market. Unfortunately, when the agents' preferences are unknown, the DA algorithm is no longer guaranteed to converge to a stable matching without additional coordination. More precisely, if men fail to propose to their highest preferred women due to a lack of knowledge of their true preferences, their subsequent proposals may become unpredictable. For instance, if a man $m$ is rejected by a woman $w$ at a certain time $t$, it does not necessarily mean that $m$ should stop proposing to $w$. It could be that another man mistakenly proposed to $w$ at time $t$, causing $m$'s rejection. Therefore, to extend the DA algorithm to the case of unknown preferences, some form of coordination among the agents is required to avoid uninformative acceptance/rejection patterns. To address this issue, earlier works \cite{pokharel2023converging, maheshwari2022decentralized, pagare2023two, kong2023player} have proposed various coordination-based DA algorithms. The main idea is to develop a phase-dependent process where, in some phases, the agents primarily explore, and in other phases, they implement the DA algorithm based on current estimates of their unknown preferences.

It is known that stable matchings with known preferences can be characterized using the extreme points of a fractional polytope that is totally dual integral \cite{vohra2012stable, teo1998geometry}. Moreover, stable matchings exhibit other useful properties that allow their geometry to be characterized using so-called \emph{rotations} \cite{kiraly2008total}. It was shown in \cite{roth1990random} that uncoordinated random better-response dynamics converge to a stable matching with probability one. Subsequently, \cite{ackermann2008uncoordinated} provided an exponential lower bound for the worst-case convergence time of the uncoordinated random better/best response dynamics to a stable matching. However, all these results apply to the case when market preferences are fully known, allowing agents to compute their best or better responses and update their decisions accordingly.

More recently, there has been a significant interest in matching markets with unknown preferences. However, depending on the information structure and the feedback received by the agents, one might expect a wide range of performance guarantees, which also depends on the type of algorithms followed by the agents (e.g., centralized/decentralized or coordinated/uncoordinated). The work in \cite{bei2013complexity} devised a randomized polynomial-time centralized algorithm for finding a stable matching with unknown deterministic preferences. Each time, the algorithm proposes a matching and receives feedback in the form of a blocking pair, where an unmatched pair of man and woman $(m, w)$ is called blocking if both $m$ and $w$ prefer each other over their current partners. The study in \cite{wang2022bandit} considers learning a specific stable matching with unknown stochastic preferences by adopting a suitable notion of stable regret, which measures the number of times that men propose to women other than their stable pair in that stable matching. The work in \cite{maheshwari2022decentralized} addressed the problem of learning stable matchings in \emph{hierarchically} structured markets where each submarket has a \emph{fixed pair} (i.e., a pair of men and women who prefer each other the most in that submarket). They developed a phase-coordinated decentralized algorithm that achieves logarithmic regret in time with respect to the market's unique stable matching. Additionally, \cite{liu2021bandit} considered learning stable matchings in an uncoordinated and decentralized fashion. However, they require stronger assumptions on information feedback (e.g., a player observes the actions of other players in the previous round) and use a different performance metric than the one we consider in this work. We refer to \cite{jagadeesan2021learning} and \cite{basu2021beyond} for other learning algorithms in two-sided matching markets with different performance guarantees. While all these works address the problem of learning a stable matching with unknown preferences, their algorithms differ substantially due to the information/feedback structure, the type of performance guarantee, and the level of coordination allowed among the agents.

\subsection{Contributions and Organization}

We consider the problem of learning stable matchings in matching markets where men do not know their preferences and aim to achieve a stable matching in a decentralized and uncoordinated manner. In particular, we focus on the weakest type of feedback that men can receive: a man observes a noisy version of his true preference only if his proposal is accepted, and receives no information otherwise. \emph{Our main objective is to provide a novel and principled approach to designing decentralized and uncoordinated learning algorithms for stable matchings by examining them through the lens of NE learning in noncooperative games.} Our contributions can be summarized as follows:
\begin{itemize}
    \item[(i)] We first provide a complete information continuous-action game formulation for the stable matching problem, namely the \emph{stable matching game}, and show that its set of pure NE coincides with the set of stable matchings. Additionally, its mixed NE points can be rounded in a decentralized way to obtain a stable matching. This connection provides an alternative approach for measuring the closeness of the market dynamics to stable matchings through the lens of NE computation in games, thus extending the conventional notion of \emph{regret} with respect to a specific stable matching in special markets to any stable matching in general markets.
    \item[(ii)] Leveraging this game-theoretic formulation, we present a simple decentralized and uncoordinated algorithm, EXP, that globally converges to a stable matching in hierarchical markets. This algorithm achieves logarithmic regret in time, thereby extending the existing phase-coordinated algorithms in the literature to an uncoordinated one. The propose EXP algorithm can be viewed as the exponential weight learning algorithm adapted to the stable matching game. However, unlike the conventional analysis of EXP in the multi-arm bandit setting with a constant stepsize \cite{bubeck2012regret}, the analysis of the EXP algorithm with time-varying parameters for the stable matching game is quite involved and requires a novel application of Freedman's concentration inequalities for martingales in a hierarchical fashion. 
    \item[(iii)] We then prove that the EXP algorithm always converges locally to a stable matching in general matching markets at an exponential rate. Moreover, we show that if the EXP algorithm converges globally with positive probability, it must converge to a stable matching.
    \item[(iv)] For general matching markets, it was previously unknown whether a fully decentralized and uncoordinated algorithm could globally learn a stable matching.  Therefore, we complement our results by providing an alternative decentralized and uncoordinated algorithm that globally converges to a stable matching in general markets with arbitrarily high probability (albeit with a slower convergence rate), thanks to the \emph{weakly acyclic} property of the stable matching game.
    \item[(v)] Finally, in Appendix \ref{appx:monotone-game}, we address the problem of designing information feedback in matching markets and leverage our game-theoretic framework to identify conditions that facilitate the rapid global learning of stable matchings in general markets.   
\end{itemize}

The paper is organized as follows. In Section \ref{sec:formulation}, we formally introduce the problem. In Section \ref{sec:game}, we provide a complete information game-theoretic formulation for the stable matching problem and prove several characterization results for its set of NE points. In Section \ref{sec:hierarchical}, we provide a decentralized and uncoordinated algorithm (EXP) that achieves logarithmic regret for hierarchical markets. In Section \ref{sec:dual-mirror}, we show that EXP  converges locally and exponentially fast to a stable matching in general matching markets. In Section \ref{sec:global}, we provide an alternative decentralized and uncoordinated algorithm that globally converges to a stable matching in general matching markets with arbitrarily high probability. We conclude the paper by identifying some future research directions in Section \ref{sec:conclusion}. Omitted proofs and other supplementary materials are given in Appendix \ref{appx:omitted}. 

\section{Problem Formulation}\label{sec:formulation}

\subsection{Stable Matchings with Known Preferences}

Here, we first introduce the stable matching problem with \emph{known} preferences \cite{gale1962college}. In this problem, there are a set $M$ of men and a set $W$ of women, where by introducing dummy agents with appropriate preferences, without loss of generality we may assume $|M|=|W|=n$ \cite{vohra2012stable} (we use the term ``agents" to refer to either men or women). Each man $m\in M$ has a cardinal preference for each women $w\in W$, denoted by $\mu_{mw}\in (0, 1]$, such if $\mu_{mw}>\mu_{mw'}$, it means that $m$ prefers $w$ over $w'$. Moreover, each woman $w$ has an ordinal preference over the men, and we write $m>_{w} m'$ if woman $w$ strictly prefers $m$ over $m'$. In this work we assume that no agent has ties in their preferences and we define 
\begin{align}\nonumber
\Delta=\min_{m, w\neq w'}|\mu_{mw}-\mu_{mw'}|,\ \ \ \ \mu_{\min}=\min_{m,w}\mu_{mw},\ \ \ \ \mu_{\max}=\max_{m,w}\mu_{mw}.
\end{align}
In particular, we note that $\Delta>0$ and $\mu_{\min}>0$.
\begin{definition}
Given a matching and two matched pairs $(m,w)$ and $(m',w')$, we say that $(m,w')$ forms a \emph{blocking pair} if $\mu_{mw'}>\mu_{mw}$ and $m>_{w'} m'$. In other words, $(m,w')$ is a blocking pair if both $m$ and $w'$ prefer each other over their current matches. A perfect matching is called a \emph{stable matching} if it does not contain any blocking pair. 
\end{definition}

\subsection{Stable Matchings with Unknown Preferences}

In this work, we consider the problem of finding a stable matching with the main difference that men do not know their true preferences $\mu_{mw}$, and they only get to learn them through interactions with the market. More precisely, for any $m\in M, w\in W$, we assume that the preference of man $m$ about woman $w$ is in the form of a $[0,1]$-supported unknown distribution $\mathcal{D}_{mw}$ with unknown mean $\mu_{mw}>0$. We assume men and women interact in this market through a discrete-time process, where at any time $t=0,1,2,\ldots$ that a man $m$ proposes to a woman $w$, he receives a feedback in the following form:
\begin{itemize}
    \item If the proposal of $m$ gets rejected at time $t$ because woman $w$ has received an offer from a more preferred man $m'$, i.e., $m'>_w m$, then $m$ receives no information as feedback other than the fact that he was rejected by woman $w$ at time $t$. 
    \item If the proposal of $m$ gets accepted by $w$ at time $t$, then $m$ observes an i.i.d. realization of his sampled preference drawn from $\mathcal{D}_{mw}$, denoted by $\hat{\mu}^t_{mw}$.    
\end{itemize}

In the stable matching problem with unknown preferences, the agents' goals are to interact with the market through the information feedback structure described above such that the emerging dynamics converge to a stable matching of the market with \emph{known} preferences $\{\mu_{mw}\}$. We note that in the discrete-time process described above, men are the decision makers on whom to propose at each time $t$;  women merely respond to men's decisions by accepting their most preferred proposal (if they received any) and rejecting all others. 
       
\section{A Game-Theoretic Formulation and Nash Equilibrium Characterization}\label{sec:game}

In this section, we provide a complete information noncooperative game formulation for the stable matching problem with known preferences whose set of pure Nash equilibrium (NE) points coincides with the set of stable matchings. Later, we will show how to leverage this formulation to extend our results for learning stable matchings with unknown preferences. Such a formulation has three main advantages: i) it reduces the learning task in matching markets to learning NE in continuous-action concave games, ii) it captures the selfish behavior of men (players) and the feedback they receive through their payoff functions, and iii) it simplifies the combinatorial structure inherent in learning stable matchings to learning NE in continuous-action games. While analyzing the agents' decisions directly on the discrete space of matchings seems complicated, which is why some earlier works need to restrict attention to special market structures \cite{maheshwari2022decentralized}, our game-theoretic reduction simplifies this task to some extent, thanks to the extensive literature on NE learning in continuous-action games.

\subsection{Stable Matching Game}
Consider a complete information noncooperative game in which the action set of each man (player) is given by the set of women $W$, and the set of (mixed) strategies for each player is given by the probability simplex over the set of women $W$, i.e., the strategy set of player $m$ is defined by $$\mathcal{X}_m=\Big\{x_m\in \mathbb{R}_{+}^{|W|}: \sum_{w\in W}x_{mw}= 1 \Big\}.$$ 
Let $x_{-m}$ denote the strategy vector of all the players except the $m$th one. Given a strategy profile $x=(x_m,x_{-m})$, we define the payoff of player $m$ by
\begin{align}\label{eq:game-payoff}
u_m(x)=\sum_{w\in W}\Big(\mu_{mw}\prod_{k>_w m}(1-x_{kw})\Big)x_{mw},
\end{align}
where $\mu_{mw}>0$ is the utility (true preference) received by man $m$ if he gets matched to women $w$.

\begin{definition}
A strategy $x_m$ for player $m$ is called \emph{pure} if it is zero in all coordinates except one. Otherwise, it is called a mixed strategy. A pure (mixed) strategy profile $x^*$ is called a pure (mixed) NE if for any player $m$ and any pure (mixed) strategy $x_m$, we have $u_m(x_m,x^*_{-m})\leq u_m(x^*_m,x^*_{-m})$. 
\end{definition}

\begin{remark}\label{rem:bipartitie}
A strategy $x$ can be represented as a weighted bipartite graph $\mathcal{H}(x)=(M\cup W, E(x))$ between men and women. In this graph, the edges $E(x)=\{(m,w): x_{mw}>0\}$ are supported over the nonzero entries of $x$, with the weight of edge $(m,w)$ given by $x_{mw}$.
\end{remark}

One can interpret $u_m(x)$ as the expected utility that player $m$ would receive by successfully getting matched if each man independently proposes to women according to his mixed strategy distribution $x_m$. The reason for defining the payoff functions as in \eqref{eq:game-payoff} is that we want our devised algorithms to be implemented in a fully decentralized and uncoordinated manner among men by relying only on their received feedback (embedded into their payoff functions). While such payoff functions are highly nonlinear, they are essential to eliminate any degrees of coordination among the players. In other words, the cost of devising a fully coordination-free algorithm comes in analyzing more complex payoff functions with higher degrees of nonlinearity.\footnote{We refer to Appendix \ref{appx:omitted} for a simpler form of payoff functions with less degree of nonlinearities, which in turn requires stronger information feedback.} The following are three properties of the payoff functions given in \eqref{eq:game-payoff}. 
\begin{itemize}
\item[(i)] Player $m$'s strategy does not have any impact on the coefficient terms $\mu_{mw}\prod_{k>_w m}(1-x_{kw})$, $ w\in W$. In particular, the gradient of $u_m(\cdot)$ with respect $x_{mw}$ is given by $$v_{mw}(x):=\nabla_{mw}u_{m}(x)=\mu_{mw}\prod_{k>_w m}(1-x_{kw}).$$ 
\item[(ii)] The payoff of each player $m$ is linear with respect to his own strategy $x_m$.
\item[(iii)] For any fixed strategy of other players $x_{-m}$, player $m$ always has a best response among pure strategies that is obtained by setting $x_{mw}=1$ for the woman $w$ that achieves the maximum value $v_{mw}(x)$, and $x_{mw}=0$ otherwise.    
\end{itemize}

In the remainder of this paper, we will refer to the above noncooperative game as the \emph{stable matching game} and denote it by $\mathcal{G}=(M, \{u_m\}_{m\in M}, \{\mathcal{X}_m\}_{m\in M})$. It is important to note that although our interest is in obtaining a NE of the stable matching game, which is a \emph{complete information} one-shot game, our goal is to learn such a NE by repeatedly playing the \emph{incomplete information} game, where the true preferences $\mu_{mw}$ (and hence the payoff functions) are not fully known. 

\subsection{Nash Equilibrium Characterization}
In this part, we show that the set of pure and mixed NE points of the stable matching game has interesting connections with the set of stable matchings when preferences are known. The following theorem establishes one such result. 
 
\smallskip 
\begin{theorem}\label{thm:NE-Stable}
A pure strategy profile $x^*\in \{0,1\}^{n^2}$ corresponds to the characteristic vector of a stable matching if and only if it is a pure NE for the stable matching game.
\end{theorem}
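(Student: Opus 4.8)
The plan is to prove both implications through the best-response characterization in property (iii): because each $u_m$ is linear in $x_m$, a player's optimal deviation is always a pure strategy that selects the woman $w$ maximizing the marginal value $v_{mw}(x)=\mu_{mw}\prod_{k>_w m}(1-x_{kw})$. Hence verifying the NE condition reduces to comparing, for each man, the quantities $v_{mw}(x^*)$ across all women against the value of his current pure choice, and the whole argument rests on reading off these products.

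First I would establish the forward direction (stable matching $\Rightarrow$ pure NE). If $x^*$ encodes a stable matching, each man $m$ is matched to a unique woman $w(m)$, and since the matching is perfect no one else proposes to $w(m)$, so $u_m(x^*)=\mu_{m,w(m)}$. Consider a deviation of $m$ to another woman $w'$, matched to some $m'$. If $m'>_{w'}m$, then the factor $(1-x^*_{m',w'})=0$ kills the product and $v_{m,w'}(x^*)=0$. If instead $m>_{w'}m'$, then no man ranked above $m$ by $w'$ proposes to $w'$, so $v_{m,w'}(x^*)=\mu_{m,w'}$; stability forbids $(m,w')$ from being a blocking pair, which forces $\mu_{m,w'}\le\mu_{m,w(m)}$. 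In either case the deviation does not help, so $x^*$ is a pure NE.

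The reverse direction splits into two steps, and the first is the crux. I would show that a pure NE must be a perfect matching. Viewing $x^*$ as a map $M\to W$, if it is not a bijection then it is not injective, so by a pigeonhole count (recall $|M|=|W|=n$) some woman $w$ receives at least two proposals while some woman $w_0$ receives none. A non-top proposer $m$ at $w$ earns payoff $0$, since the factor $(1-x^*_{k,w})$ for the more-preferred proposer $k$ vanishes. This man can deviate to $w_0$, where the empty product gives $v_{m,w_0}(x^*)=\mu_{m,w_0}\ge\mu_{\min}>0$, contradicting the NE property. Hence $x^*$ is a perfect matching. The second step then mirrors the forward argument: if a blocking pair $(m,w')$ existed, then $w'$'s current partner would rank below $m$, so no higher-ranked man proposes to $w'$ and $v_{m,w'}(x^*)=\mu_{m,w'}>\mu_{m,w(m)}=u_m(x^*)$, again a profitable deviation. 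Therefore the NE is stable.

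The main obstacle is the no-collision step, where one must combine two observations that a collision forces simultaneously: it creates a man with zero payoff, and by the equal-cardinality counting argument it guarantees an unmatched woman. The strict positivity $\mu_{\min}>0$ is exactly what makes the resulting deviation strictly improving. Once this is in place, the remaining implications follow cleanly from the gradient formula and the linearity of each payoff in the player's own strategy.
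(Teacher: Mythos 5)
Your proof is correct and follows essentially the same route as the paper's: the collision argument (a lower-ranked proposer earns zero and can deviate to a woman with no offers, whose existence follows from the equal-cardinality count) to establish that a pure NE is a perfect matching, and the product-formula computation showing that a blocking pair is exactly a strictly profitable pure deviation. The only cosmetic difference is that you verify the stable-matching-implies-NE direction directly by checking all deviations rather than by contradiction, which changes nothing substantive.
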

\begin{proof}
Suppose $x^*$ is a pure NE. Then, we cannot have $x^*_{mw}=x^*_{m'w}=1$ for some woman $w$ and $m\neq m'$. Otherwise, between $m$ and $m'$, the one that woman $w$ prefers less receives zero utility, while he can always deviate to propose to a woman $w'$ that does not have any offer and strictly improve his utility. Therefore, $x^*$ must correspond to the characteristic vector of a perfect matching. To show that $x^*$ is a stable matching, let us, by contradiction, assume that there exists a blocking pair with respect to the matching $x^*$ and denote it by $(m,w)$. Moreover, let $m'$ and $w'$ be the corresponding matches for $w$ and $m$ under $x^*$, that is $x^*_{mw'}=1$ and $x^*_{m'w}=1$. Since $(m,w)$ is a blocking pair, we have $\mu_{mw}>\mu_{mw'}$ and $m>_w m'$. The payoff of player $m$ under NE $x^*$ equals $u_m(x^*)=\mu_{mw'}$. Now if player $m$ changes its strategy to $x_m:=(\boldsymbol{0},x_{mw}=1)$, his payoff becomes 
\begin{align}\nonumber
u_m(x_m,x^*_{-m})=\sum_{j\in W}\mu_{mj}\prod_{k>_j m}(1-x^*_{kj})x_{mj}=\mu_{mw}\prod_{k>_w m}(1-x^*_{kj})=\mu_{mw},
\end{align}
where the second equality holds because $m'<_{w}m$ so that the term $1-x^*_{m'w}$ does not appear in the product $\prod_{k>_w m}(1-x^*_{kw})$, and moreover $x^*_{kw}=0\ \forall k\neq m'$, which implies $\prod_{k>_w m}(1-x^*_{kw})=1$. This shows that $u_m(x_m,x^*_{-m})=\mu_{mw}>\mu_{mw'}=u_m(x^*)$, contradicting the fact that $x^*$ is a NE.  

Conversely, let $x^*$ be the characteristic vector of a stable matching, and by contradiction, assume it is not a NE. Then, there exists at least one player $m$ who can deviate and strictly improve his payoff. In particular, we can consider his pure strategy best response move, which is guaranteed by (iii), and denote his most preferred woman by $w$. Moreover, let $m'$ and $w'$ be the corresponding matches for $w$ and $m$ under the stable matching $x^*$, i.e., $x^*_{m'w}=x^*_{mw'}=1$. The current payoff of $m$ under the pure strategy $x^*$ equals $u_m(x^*)=\mu_{mw'}$, while his payoff after such deviation becomes
\begin{align}\label{eq:NE-converse}
u_m(x_m,x^*_{-m})=\sum_{j\in W}\mu_{mj}\prod_{k>_j m}(1-x^*_{kj})x_{mj}=\mu_{mw}\prod_{k>_w m}(1-x^*_{kw})\leq \mu_{mw}.
\end{align} 
Since $0< \mu_{mw'}=u_m(x^*)<u_m(x_m,x^*_{-m})$, we obtain $\mu_{mw'}< \mu_{mw}$. Moreover, from relation \eqref{eq:NE-converse} and the fact that $u_m(x_m,x^*_{-m})>0$, we must have $\prod_{k>_w m}(1-x^*_{kw})=1$, which implies $m'<_w m$. This together with $\mu_{mw'}< \mu_{mw}$ shows that $(m,w)$ is a blocking pair, contradicting that $x^*$ is stable.
\end{proof}

While the above equilibrium characterization theorem concerns pure NE points, it is possible that the stable matching game admits a mixed NE. The following is one example.

\smallskip 
\begin{example}\label{ex:mixed}
Let us consider an instance of the stable matching game with three men $M=\{m_1,m_2,m_3\}$ and three women $W=\{w_1,w_2,w_3\}$. We assume the preferences of men and women are given by
\begin{align}\nonumber 
&\mu_{m_1w_1}=2,\ \mu_{m_1w_2}=1,\ \mu_{m_1w_3}=\epsilon, \ \ \ \ \ \ \ \ \ \ \ m_3<_{w_1} m_1<_{w_1} m_2,\cr
&\mu_{m_2w_1}=3,\ \mu_{m_2w_2}=4,\ \mu_{m_2w_3}=5, \ \ \ \ \ \ \ \ \ \ \ m_3<_{w_2} m_2<_{w_2} m_1,\cr
&\mu_{m_3w_1}=\epsilon,\ \mu_{m_3w_2}=12,\ \mu_{m_3w_3}=6,\ \ \ \ \ \ \ \ \ \ m_1<_{w_3} m_2<_{w_3} m_3,
\end{align}
where $\epsilon>0$ is a small positive constant. Then, the stable matching game has a mixed NE that is supported over seven edges comprising of two cycles of length 4 that share the common edge $(m_2,w_2)$ (see, Remark \ref{rem:bipartitie}). In particular, the seven positive entries of that mixed NE are given by 
\begin{align}\nonumber
&x_{m_1w_1}=\frac{\mu_{m_2w_1}}{\mu_{m_2w_2}},\ \ \ \ \ \ \ \ \ \ \ \ \ \ x_{m_1w_2}=1-\frac{\mu_{m_2w_1}}{\mu_{m_2w_2}},\cr
&x_{m_2w_1}=1-\frac{\mu_{m_1w_2}}{\mu_{m_1w_1}}, \ \ \ \ \ \ \ \ \ x_{m_2w_2}=1-\frac{\mu_{m_2w_2}\mu_{m_3w_3}}{\mu_{m_2w_1}\mu_{m_3w_2}}, \ \ \ \ \ \ x_{m_2w_3}=\frac{\mu_{m_1w_2}}{\mu_{m_1w_1}}+\frac{\mu_{m_2w_2}\mu_{m_3w_3}}{\mu_{m_2w_1}\mu_{m_3w_2}}-1,\cr
&x_{m_3w_2}=\frac{\mu_{m_2w_1}}{\mu_{m_2w_3}}, \ \ \ \ \ \ \ \ \ \ \ \ \ \ x_{m_3w_3}=1-\frac{\mu_{m_2w_1}}{\mu_{m_2w_3}}.
\end{align}   
\end{example}

\medskip
As Example \ref{ex:mixed} suggests, mixed NE points may correspond to unpredictable bipartite graphs (Remark \ref{rem:bipartitie}), making it difficult to make general statements about their structure. However, the following theorem establishes an interesting connection between mixed and pure NE points, implying that obtaining a mixed NE with full support on the women's side is as satisfactory as obtaining a pure NE. 

\smallskip
\begin{theorem}\label{thm:mixed-pure}
Let $x$ be any mixed NE in which each woman receives at least one (fractional) proposal. For each man $m$, let $w_m$ be the least preferred woman among those that he fractionally proposes to them, i.e., $w_m=\argmin_{w\in W} \{\mu_{mw}: x_{mw}>0\}$. Then $\{(m,w_m), m\in M\}$ forms a stable matching (and hence a pure NE) for the stable matching game. 
\end{theorem}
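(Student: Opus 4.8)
The plan is to exploit the linearity of each payoff $u_m$ in the player's own strategy. By property (i) of the payoff functions, the gradient coefficient $v_{mw}(x)=\mu_{mw}\prod_{k>_w m}(1-x_{kw})$ does not depend on $x_m$, so $u_m(\cdot,x_{-m})$ is a linear function on the simplex $\mathcal{X}_m$. Consequently the first-order NE condition forces each player to place positive weight only on women achieving the maximal coefficient: for every $m$ there is a common value $c_m:=\max_{w}v_{mw}(x)$ with $v_{mw}(x)=c_m$ whenever $x_{mw}>0$ and $v_{mw}(x)\le c_m$ otherwise. I would record this ``support equalization'' property first, since everything else follows from it.

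The crux is a structural observation about the \emph{most preferred} man in each woman's support. Fix a woman $w$ and, using the hypothesis that $w$ receives at least one fractional proposal, let $m^*(w)$ be the man most preferred by $w$ among $\{m:x_{mw}>0\}$. Every man $k$ with $k>_w m^*(w)$ lies outside $w$'s support, so each factor $1-x_{kw}$ in the product equals $1$; hence $v_{m^*(w),w}(x)=\mu_{m^*(w),w}$, and this must equal $c_{m^*(w)}$. Comparing with any other woman $w'$ in $m^*(w)$'s support, where $\mu_{m^*(w),w'}\prod_{k>_{w'}m^*(w)}(1-x_{kw'})=c_{m^*(w)}=\mu_{m^*(w),w}$ with the product at most $1$, and invoking the no-ties assumption, I would conclude $\mu_{m^*(w),w'}>\mu_{m^*(w),w}$; that is, $w$ is precisely the least preferred woman in $m^*(w)$'s support. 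In the notation of the theorem this reads $w_{m^*(w)}=w$.

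This immediately yields the perfect-matching claim: the map $m\mapsto w_m$ is a well-defined function $M\to W$ (each $x_m$ is a distribution, hence has nonempty support), and the previous paragraph shows every $w$ is attained as $w_{m^*(w)}$, so the map is surjective; a surjection between finite sets of equal size $n$ is a bijection, so $\{(m,w_m)\}$ is a perfect matching. For stability I would argue by contradiction: if $(m,w)$ were a blocking pair, then $w$'s partner under the matching must be $m^*(w)$, and the blocking requirement $m>_w m^*(w)$ places $m$ (and every $k>_w m$) outside $w$'s support, collapsing the product to $1$ so that $v_{mw}(x)=\mu_{mw}$. The equalization property then gives $\mu_{mw}=v_{mw}(x)\le c_m=v_{m,w_m}(x)\le\mu_{m,w_m}$, contradicting the defining inequality $\mu_{mw}>\mu_{m,w_m}$ of a blocking pair. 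Stability together with Theorem~\ref{thm:NE-Stable} delivers the pure-NE conclusion.

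I expect the main obstacle to be the second paragraph: correctly observing that the acceptance-probability product equals exactly $1$ for the top man in a woman's support, and then using the no-ties assumption to pin down $w$ as his least preferred matched woman. Once this ``top-man'' characterization is in hand, both the surjectivity/counting argument for the perfect matching and the product-collapse in the stability step are short; the only real risk is mishandling the indexing of the product $\prod_{k>_w m}$ (precisely which men appear) rather than any deeper difficulty.
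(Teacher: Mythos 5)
Your proof is correct and rests on the same two pillars as the paper's: the support-equalization property of a mixed NE with linear own-payoffs, and the observation that the acceptance product collapses to $1$ for the top-ranked man $m^*(w)$ in each woman's support. The organization differs in two ways worth noting. First, where the paper proves an injectivity lemma (no man can be the top choice of two distinct women, via no-ties) and then separately verifies $\mathcal{M}_1=\mathcal{M}_2$ using that lemma, you go straight to $w_{m^*(w)}=w$ by applying no-ties to the identity $\mu_{m^*(w),w'}\prod_{k>_{w'}m^*(w)}(1-x_{kw'})=\mu_{m^*(w),w}$ — equality of the two means is impossible regardless of the value of the product — and then obtain the perfect matching from surjectivity of $m\mapsto w_m$ rather than injectivity of $w\mapsto m_w$; this skips one intermediate step and is slightly more self-contained. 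Second, the paper establishes the pure-NE property of $\hat{x}$ directly by transferring any profitable deviation in $\hat{x}$ back to a profitable deviation in $x$, whereas you prove stability directly via the blocking-pair contradiction (using that $m>_w m^*(w)$ forces $v_{mw}(x)=\mu_{mw}\le c_m\le\mu_{mw_m}$) and then invoke Theorem~\ref{thm:NE-Stable} for the NE conclusion; given that Theorem~\ref{thm:NE-Stable} is already available, this is a legitimate and marginally cleaner closing step. Both routes are sound; yours trades the paper's explicit injectivity lemma for a counting argument and leans on the earlier equivalence theorem rather than reproving it.
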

\begin{proof}
Given a mixed NE $x$, let $\mathcal{H}(x)=(M\cup W, E(x))$ be the induced bipartite graph by $x$, where $E(x)=\{(m,w): x_{mw}>0\}$ denotes all the edges with fractionally positive proposals. Moreover, let us denote the neighboring nodes of $m$ and $w$ in $\mathcal{H}(x)$ by $N_m(x)$ and $N_w(x)$, respectively. Since $x$ is a NE, the value $v_{mw}(x)=\mu_{mw}\prod_{k>_w m}(1-x_{kw})$ should be the same for all $w\in N_m(x)$ and equals to $u_m(x)$. Otherwise, using properties (i) and (ii), player $m$ could strictly increase his payoff by putting his entire strategy mass on the woman $w^*=\argmax_{w\in N_m(x)}v_{mw}(x)$. The fact that $u_m(x)=v_{mw}(x)\ \forall w\in N_m(x)$ follows because $x_m$ is a probability distribution that is supported over $N_m(x)$.

Next, we note that a man $m$ cannot be the most preferred choice for more than one woman in $\mathcal{H}(x)$. Otherwise, let $w_1,w_2\in N_m(x)$ be two different women who rank $m$ the highest among their fractional matches in $N_{w_1}(x)$ and $N_{w_2}(x)$, respectively. Then, we have $$\prod_{k>_{w_1} m}(1-x_{kw_1})= \prod_{k>_{w_2} m}(1-x_{kw_2})=1,$$ which together with the above argument implies that
\begin{align}\nonumber
\mu_{mw_1}=\mu_{mw_1}\prod_{k>_{w_1} m}(1-x_{kw_1})=v_{mw_1}(x)=u_m(x)=v_{mw_2}(x)=\mu_{mw_2}\prod_{k>_{w_2} m}(1-x_{kw_2})=\mu_{mw_2}.
\end{align}  
This contradicts the fact that there are no ties between the preferences of man $m$.    

Since each woman receives at least one fractional proposal, $N_w(x)\neq \emptyset \ \forall w$. Let $m_w$ be the unique most favorite man of $w$ in $N_w(x)$. Using the above argument, the elements of $\{m_w, w\in W\}$ must be distinct such that $\mathcal{M}_1=\{(m_w,w), w\in W\}$ forms a matching. Viewing this matching from men's side, one can represent $\mathcal{M}_1$ in the form of $\mathcal{M}_2=\{(m,w_m), m\in M\}$, where $w_m=\argmin_{w\in N_m(x)} \mu_{mw}$ is the least preferred woman among those that man $m$ fractionally proposed to. To see this equivalence, fix a man $m$ and let $w_0$ be his match in $M_1$, i.e., $m_{w_0}=m$. Observe that each $w'\in N_m(x)\setminus \{w_0\}$ has a unique distinct most favorite man $m_{w'}\neq m$. That means that 
\begin{align}\nonumber
v_{mw'}(x)=\mu_{mw'}\prod_{k>_{w'} m}(1-x_{kw'})\leq \mu_{mw'}(1-x_{m_{w'}w'})<\mu_{mw'}. 
\end{align}
On the other hand, since $w_0$ ranks $m$ the highest among those who propose to her, we have $v_{mw_0}(x)=\mu_{mw_0}\prod_{k>_{w_0} m}(1-x_{kw'})=\mu_{mw_0}$. Since at a NE $v_{mw_0}(x)=v_{mw'}(x)$, we must have $\mu_{mw_0}<\mu_{mw'}, \forall w'\in N_m(x)\setminus \{w_0\}$, which implies $w_0=w_m$. Thus, for any pair $(m_{w_0},w_0)\in \mathcal{M}_1$ there is a pair $(m,w_m)\in M_2$ such that $(m_{w_0},w_0)=(m,w_m)$, which shows that $\mathcal{M}_1=\mathcal{M}_2$. 

Finally, we will show that the binary characteristic vector of the matching $\{(m,w_m), m\in M\}$, denoted by $\hat{x}$, is indeed a pure NE. We will show that by arguing that any strictly better deviation in $\hat{x}$ is also a strictly better deviation in $x$, contradicting the fact that $x$ is a NE. On the contrary, let $m_w$ be a man who wants to deviate from his pair $w$ to a woman $w'\neq w$. Such a deviation is beneficial to player $m_{w}$ only if $m_{w'}<_{w'} m_w$. Otherwise, his payoff will reduce from $u_{m_w}(\hat{x})=\mu_{m_ww}$ to $0$. Henceforth, we may assume $m_{w'}<_{w'} m_w$. The fact that the payoff of player $m_w$ is currently $u_{m_w}(\hat{x})=\mu_{m_ww}$ and he wants to deviate to $w'$ implies that $\mu_{m_ww}<\mu_{m_ww'}$. As $m_w$ is the most preferred man for woman $w$ in $\mathcal{H}(x)$, we also have $u_{m_w}(x)=\mu_{m_ww}$. Since $m_{w'}<_{w'} m_w$, the woman $w'$ would prefer $m_w$ over all the men in $N_w'(x)$. Thus, in the mixed NE $x$, man $m_w$ could also deviate by solely proposing to $w'$, in which case his payoff would strictly increase from $u_{m_w}(x)=\mu_{m_w,w}$ to $\mu_{m_ww'}$, which is a contradiction.
\end{proof}

\begin{remark}
As can be seen from the proof of Theorem \ref{thm:mixed-pure}, if the condition that each woman receives at least one proposal does not hold, the theorem still applies to the induced submarket in which each woman has received at least one fractional proposal.
\end{remark}

Theorem \ref{thm:mixed-pure} suggests some strategies for obtaining decentralized solutions. We can first design a decentralized algorithm (see Algorithm \ref{alg:main} in conjunction with the preference estimation oracle of Algorithm \ref{alg:gap-estimation}) to learn a mixed NE of the stable matching game and acquire estimated preferences. Subsequently, each man can convert his fractional mixed strategy into a pure strategy in a decentralized manner by proposing only to his least (estimated) preferred woman among those he fractionally proposes to. This process ensures that the resulting pure strategy profile will be a pure NE with high probability, and the probability of error vanishes over time. Importantly, each man has complete access to his own estimated preferences, and the rounding process outlined in Theorem \ref{thm:mixed-pure} does not require knowledge of others' mixed strategies. Each man only needs to be aware of his own mixed strategy and estimated preferences. This approach offers a simple decentralized method to convert mixed NE points into stable matchings. 

\section{Logarithmic Regret for Hierarchical Markets}\label{sec:hierarchical}

In this section, we consider the stable matching game with unknown preferences and develop a decentralized and uncoordinated algorithm for learning its NE points. In particular, we focus on matching markets with specific structures because without imposing any assumption on the matching market, there are exponential lower bounds for the number of iterations to find a stable matching through better response dynamics \cite{ackermann2008uncoordinated}. One example of such a structured matching market is the \emph{hierarchical} matching market, where it was shown in \cite{maheshwari2022decentralized} that if men follow a certain decentralized (but coordinated) algorithm, the men's expected regret (see Definition \ref{def:regret}) is at most logarithmic in time but exponential in terms of other parameters such as the number of men $n$. In this section, we show that a much simpler algorithm (Algorithm \ref{alg:main}) also achieves logarithmic regret in a fully uncoordinated fashion, hence, removing the phase-dependent coordination required by the algorithm in \cite{maheshwari2022decentralized}. In particular, our analysis provides a cleaner characterization of the other constants involved in the regret bound. Throughout this section, we impose the following hierarchical assumption on the matching market.

\begin{assumption}\label{ass:hirarchy}
We assume that the sets of men and women can each be ordered as $M=\{m_1,\ldots,m_n\}$ and $W=\{w_1,\ldots,w_n\}$ such that any man and woman with the same rank prefer each other above any other partner with a lower rank; i.e., man $m_k$ prefers woman $w_k$ to women $w_{k+1}, w_{k+2},\ldots,w_{n}$, and woman $w_k$
prefers man $m_k$ to men $m_{k+1}, m_{k+2},\ldots,m_n$. It is easy to see that under this hierarchical assumption, the matching $\{(m_k,w_k), k=1,\ldots,n\}$ is a stable matching. 
\end{assumption} 

\begin{remark}
The condition imposed in Assumption \ref{ass:hirarchy} is also known as the \emph{Sequential Preference Condition} (SPC) \cite{clark2006uniqueness} and is weaker than the $\alpha$-reducible condition considered in \cite{maheshwari2022decentralized}, which assumes that each submarket has a \emph{fixed-pair}: A pair $(m,w)$ is called a fixed pair if both $m$ and $w$ prefer each other the most among anyone else in that submarket. In fact, one can show that a matching market has a unique stable matching if and only if it satisfies $\alpha$-reducible condition \cite{maheshwari2022decentralized}. Since Assumption \ref{ass:hirarchy} is weaker than the $\alpha$-reducible condition, all our results immediately applies to $\alpha$-reducible markets. 
\end{remark}

\begin{definition}\label{def:regret}
Let us assume that players follow a decentralized and uncoordinated algorithm $\mathcal{A}$ that results in a sequence of proposals $\alpha^t=(\alpha^t_m, m\in M), t=1,\ldots,T$. The \emph{expected regret} of algorithm $\mathcal{A}$ is defined by
\begin{align}\nonumber
R(T)=\sum_{t=1}^T \mathbb{E}\big[\boldsymbol{1}_{\{\exists k\in [n]: \alpha^{t}_{m_k}\neq w_k\}}\big],  
\end{align}
where $\boldsymbol{1}_{\{\cdot\}}$ denotes the indicator function. In other words, $R(T)$ is the expected number of times that the men's proposals do not form a stable matching. 
\end{definition}

\subsection{Algorithm Design and Preliminaries}

The algorithm that we propose is an adaptation of the online dual mirror descent for adversarial multi-arm bandit problem (see, e.g., EXP3 Algorithm in \cite[Chapter 11]{lattimore2020bandit}) to the stable matching game, where for simplicity we refer to it as the EXP algorithm. In particular, it uses an entropy regularizer and adaptive stepsize for updating the players' mixed strategy distributions over time.

To describe the EXP algorithm, let us denote the gradient vector of player $m$ by $v_m(x)=\nabla_m u_m(x)$ such that for any woman $w$, we have $v_{mw}(x)=\mu_{mw}\prod_{k>_w m}(1-x_{kw})$. In particular, we note that $u_m(x)=\langle x_m, v_m(x)\rangle$. Therefore, when restricting to a pure strategy (action) profile $\alpha^t=(\alpha_1^t,\ldots,\alpha^t_n)\in W\times\cdots\times W$ that is played at time $t$, the $w$-th coordinate of the gradient vector is given by
\begin{align}\label{eq:v_alpha}
v_{mw}^t:=\mu_{mw}\boldsymbol{1}_{\{\alpha^t_k\neq w\ \forall k>_{w} m\}},
\end{align}
where $\boldsymbol{1}_{\{\cdot\}}$ denotes the indicator function. On the other hand, by playing a pure strategy profile $\alpha^t$, each man $m$ receives a feedback only for the woman $\alpha^t_m$ that he has proposed to her at time $t$. In particular, if $\alpha_m^t=w$, player $m$ observes the payoff $\mu^t_{mw}\boldsymbol{1}_{\{\alpha^t_k\neq w \ \forall k>_w m\}}\boldsymbol{1}_{\{\alpha_m^t=w\}}$ at time $t$, where we recall that $\mu^t_{mw}$ is the realization of a random reward drawn independently from an unknown distribution $\mathcal{D}_{mw}$ with unknown mean $\mu_{mw}$ and bounded variance $\sigma^2_{mw}\in [0, 1]$.

During the course of the algorithm, at each time $t$, each player $m$ holds a mixed strategy $X^t_m\in \mathcal{X}_m$ and updates it whenever he receives new information. He then computes his adjusted mixed strategy $\hat{X}^{t}_{m}=(1-\gamma^t)X^t_{m}+\frac{\gamma^t}{n}\boldsymbol{1}$, where $\gamma^t>0$ is a mixing parameter and $\boldsymbol{1}$ is the vector of all ones. Using ideas from importance sampling for bandit problems \cite{bubeck2012regret,giannou2021rate}, and to obtain an unbiased estimator of the actual gradient vector, player $m$ constructs an estimate vector $\hat{v}^t_m$ by normalizing the received feedback with his adjusted mixed strategy probabilities as
\begin{align}\label{eq:noise-feedback}
\hat{v}^{t}_{mw}=\frac{\mu^t_{mw}\boldsymbol{1}_{\{\alpha^t_k\neq w\ \forall k>_{w} m\}}}{\hat{X}^t_{mw}}\cdot\boldsymbol{1}_{\{\alpha_m^t=w\}}, \ \ \ \ w\in W.
\end{align}
The reason why we use adjusted mixed strategies $\hat{X}^t$ with additional mixing parameter $\gamma^t$ is to ensure that the mixed strategies remain bounded away from zero by a positive quantity. That makes the estimators \eqref{eq:noise-feedback} have bounded variance (see Lemma \ref{lemm:unbiased-martingale}), which allows us to use martingale concentration results to bound the probability of various events.

Let us denote the (random) score vector of player $m$ at time $t-1$ by $\hat{L}^{t-1}_m=(\hat{L}_{mw},w\in W)$. Player $m$ uses $\hat{L}^{t-1}_m$ to compute his  mixed strategy vector $X_m^t$ using the logit update rule \eqref{eq:logit}. He then proposes to a woman $\alpha_{m}^{t}$ chosen independently at random according to his adjusted mixed strategy $\hat{X}_m^t$, and receives as feedback $\hat{v}_{m}^{t}$ with entries given by \eqref{eq:noise-feedback}. Player $m$ then updates his score vector by $\hat{L}^t_{m}=\hat{L}^{t-1}_{m}+\hat{v}^t_{m}$ and proceeds to the next round. The detailed description of the EXP algorithm is summarized in Algorithm \ref{alg:main}.

\begin{algorithm}[t]\caption{EXP: An Uncoordinated and Decentralized Algorithm for Player $m$}\label{alg:main}
{\bf Input:} A decreasing stepsize sequence $\{\eta^t\}$, a mixing sequence $\{\gamma^t\}$, and an initial vector $\hat{L}^0_m=\boldsymbol{0}$.
\noindent
{\bf For} $t=1,2,\ldots$, player $m$ independently performs the following steps:
\begin{itemize}
\item Player $m$ computes his mixed-strategy vector $X_m^t$ using the logit map
\begin{align}\label{eq:logit}
X^{t}_{mw}=\frac{\exp(\eta^t \hat{L}^{t-1}_{mw})}{\sum_{w'}\exp(\eta^t \hat{L}^{t-1}_{mw'})}, \ \ w\in W.
\end{align}
\item Player $m$ draws a pure strategy $\alpha^{t}_m\in W$ according to his adjusted mixed strategy defined by
\begin{align}\nonumber
\hat{X}^{t}_{m}=(1-\gamma^t)X^t_{m}+\frac{\gamma^t}{n}\boldsymbol{1},\ \ w\in W,
\end{align}
where $\gamma^t>0$ is a mixing parameter and $\eta^t>0$ is the stepsize. 
\item Player $m$ receives a feedback in terms of his observed payoff at time $t$ and constructs an unbiased estimate of his actual payoff gradient vector:
\begin{align}\label{eq:estimators}
&\hat{v}^{t}_{mw}=\frac{\mu^t_{mw}\boldsymbol{1}_{\{\alpha^t_k\neq w\ \forall k>_{w} m\}}}{\hat{X}^t_{mw}}\cdot \boldsymbol{1}_{\{\alpha_m^t=w\}},\ \ w\in W,
\end{align}
where $\mu^t_{mw}$ is the realization of the random reward if $m$ is matched to $w$ at time $t$, and updates
\begin{align}\nonumber
\hat{L}^t_{mw}=\hat{L}^{t-1}_{mw}+\hat{v}^t_{mw},\ \ w\in W.
\end{align}
\end{itemize}
\end{algorithm}

In order to analyze the convergence behavior of Algorithm \ref{alg:main} to a stable matching, let $\{\mathcal{F}^{t-1}\}_{t=1}^{\infty}$ be the increasing filtration sequence that is adapted to the history of the random processes generated by Algorithm \ref{alg:main}. More precisely, $$\mathcal{F}^{t-1}=\{X^{\tau},\hat{L}^{\tau},\alpha^{\tau},\hat{v}^{\tau}, \tau=0,1,\ldots,t-1\}\cup\{X^t\},$$ contains all the realized events up to time $t-1$ except the realization of pure strategies $\alpha^t$ and the rewards $\mu^t$ at time $t$. In particular, all relevant processes at time $t-1$ as well as $X^t$ and $\hat{X}^t$ are $\mathcal{F}^{t-1}$-measurable, but $\alpha^{t}$ and $\hat{v}^{t}$ are not $\mathcal{F}^{t-1}$-measurable. 

We begin with the following lemma, which shows that the feedback $\hat{v}^{t}_m$ received by player $m$ at time $t$ is conditionally an unbiased and bounded estimator of the actual gradient vector $v_m^t$. The proof of this lemma follows from standard analysis of the multi-arm bandit setting \cite{bubeck2012regret}. However, for the sake of completeness, we provide a short proof here.

\begin{lemma}\label{lemm:unbiased-martingale}
The received feedback is a conditionally unbiassed and bounded estimate of the gradients: 
\begin{align}\nonumber
\mathbb{E}[\hat{v}_{m}^{t}|\mathcal{F}^{t-1}]=v_m(\hat{X}^{t})\ \ \forall m, \ \ \ \ \ \mathbb{E}[(\hat{v}_{mw}^{t})^2|\mathcal{F}^{t-1}]\leq \frac{\sigma^2}{\hat{X}^t_{mw}}\leq \frac{n}{\gamma^t}\ \ \forall m,w,
\end{align}
where $\sigma^2:=\max_{m,w}\mathbb{E}[(\hat{\mu}^t_{mw})^2]\in [0, 1]$.
\end{lemma}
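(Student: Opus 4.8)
The plan is to compute the conditional expectation and the conditional second moment directly, exploiting two independence facts that hold given $\mathcal{F}^{t-1}$: (a) the reward realization $\mu^t_{mw}$ is drawn from $\mathcal{D}_{mw}$ independently of the action profile $\alpha^t$, with $\mathbb{E}[\mu^t_{mw}\mid\mathcal{F}^{t-1}]=\mu_{mw}$ and $\mathbb{E}[(\mu^t_{mw})^2\mid\mathcal{F}^{t-1}]\le\sigma^2$; and (b) the players draw their proposals independently, so the indicator $\boldsymbol{1}_{\{\alpha^t_m=w\}}$ (depending only on player $m$) is conditionally independent of $\boldsymbol{1}_{\{\alpha^t_k\neq w\ \forall k>_w m\}}$ (depending only on the strictly higher-ranked players $k>_w m$, none of which is $m$ itself). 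Since $X^t$ and hence $\hat{X}^t$ are $\mathcal{F}^{t-1}$-measurable, the normalizer $\hat{X}^t_{mw}$ may be treated as a constant throughout.

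For the unbiasedness, I would start from the definition \eqref{eq:estimators}, pull out the constant $1/\hat{X}^t_{mw}$ and the factor $\mathbb{E}[\mu^t_{mw}\mid\mathcal{F}^{t-1}]=\mu_{mw}$ using (a), and then factor the remaining conditional expectation of the product of indicators using (b):
\begin{align}\nonumber
\mathbb{E}\big[\boldsymbol{1}_{\{\alpha^t_k\neq w\ \forall k>_w m\}}\boldsymbol{1}_{\{\alpha^t_m=w\}}\,\big|\,\mathcal{F}^{t-1}\big]=\hat{X}^t_{mw}\prod_{k>_w m}\big(1-\hat{X}^t_{kw}\big),
\end{align}
where I use $\mathbb{P}(\alpha^t_m=w\mid\mathcal{F}^{t-1})=\hat{X}^t_{mw}$ and $\mathbb{P}(\alpha^t_k\neq w\mid\mathcal{F}^{t-1})=1-\hat{X}^t_{kw}$. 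The factor $\hat{X}^t_{mw}$ cancels the normalizer, leaving exactly $\mu_{mw}\prod_{k>_w m}(1-\hat{X}^t_{kw})=v_{mw}(\hat{X}^t)$, which is the claimed identity $\mathbb{E}[\hat{v}^t_m\mid\mathcal{F}^{t-1}]=v_m(\hat{X}^t)$.

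For the second-moment bound, I would observe that squaring leaves the indicators unchanged, so the same factorization applies with $\mathbb{E}[(\mu^t_{mw})^2\mid\mathcal{F}^{t-1}]\le\sigma^2$ replacing the mean and $1/(\hat{X}^t_{mw})^2$ replacing $1/\hat{X}^t_{mw}$. One copy of $\hat{X}^t_{mw}$ again cancels, yielding $\mathbb{E}[(\hat{v}^t_{mw})^2\mid\mathcal{F}^{t-1}]=\tfrac{\mathbb{E}[(\mu^t_{mw})^2\mid\mathcal{F}^{t-1}]}{\hat{X}^t_{mw}}\prod_{k>_w m}(1-\hat{X}^t_{kw})$. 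Bounding the product by $1$ and the second moment by $\sigma^2$ gives the first inequality $\sigma^2/\hat{X}^t_{mw}$. The final inequality is immediate from the construction of the adjusted strategy: since $\hat{X}^t_{mw}=(1-\gamma^t)X^t_{mw}+\gamma^t/n\ge\gamma^t/n$ and $\sigma^2\le1$, we obtain $\sigma^2/\hat{X}^t_{mw}\le n/\gamma^t$.

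There is no serious obstacle here; this is the standard importance-weighting computation for bandit gradient estimators. The only point requiring care is the bookkeeping behind the conditional independence in (b): one must verify that $m\notin\{k:k>_w m\}$, so the two indicator families genuinely involve disjoint sets of players, which is precisely what lets the proposal probability $\hat{X}^t_{mw}$ be separated out and cancelled against the importance-sampling normalizer, thereby producing an \emph{unbiased} estimate of $v_m(\hat{X}^t)$ rather than of $v_m(X^t)$.
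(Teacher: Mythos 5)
Your proposal is correct and follows essentially the same route as the paper's proof: both factor the conditional expectation using the $\mathcal{F}^{t-1}$-measurability of $\hat{X}^t$, the independence of the reward draw from the action profile, and the conditional independence of the players' proposals, then cancel one copy of $\hat{X}^t_{mw}$ against the importance-sampling normalizer and bound the second moment by $\sigma^2/\hat{X}^t_{mw}\le n/\gamma^t$. Your explicit remark that $m\notin\{k:k>_w m\}$ is the same independence bookkeeping the paper relies on implicitly.
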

\begin{proof}
Consider any pair $(m,w)$. Since $\hat{X}^t_{mw}$ is $\mathcal{F}^{t-1}$-measurable and $\hat{\mu}^t_{mw}$ is drawn independently from the algorithm's past decisions (and hence is independent of $\mathcal{F}^{t-1}$ and $\alpha^t$), we have
\begin{align}\nonumber
\mathbb{E}[\hat{v}_{mw}^{t}|\mathcal{F}^{t-1}]&=\frac{\mathbb{E}[\mu^t_{mw}]}{\hat{X}^t_{mw}}\cdot \mathbb{E}[\boldsymbol{1}_{\{\alpha^t_k\neq w \ \forall k>_w m\}}\cdot \boldsymbol{1}_{\{\alpha_m^t=w\}}|\mathcal{F}^{t-1}]\cr 
&=\frac{\mu_{mw}}{\hat{X}^t_{mw}}\mathbb{P}\{\alpha_m^t=w|\mathcal{F}^{t-1}\}\prod_{k>_w m}\mathbb{P}\{\alpha^t_k\neq w|\mathcal{F}^{t-1}\}\cr
&=\mu_{mw}\prod_{k>_w m}(1-\hat{X}^t_{kw})=v_{mw}(\hat{X}^{t}),
\end{align}
where the second equality holds because conditioned on $\mathcal{F}^{t-1}$, the random variables $\{\alpha^t_k, \forall k\}$ are mutually independent as they are drawn independently from the given distributions $\{\hat{X}^t_k, \forall k\}$. Similarly, we can bound the expected second moment as
\begin{align}\nonumber
\mathbb{E}\big[\big(\hat{v}_{mw}^{t}\big)^2|\mathcal{F}^{t-1}\big]&=\frac{\mathbb{E}[(\hat{\mu}^t_{mw})^2]}{\hat{X}^t_{mw}} \prod_{k>_w m}\mathbb{P}\{\alpha^t_k\neq w|\mathcal{F}^{t-1}\}\leq \frac{\sigma^2}{\hat{X}^t_{mw}}\leq \frac{n}{\gamma^t}, \ \ \forall m,w.
\end{align}\end{proof}

\subsection{Performance of Algorithm \ref{alg:main} for Hierarchical Markets}
 
Here, we analyze the performance of Algorithm \ref{alg:main} for hierarchical matching markets. In the following analysis, we assume that a positive lower bound for $c:=\frac{1}{8}\min_{k\in [n]}\{\Delta,\mu_{m_kw_k}\}$ is known to all the players so that they can choose their mixing parameters in Algorithm \ref{alg:main}. However, as we discuss in Appendix \ref{appx:gap-oracle}, this condition can be relaxed up to some extent by using an uncoordinated and decentralized UCB gap estimation oracle in which the players use their estimated gaps to tune their mixing parameters. 
The main result of this section is presented in Theorem \ref{thm:logarithmic-regret}, which essentially shows that for hierarchical matching markets, Algorithm \ref{alg:main} achieves an expected regret that is logarithmic in $T$ without any coordination. However, before delving into the analysis, we first provide an overview of the main steps in the proof.

\begin{itemize}
    \item First, we define a ``good" event $\Omega$, which represents the set of circumstances where the difference between the accumulated realized rewards, $\eta^t(\hat{L}^t_{m\tilde{w}} - \hat{L}^t_{mw})$, for any $m, w, \tilde{w}$, and $t$, remains close to its conditional expected value. In particular, Lemma \ref{lemm:omega} shows that the event $\Omega$ occurs with very high probability. This allows us to analyze the performance of Algorithm \ref{alg:main} under its conditional mean trajectory while incurring only a small, controllable loss. 
    \item Conditioned on the event $\Omega$, Lemma \ref{lemm:key-hierarchical} shows that, due to the market structure and payoff functions, as time progresses, more pairs in the market are matched according to the underlying hierarchical order with  high probability. Additionally, the size of such nested matchings grows with high probability after a fixed number of steps. 
    \item Finally, in Theorem \ref{thm:logarithmic-regret}, we combine the results of Lemma \ref{lemm:omega} and Lemma \ref{lemm:key-hierarchical} to bound the expected regret of Algorithm \ref{alg:main} by conditioning on the event $\Omega$.  
\end{itemize}

Now we are ready to begin our main analysis. Let us consider the stochastic dynamics of Algorithm \ref{alg:main}, which for any pair $(m,w)$ are described by
\begin{align}\nonumber
&X^{t}_{mw}=\frac{\exp(\eta^t \hat{L}^{t-1}_{mw})}{\sum_{w'}\exp(\eta^t \hat{L}^{t-1}_{mw'})},\cr
&\hat{X}^{t}_{mw}=(1-\gamma^t)X^t_{mw}+\frac{\gamma^t}{n},\cr
&\hat{L}^t_{mw}=\hat{L}^{t-1}_{mw}+\hat{v}^t_{mw}.
\end{align}
Define $Y^t_{mw}=\sum_{\tau=1}^t v_{mw}(\hat{X}^{\tau})$, where $v_{mw}(\hat{X}^{\tau})=\mu_{mw}\prod_{k>_w m}(1-\hat{X}^{\tau}_{mw})$, and consider the event
\begin{align}\nonumber
\Omega:=\Big\{\big|\eta^t\big(\hat{L}^{t-1}_{m\tilde{w}}-\hat{L}^{t-1}_{mw}\big)-\eta^t\big(Y^{t-1}_{m\tilde{w}}-Y^{t-1}_{mw}\big)\big|\leq 2c\sqrt{t}, \ \forall t\ge 1,\ \forall m, w,\tilde{w}\Big\},
\end{align} 
where $c=\frac{1}{8}\min_{k\in [n]}\{\Delta,\mu_{m_kw_k}\}$ is a constant. The following lemma provides a high probability bound for the event $\Omega$ to occur under appropriate choice of stepsize and mixing parameter. 

\begin{lemma}\label{lemm:omega}
Fix an arbitrary $\delta\in (0, 1)$, and suppose each player follows Algorithm \ref{alg:main} with stepsize $\eta^t=\frac{1}{\sqrt{t}}$ and mixing parameter $\gamma^t=M\frac{\log t}{t}$, where $M=\frac{4n}{c}\log \frac{1}{\delta}$. Then, $\mathbb{P}\{\Omega\}\ge 1-\delta.$ 
\end{lemma}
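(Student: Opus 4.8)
The plan is to recast the event $\Omega$ as a \emph{time-uniform} two-sided deviation bound for a martingale, and then control it with Freedman's inequality. First I would define the per-step discrepancies $D^{\tau}_{mw} := \hat{v}^{\tau}_{mw} - v_{mw}(\hat{X}^{\tau})$. By Lemma \ref{lemm:unbiased-martingale}, $\mathbb{E}[D^{\tau}_{mw}\mid \mathcal{F}^{\tau-1}]=0$, so for each fixed triple $(m,w,\tilde{w})$ the process $S^{t-1}:=\sum_{\tau=1}^{t-1}\big(D^{\tau}_{m\tilde{w}}-D^{\tau}_{mw}\big)$ is a martingale adapted to $\{\mathcal{F}^{\tau}\}$. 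Since $\hat{L}^{0}=\boldsymbol{0}$ gives $\hat{L}^{t-1}_{mw}-Y^{t-1}_{mw}=\sum_{\tau=1}^{t-1}D^{\tau}_{mw}$, the quantity inside $\Omega$ is exactly $\eta^{t}S^{t-1}$, and with $\eta^{t}=1/\sqrt{t}$ the event $\Omega$ is equivalent to requiring $|S^{t-1}|\le 2ct$ for all $t\ge 1$ and all triples. So the whole lemma reduces to showing this single martingale stays below a linearly growing envelope, uniformly in time, with probability at least $1-\delta$.

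Next I would assemble the two ingredients Freedman needs. For the conditional variance, I would use that man $m$ proposes to exactly one woman at each round, so at most one of $\hat{v}^{\tau}_{m\tilde{w}},\hat{v}^{\tau}_{mw}$ is nonzero; the cross term therefore vanishes and Lemma \ref{lemm:unbiased-martingale} gives $\mathbb{E}[(D^{\tau}_{m\tilde{w}}-D^{\tau}_{mw})^{2}\mid\mathcal{F}^{\tau-1}]\le \mathbb{E}[(\hat{v}^{\tau}_{m\tilde{w}})^{2}\mid\mathcal{F}^{\tau-1}]+\mathbb{E}[(\hat{v}^{\tau}_{mw})^{2}\mid\mathcal{F}^{\tau-1}]\le 2n/\gamma^{\tau}$. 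The increments are a.s.\ bounded by $R_{\tau}:=2n/\gamma^{\tau}+2$ because $|\hat{v}^{\tau}_{mw}|\le n/\gamma^{\tau}$ and $v_{mw}\le 1$. Substituting $\gamma^{\tau}=M\log\tau/\tau$ and summing yields a cumulative predictable variance $V_{t-1}=\sum_{\tau<t}2n/\gamma^{\tau}\approx nt^{2}/(M\log t)$, together with a running increment bound $R\approx 2nt/(M\log t)$.

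Then I would apply a maximal form of Freedman's inequality to each triple at threshold $\lambda_{t}=2ct$. The decisive computation is $\lambda_{t}^{2}/V_{t-1}\approx 4c^{2}M\log t/n$; inserting $M=\tfrac{4n}{c}\log\tfrac{1}{\delta}$ makes the Freedman exponent of order $c\,\log\tfrac{1}{\delta}\,\log t$, so the deviation probability at scale $t$ decays polynomially in $t$ with exponent proportional to $\log\tfrac{1}{\delta}$. To turn this into a genuine all-$t$ statement (rather than a per-$t$ one) I would peel time into dyadic blocks $[2^{j},2^{j+1})$, bound $\max_{t\in[2^{j},2^{j+1})}|S^{t-1}|$ by maximal Freedman evaluated at the block endpoint $2^{j+1}$, sum the resulting geometric series over $j$, and finally union-bound over the at most $n^{3}$ triples $(m,w,\tilde{w})$; the constant in $M$ is chosen so that the total is at most $\delta$. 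The earliest rounds, where $\gamma^{\tau}$ saturates at $1$ and the estimators are only bounded by $n$, are folded into the same union bound using the crude per-step variance $2n$ (and $t=1$ is vacuous since $\hat{L}^{0}=Y^{0}=\boldsymbol{0}$).

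The main obstacle is that the deviation envelope $2ct$ and the martingale's typical size $\sqrt{V_{t-1}}\sim t/\sqrt{M\log t}$ grow at the \emph{same} polynomial order in $t$, differing only by the factor $\sqrt{\log t}$; concentration is therefore only marginally strong enough, and the argument succeeds solely because of the precise tuning of the mixing schedule $\gamma^{t}=M\log t/t$. The $\log t$ term is exactly what pushes the cumulative variance a logarithmic factor below the squared threshold, yet $\gamma^{t}$ must still vanish quickly enough that the adjusted strategies $\hat{X}^{t}$ do not bias the dynamics, and $M\propto\log\tfrac{1}{\delta}$ is what converts the Freedman exponent into the target failure probability $\delta$. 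Establishing the bound uniformly over all $t$ simultaneously—via the peeling/maximal-inequality step rather than a naive union bound over times—is the crux of the proof.
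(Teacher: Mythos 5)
Your proposal is correct and follows essentially the same route as the paper: decompose $\eta^t(\hat L^{t-1}_{m\tilde w}-\hat L^{t-1}_{mw})$ into its conditional-mean part $\eta^t(Y^{t-1}_{m\tilde w}-Y^{t-1}_{mw})$ plus martingale noise, bound the increments and conditional variances via Lemma \ref{lemm:unbiased-martingale}, and apply Freedman's inequality at threshold $\Theta(ct)$ with $M=\frac{4n}{c}\log\frac{1}{\delta}$ tuned so that the total failure probability is at most $\delta$. The only place you deviate is the time-uniformity step: the paper bounds each $|S^{t-1}_{mw}|$ separately and takes a direct union bound over all $t$ and all pairs $(m,w)$, the per-$t$ tail being of order $\delta/(n\pi t)^2$ and hence summable, so the dyadic peeling and maximal inequality you describe as ``the crux'' are a valid but unnecessary embellishment.
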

\begin{proof}
Fix an arbitrary man $m$. For any $w, \tilde{w}$, as in \eqref{eq:L-S}, we can write
\begin{align}\label{eq:simpler:L-S}
\eta^t(\hat{L}^{t-1}_{m\tilde{w}}-\hat{L}^{t-1}_{mw})&=\eta^t\sum_{\tau=1}^{t-1}\big(v_{m\tilde{w}}(\hat{X}^{\tau})-v_{mw}(\hat{X}^{\tau})\big)+\eta^t S^{t-1}_{m\tilde{w}}-\eta^t S^{t-1}_{mw}\cr
&=\eta^t(Y^{t-1}_{m\tilde{w}}-Y^{t-1}_{mw})+\eta^t S^{t-1}_{m\tilde{w}}-\eta^t S^{t-1}_{mw},
\end{align}
where $S^{t-1}_{mw}:=\sum_{\tau=1}^{t-1} \big(\hat{v}^{\tau}_{mw}-v_{mw}(\hat{X}^{\tau})\big)\ \forall m,w$. According to Lemma \ref{lemm:unbiased-martingale}, $\{\hat{v}^{\tau}_{mw}-v_{mw}(\hat{X}^{\tau})\}$ is a martingale difference sequence that almost surely satisfies
\begin{align}\nonumber
&|\hat{v}^{\tau}_{mw}-v_{mw}(\hat{X}^{\tau})|\leq \frac{n}{\gamma^{\tau}},\cr 
&\mathbb{E}[\big(\hat{v}^{\tau}_{mw}-v_{mw}(\hat{X}^{\tau})\big)^2|\mathcal{F}^{\tau-1}]\leq \frac{n}{\gamma^{\tau}}.
\end{align}
By taking $u=\frac{n}{\gamma^{t-1}}$, $v^2=\sum_{\tau=1}^{t-1} \frac{n}{\gamma^{\tau}}$, and $x=ct$ in Lemma \ref{lemm:freedman} (see Appendix \ref{appx:freedman}) and noting that the event $\{V^t\leq v^2\}$ holds with probability 1, we obtain 
\begin{align}\label{eq:Bernshtein-log-bound}
\mathbb{P}\{\eta^t|S_{mw}^{t-1}|\ge c\sqrt{t}\}=\mathbb{P}\{|S_{mw}^{t-1}|\ge ct\}\leq 2\exp\Big(\frac{-Mc^2 t^2}{2n(\sum_{\tau=1}^{t-1} \frac{\tau}{\log \tau}+\frac{ct(t-1)}{\log (t-1)})}\Big).
\end{align}
Since $f(\tau):=\frac{\tau}{\log \tau}$ is a concave function for $\tau\ge 100$, using Jensen's inequality, we have
\begin{align}\nonumber
\frac{1}{t-1}\sum_{\tau=1}^{t-1} \frac{\tau}{\log \tau}=\sum_{\tau=1}^{t-1} \frac{1}{t-1}f(\tau)\leq f(\sum_{\tau=1}^{t-1} \frac{1}{t-1}\tau)=\frac{\sum_{\tau=1}^{t-1}\frac{\tau}{t-1}}{\log\big(\sum_{\tau=1}^{t-1}\frac{\tau}{t-1}\big) }=\frac{t}{2\log\frac{t}{2}}. 
\end{align}
Therefore, $\sum_{\tau=1}^{t-1} \frac{\tau}{\log \tau}\leq \frac{t(t-1)}{2\log\frac{t}{2}}\leq \frac{t^2}{\log t}$. Substituting this relation and $\frac{t(t-1)}{\log (t-1)}\leq \frac{t^2}{\log t}$ into \eqref{eq:Bernshtein-log-bound} we get
\begin{align}\nonumber
\mathbb{P}\{\eta^t|S_{mw}^{t-1}|\ge c\sqrt{t}\}\leq 2\exp\Big(\frac{-Mc^2 t^2}{2n(1+c)\frac{t^2}{\log t}}\Big)= 2\exp\Big(\frac{-Mc^2\log t}{2n(1+c)}\Big)\leq \frac{2\delta}{(n\pi t)^2},
\end{align}
where the last inequality is obtained by choosing $M$ sufficiently large (e.g., $M=\frac{4n}{c}\log\frac{1}{\delta}$). Thus, using the union bound, we can write
\begin{align}\label{eq:S-simple-union}
\mathbb{P}\{\eta^t |S_{mw}^{t-1}|< c\sqrt{t}\  \forall t\ge 1, \forall m, w\}\geq 1-\sum_{t=1}^{\infty} \frac{2n^2\delta}{(n\pi t)^2}> 1-\delta.
\end{align}
Finally, using \eqref{eq:simpler:L-S} and \eqref{eq:S-simple-union}, one can see that
\begin{align}\nonumber
\mathbb{P}\{\Omega\}&=\mathbb{P}\Big\{\big|\eta^t\big(\hat{L}^{t-1}_{m\tilde{w}}-\hat{L}^{t-1}_{mw}\big)-\eta^t\big(Y^{t-1}_{m\tilde{w}}-Y^{t-1}_{mw}\big)\big|\leq 2c\sqrt{t}, \ \forall t\ge 1,\ \forall m, w, \tilde{w}\Big\}\cr 
&=\mathbb{P}\Big\{\big|\eta^t S^{t-1}_{m\tilde{w}}-\eta^t S^{t-1}_{mw}\big|\leq c\sqrt{t}, \ \forall t\ge 1,\ \forall m, w, \tilde{w}\Big\}\cr
&\ge\mathbb{P}\{\eta^t |S_{mw}^{t-1}|< c\sqrt{t}\  \forall t\ge 1, \forall m, w\}> 1-\delta. 
\end{align}
\end{proof}

Before we present the main result of this section, we first prove the following key lemma. 

\smallskip
\begin{lemma}\label{lemm:key-hierarchical}
Assume that each man follows Algorithm \ref{alg:main} with stepsize $\eta^t=\frac{1}{\sqrt{t}}$ and mixing parameter $\gamma^t=M\frac{\log t}{t}$, where $M=\frac{4n}{c}\log \frac{1}{\delta}$. Let $a_1\ge \ldots\ge a_n$ be a sequence defined by $a_1=\frac{\Delta}{2}$ and $a_k=\frac{1}{4}\min_{i\in [k]}\big\{\Delta,\mu_{m_iw_i}\big\}, \forall k=2,\ldots,n$. Conditioned on the event $\Omega$, there exists a sequence of time instances $t_1\leq \ldots\leq t_n=O\big(\frac{nM}{c^{n+1}}\log^2(\frac{M}{c})\big)$, such that for any $k\in [n]$, $t\ge t_k$, and $w\neq w_k$, we have 
\begin{align}\label{lemm:statement}
&\hat{X}^t_{m_kw_k}\ge \frac{1-\gamma^t}{1+(n-1)e^{-\eta^t a_k t}}+\frac{\gamma^t}{n},\cr 
&\hat{X}^t_{m_kw}\le \frac{(1-\gamma^t)e^{-\eta^t a_k t}}{1+(n-1)e^{-\eta^t a_k t}}+\frac{\gamma^t}{n}.
\end{align}    
\end{lemma}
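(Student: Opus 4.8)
The plan is to prove both inequalities in \eqref{lemm:statement} by induction on the hierarchical rank $k$, using the event $\Omega$ to pass from the realized score gaps $\hat L$ to their conditional-mean surrogates $Y$. The whole statement reduces to the single scalar estimate
\[
\eta^t\big(\hat L^{t-1}_{m_kw_k}-\hat L^{t-1}_{m_kw}\big)\ge \eta^t a_k t,\qquad \forall\, t\ge t_k,\ \forall\, w\neq w_k .
\]
Granting this, the logit map \eqref{eq:logit} gives $X^t_{m_kw}\le e^{-\eta^t a_k t}X^t_{m_kw_k}$ for every $w\neq w_k$; summing over $w\neq w_k$ and using $\sum_{w}X^t_{m_kw}=1$ yields $X^t_{m_kw_k}\ge \big(1+(n-1)e^{-\eta^t a_k t}\big)^{-1}$, while the same ratio bound controls each $X^t_{m_kw}$ from above. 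Passing through $\hat X^t_{m_k}=(1-\gamma^t)X^t_{m_k}+\frac{\gamma^t}{n}\boldsymbol 1$ then reproduces \eqref{lemm:statement}. By the decomposition used in Lemma \ref{lemm:omega}, on $\Omega$ the score gap differs from $\eta^t(Y^{t-1}_{m_kw_k}-Y^{t-1}_{m_kw})$ by at most $2c\sqrt t$, so (since $\eta^t=1/\sqrt t$ and $a_k\ge 2c$) it suffices to accumulate a per-round gradient advantage $v_{m_kw_k}(\hat X^\tau)-v_{m_kw}(\hat X^\tau)\ge 2a_k$ over the relevant horizon; this is what the induction must deliver.

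For the base case $k=1$, no man is ranked above $m_1$ at $w_1$, so $v_{m_1w_1}(\hat X^\tau)=\mu_{m_1w_1}$ for every $\tau$, whereas $v_{m_1w}(\hat X^\tau)\le \mu_{m_1w}\le \mu_{m_1w_1}-\Delta$ for all $w\neq w_1$ by the hierarchical preference of $m_1$. Hence $Y^{t-1}_{m_1w_1}-Y^{t-1}_{m_1w}\ge \Delta(t-1)$, and combining with the $2c\sqrt t$ slack from $\Omega$ and $a_1=\Delta/2$ (using $2c\le\Delta/4$) gives the required inequality for all $t$ beyond a fixed threshold $t_1$.

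The inductive step is the heart of the argument. Assume $m_1,\dots,m_{k-1}$ have already concentrated on $w_1,\dots,w_{k-1}$ for $t\ge t_{k-1}$, so $\hat X^t_{m_jw_j}$ is close to $1$ and $\hat X^t_{m_jw}$ is small for $w\ne w_j$, $j<k$. I would split the rival women of $m_k$ into two groups. For $w=w_j$ with $j>k$, the hierarchy gives $\mu_{m_kw_k}\ge\mu_{m_kw_j}+\Delta$, hence $v_{m_kw_j}(\hat X^\tau)\le \mu_{m_kw_k}-\Delta$. For $w=w_j$ with $j<k$, the woman $w_j$ ranks $m_j$ above $m_k$ (since $w_j$ prefers $m_j$ to $m_{j+1},\dots,m_n$), so the factor $(1-\hat X^\tau_{m_jw_j})$ appears in $v_{m_kw_j}(\hat X^\tau)$ and drives it down to $O\big((n-1)e^{-\eta^\tau a_j\tau}+\gamma^\tau\big)$ by the inductive hypothesis. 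Symmetrically, every man ranked above $m_k$ at $w_k$ lies in $\{m_1,\dots,m_{k-1}\}$ and has $\hat X^\tau_{\cdot w_k}$ small, so $v_{m_kw_k}(\hat X^\tau)\ge \mu_{m_kw_k}\big(1-O((n-1)e^{-\eta^\tau a_j\tau}+\gamma^\tau)\big)$. Since the clean gap is $\min\{\Delta,\mu_{m_kw_k}\}\ge \min_{i\le k}\{\Delta,\mu_{m_iw_i}\}=4a_k$, choosing $t_k$ large enough that all residual error terms fall below $2a_k$ leaves a net per-round advantage of at least $2a_k$, which closes the induction.

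The main obstacle is the quantitative bookkeeping in this last step together with the resulting timing recursion. The error terms that must be pushed below $2a_k$ are of the form $(n-1)e^{-\eta^\tau a_j\tau}+\gamma^\tau$ with $\gamma^\tau=M\frac{\log\tau}{\tau}$, so $t_k$ must satisfy both $\sqrt{t}\,a_j\gtrsim \log\frac{n}{a_k}$ for every $j<k$ and $M\frac{\log t}{t}\lesssim a_k$; since each $a_j\ge 2c$, solving these and cascading the requirement $t_k\ge t_{k-1}$ across the $n$ ranks is what produces the stated $t_n=O\big(\frac{nM}{c^{n+1}}\log^2(\frac{M}{c})\big)$. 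I expect the delicate points to be (i) making the ``$w_j$ is taken'' estimate uniform in $\tau\in[t_{k-1},t]$ rather than only at the endpoint, so the advantage can genuinely be integrated into $Y^{t-1}$ (with the bounded pre-$t_{k-1}$ contribution absorbed by enlarging $t_k$ by a constant factor), and (ii) tracking how the per-rank concentration level degrades through the recursion, which is precisely what forces the $\min_{i\le k}$ in the definition of $a_k$ and the geometric $c^{-(n+1)}$ blow-up in $t_n$.
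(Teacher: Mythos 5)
Your proposal follows essentially the same route as the paper's proof: induction on the rank $k$, the same split of $m_k$'s rival women into those above him in the hierarchy (where the inductive concentration of $\hat{X}^{\tau}_{m_jw_j}$ near $1$ kills $v_{m_kw_j}$) and those below (where the $\Delta$-gap in $\mu$ does the work), the same lower bound on $v_{m_kw_k}$ via the smallness of $\hat{X}^{\tau}_{m_\ell w_k}$ for $\ell<k$, accumulation into $Y^{t-1}$ with the pre-$t_{k-1}$ terms absorbed crudely, the $2c\sqrt{t}$ slack from $\Omega$, and the logit-ratio step yielding \eqref{lemm:statement}. You also correctly identify the timing recursion that produces the $O\big(\frac{nM}{c^{n+1}}\log^2(\frac{M}{c})\big)$ bound, so the plan matches the paper's argument in both structure and the quantitative bookkeeping.
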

\smallskip
\begin{proof}
We use induction on $k\in [n]$ to prove the lemma. For the base of the induction $k=1$, we note that for any $\hat{X}\in [0,1]^{n^2}$, we have
\begin{align}\nonumber
&v_{m_1w_1}(\hat{X})=\mu_{m_1w_1}\prod_{k>_{w_1} m_1}(1-\hat{X}_{kw_1})=\mu_{m_1w_1},\cr
&v_{m_1w}(\hat{X})=\mu_{m_1w}\prod_{k>_{w} m_1}(1-\hat{X}_{kw})\leq \mu_{m_1w}.
\end{align} \vspace{-0.5cm}
Therefore, 
\begin{align}\nonumber
Y^{t-1}_{m_1w_1}-Y^{t-1}_{m_1w}&=\sum_{\tau=1}^{t-1} \big(v_{m_1w_1}(\hat{X}^{\tau})-v_{m_1w}(\hat{X}^{\tau})\big)\ge \sum_{\tau=1}^{t-1} \big(\mu_{m_1w_1}-\mu_{m_1w}\big)\ge \Delta (t-1). 
\end{align}
Define $a_1=\frac{\Delta}{2}$ and let $t_1=4$. Conditioned on the event $\Omega$, for any $t\ge t_1$ and any $w\neq w_1$, we have
\begin{align}\label{eq:base-prelim}
\frac{X^t_{m_1w_1}}{X^t_{m_1w}}&=e^{\eta^t(\hat{L}^{t-1}_{m_1w_1}-\hat{L}^{t-1}_{m_1w})}\ge e^{\eta^t (Y^{t-1}_{m_1w_1}-Y^{t-1}_{m_1w})- 2c\sqrt{t}}\cr
&\ge e^{\eta^t \Delta (t-1)-2c\sqrt{t}}\ge e^{\eta^t a_1 (2t-2)-\frac{1}{2}\eta^t a_1 t}\cr
&= e^{\eta^t a_1 (\frac{3}{2}t-2)} \ge e^{\eta^t a_1t},
\end{align}
where the third inequality holds because $c\leq \frac{a_1}{4}$, and the last inequality holds by $t\ge t_1$. Moreover, 
\begin{align} \label{eq:base-prelim-two}
1=X^t_{m_1w_1}+\sum_{w\neq w_1}X^t_{m_1w}\leq \big(1+(n-1)e^{-\eta^t a_1 t}\big)X^t_{m_1w_1}.
\end{align}
Combining \eqref{eq:base-prelim} and \eqref{eq:base-prelim-two} and using $\hat{X}^t_{mw}=(1-\gamma^t)X^t_{mw}+\frac{\gamma^t}{n}$, for any $t\ge t_1$ and $w\neq w_1$, we have
\begin{align}\label{eq:induction-base}
&\hat{X}^t_{m_1w_1}\ge \frac{1-\gamma^t}{1+(n-1)e^{-\eta^t a_1 t}}+\frac{\gamma^t}{n},\cr
&\hat{X}^t_{m_1w}\le \frac{(1-\gamma^t)e^{-\eta^t a_1 t}}{1+(n-1)e^{-\eta^t a_1 t}}+\frac{\gamma^t}{n}. 
\end{align} 
Next, by induction hypothesis, suppose that the statement holds for the first $k-1$ pairs $(m_{\ell},w_{\ell}), \ell=1,\ldots,k-1$ with corresponding sequences $a_1\ge a_2\ldots\ge a_{k-1}$ and $t_1\leq t_2\leq\ldots\leq t_{k-1}$, and let us consider the $k$th pair $(m_k,w_k)$ of the market. Notice that for any $\hat{X}\in [0,1]^{n^2},$ we have 
\begin{align}\label{eq:general-two-cases}
v_{m_{k}w_{k}}(\hat{X})=\mu_{m_kw_k}\prod_{m_{\ell}>_{w_k} m_k}(1-\hat{X}_{m_{\ell}w_k})\ge \mu_{m_kw_k}\prod_{\ell=1}^{k-1} (1-\hat{X}_{m_{\ell}w_k}). 
\end{align}
We consider two cases: 

\noindent
{\bf Case I:} If $w=w_{i}$ for some $i\in \{1,\ldots,k-1\}$, then
\begin{align}\label{eq:case-i}
v_{m_kw}(\hat{X})=v_{m_kw_i}(\hat{X})=\mu_{m_kw_i}\prod_{m_{\ell}>_{w_i} m_k}(1-\hat{X}_{m_{\ell}w_i})\leq \mu_{m_kw_i}(1-\hat{X}_{m_{i}w_{i}}). 
\end{align}
Therefore, using \eqref{eq:general-two-cases} and \eqref{eq:case-i} we can write
\begin{align}\nonumber
&Y^{t-1}_{m_kw_k}-Y^{t-1}_{m_kw}=\sum_{\tau=t_{k-1}}^{t-1} \big(v_{m_kw_k}(\hat{X}^{\tau})-v_{m_kw_i}(\hat{X}^{\tau})\big)+\sum_{\tau=1}^{t_{k-1}-1} \big(v_{m_kw_k}(\hat{X}^{\tau})-v_{m_kw_i}(\hat{X}^{\tau})\big)\cr
&\ge \sum_{\tau=t_{k-1}}^{t-1}\Big(\mu_{m_kw_k}\prod_{\ell=1}^{k-1}(1-\hat{X}^{\tau}_{m_{\ell}w_k})-\mu_{m_kw_i}(1-\hat{X}^{\tau}_{m_iw_i})\Big)+\sum_{\tau=1}^{t_{k-1}-1} \big(v_{m_kw_k}(\hat{X}^{\tau})-v_{m_kw_i}(\hat{X}^{\tau})\big)\cr
&\ge \sum_{\tau=t_{k-1}}^{t-1}\Big(\mu_{m_kw_k}\big(1-\sum_{\ell=1}^{k-1}(\frac{\gamma^{\tau}}{n}+\frac{(1-\gamma^{\tau})e^{-\eta^{\tau} a_{\ell} \tau}}{1+(n-1)e^{-\eta^{\tau} a_{\ell} \tau}})\big)-\mu_{m_kw_i}\big(1-\frac{\gamma^{\tau}}{n}-\frac{1-\gamma^{\tau}}{1+(n-1)e^{-\eta^{\tau} a_i \tau}}\big)\Big)-t_{k-1} \cr 
&\ge \sum_{\tau=t_{k-1}}^{t-1}\Big(\mu_{m_kw_k}-\gamma^{\tau}-\sum_{\ell=1}^{k-1}\frac{(1-\gamma^{\tau})e^{-\eta^{\tau} a_{\ell} \tau}}{1+(n-1)e^{-\eta^{\tau} a_{\ell} \tau}}-\frac{(n-1)e^{-\eta^{\tau} a_{i} \tau}+\gamma^{\tau}}{1+(n-1)e^{-\eta^{\tau} a_i \tau}}\Big)-t_{k-1}\cr
&\ge \sum_{\tau=t_{k-1}}^{t-1}\Big(\mu_{m_kw_k}-\gamma^{\tau}-\sum_{\ell=1}^{k-1}e^{-\eta^{\tau} a_{\ell} \tau}-(n-1)e^{-\eta^{\tau} a_{i} \tau}-\gamma^{\tau}\Big)-t_{k-1}\cr
&\ge \sum_{\tau=t_{k-1}}^{t-1} \big(\mu_{m_kw_k}-2\gamma^{\tau}-2n e^{-\eta^{\tau} a_{k-1} \tau}\big)-t_{k-1},
\end{align}where the second inequality holds by the induction hypothesis for $\tau\ge t_{k-1}$, and the last inequality holds because $a_1\ge \ldots\ge a_{k-1}$. Using the specific choice of stepsize and mixing parameters $\eta^t=\frac{1}{\sqrt{t}}$ and $\gamma^t=M\frac{\log t}{t}$, for any $t\ge t_{k-1}$, we have 
\begin{align}\nonumber
Y^{t-1}_{m_kw_k}-Y^{t-1}_{m_kw}&\ge \sum_{\tau=t_{k-1}}^{t-1} \big(\mu_{m_kw_k}-2\gamma^{\tau}-2ne^{-\eta^{\tau} a_{k-1} \tau}\big)-t_{k-1}\cr
&\ge \mu_{m_kw_k}(t-t_{k-1})-2M\int_{1}^{t}\frac{\log \tau}{\tau}d\tau-2n\int_{0}^{\infty}e^{-a_{k-1} \sqrt{\tau}}d\tau-t_{k-1}\cr
&\ge \mu_{m_kw_k}t-2M \log^2 t-\frac{4n}{a^2_{k-1}}-2t_{k-1}. 
\end{align}
In particular, for any $t\ge \frac{4}{\mu_{m_kw_k}}(2M\log^2 \frac{4M}{\mu_{m_kw_k}}+\frac{2n}{a^2_{k-1}}+t_{k-1})$, we have
\begin{align}\label{eq:case-i-conclusion}
Y^{t-1}_{m_kw_k}-Y^{t-1}_{m_kw}\ge \frac{\mu_{m_kw_k}}{2}t.   
\end{align}

\noindent 
{\bf Case II:} If $w\notin \{w_{1},\ldots,w_{k}\},$ then $\mu_{m_kw_k}\ge \mu_{m_kw}+\Delta$. Using \eqref{eq:general-two-cases} and the fact that $\mu_{m_kw}\ge v_{m_kw}(\hat{X})$ for any $\hat{X}\in [0,1]^{n^2}$, we can write
\begin{align}\nonumber
Y^{t-1}_{m_kw_k}-Y^{t-1}_{m_kw}&=\sum_{\tau=t_{k-1}}^{t-1} \big(v_{m_kw_k}(\hat{X}^{\tau})-v_{m_kw}(\hat{X}^{\tau})\big)+\sum_{\tau=1}^{t_{k-1}-1} \big(v_{m_kw_k}(\hat{X}^{\tau})-v_{m_kw}(\hat{X}^{\tau})\big)\cr
&\ge \sum_{\tau=t_{k-1}}^{t-1} \Big(\mu_{m_kw_k}\prod_{\ell=1}^{k-1}(1-\hat{X}^{\tau}_{m_{\ell}w_k})-\mu_{m_kw}\Big)-t_{k-1}\cr
&\ge \sum_{\tau=t_{k-1}}^{t-1} \Big(\mu_{m_kw_k}\big(1-\sum_{\ell=1}^{k-1}\hat{X}^{\tau}_{m_{\ell}w_k}\big)-\mu_{m_kw}\Big)-t_{k-1}\cr
&\ge \sum_{\tau=t_{k-1}}^{t-1}\Big((\mu_{m_kw_k}-\mu_{m_kw})-\mu_{m_kw_k}\sum_{\ell=1}^{k-1}\big(\frac{\gamma^{\tau}}{n}+\frac{(1-\gamma^t)e^{-\eta^{\tau} a_{\ell} \tau}}{1+(n-1)e^{-\eta^{\tau} a_{\ell} \tau}}\big)\Big)-t_{k-1}\cr 
&\ge \sum_{\tau=t_{k-1}}^{t-1}\Big(\Delta-\gamma^{\tau}-\sum_{\ell=1}^{k-1}e^{-\eta^{\tau} a_{\ell} \tau}\Big)-t_{k-1}\cr
&\ge \sum_{\tau=t_{k-1}}^{t-1}\Big(\Delta-\gamma^{\tau}-ne^{-\eta^{\tau} a_{k-1} \tau}\Big)-t_{k-1}.
\end{align}
Thus, for any integer $t\ge t_{k-1}$ and the choice of parameters $\eta^t=\frac{1}{\sqrt{t}}$ and $\gamma^t=M\frac{\log t}{t}$, we have 
\begin{align}\nonumber
Y^{t-1}_{m_kw_k}-Y^{t-1}_{m_kw}&\ge  \sum_{\tau=t_{k-1}}^{t-1}\Big(\Delta-\gamma^{\tau}-ne^{-\eta^{\tau} a_{k-1} \tau}\Big)-t_{k-1}\cr
&\ge \Delta(t-t_{k-1})-M\log^2 t-\frac{4n}{a^2_{k-1}}-t_{k-1}\cr
&\ge \Delta t-M\log^2 t-\frac{4n}{a^2_{k-1}}-2t_{k-1}.
\end{align}
Therefore, for any $t\ge \frac{4}{\Delta}(M\log^2 (\frac{2M}{\Delta})+\frac{2n}{a^2_{k-1}}+t_{k-1})$, we have 
\begin{align}\label{eq:case-II-conclusion}
Y^{t-1}_{m_kw_k}-Y^{t-1}_{m_kw}\ge\frac{\Delta}{2}t.
\end{align}
As a result, by taking
\begin{align}\label{eq:t_k}
&t_k:=\frac{4}{\min\{\Delta,\mu_{m_kw_k}\}}\Big(2M\log^2(\frac{2M}{\min\{\Delta,\mu_{m_kw_k}\}})+\frac{2n}{a^2_{k-1}}+t_{k-1}\Big),\cr
&a_{k}:=\frac{1}{4}\min_{i\in [k]}\Big\{\Delta, \mu_{m_iw_i}\Big\}, 
\end{align}
we have $a_{k-1}\ge a_{k}, t_k\ge t_{k-1}$. In particular, for $t\ge t_k$ both relations \eqref{eq:case-i-conclusion} and \eqref{eq:case-II-conclusion} hold and we have 
\begin{align}\label{eq:final-Y-a}
Y^{t-1}_{m_kw_k}-Y^{t-1}_{m_kw}\geq \min \{\frac{\Delta}{2}, \frac{\mu_{m_kw_k}}{2}\} t\ge 2a_k t,\ \forall t\ge t_k, \forall w\neq w_k.
\end{align} 
Now, using a similar argument as in the base case, for any $t\ge t_k$ and $w\neq w_k$, we have
\begin{align}\nonumber
\frac{X^t_{m_kw_k}}{X^t_{m_kw}}&=e^{\eta^t(\hat{L}^{t-1}_{m_kw_k}-\hat{L}^{t-1}_{m_kw})}\ge e^{\eta^t (Y^{t-1}_{m_kw_k}-Y^{t-1}_{m_kw})- 2c\sqrt{t}}\cr
&\ge e^{\eta^t 2a_k t-2c\sqrt{t}}\ge e^{\eta^t 2a_k t-\eta^t a_k t}= e^{\eta^t a_kt},
\end{align}
where the second inequality uses \eqref{eq:final-Y-a}, and the last inequality holds because $c\!=\!\min_{k\in [n]}\{\frac{\Delta}{8},\frac{\mu_{m_kw_k}}{8}\}\leq \frac{a_k}{2}$. The above relation, together with similar arguments as in deriving \eqref{eq:base-prelim-two} and \eqref{eq:induction-base} in which $a_1$ is replaced by $a_k$, gives us the desired result \eqref{lemm:statement}. 

Finally, using relation \eqref{eq:t_k} recursively, and noting that for any $k$ we have $\frac{4}{\min\{\Delta,\mu_{m_kw_k}\}}\leq \frac{1}{c}$ and $\frac{2}{a_k}\leq \frac{1}{c}$, one can obtain an explicit upper bound for $t_n$ as
\begin{align}\nonumber
t_n&\leq \frac{1}{c}\Big(2M\log^2(\frac{M}{c})+\frac{n}{c^2}+t_{n-1}\Big)\leq \cdots\cr
&\leq \Big(\sum_{k=1}^{n-1} \frac{1}{c^k}\Big)\cdot \Big(2M\log^2(\frac{M}{c})+\frac{n}{c^2}+t_1\Big)\cr
&=O\Big(\frac{n M}{c^{n+1}}\log^2(\frac{M}{c})\Big).    
\end{align}
\end{proof}

\begin{theorem}\label{thm:logarithmic-regret}
Assume that each man follows Algorithm \ref{alg:main} with $\eta^t=\frac{1}{\sqrt{t}}$ and $\gamma^t=M\frac{\log t}{t}$, where $M=\frac{4n}{c}\log T$ and $c=\frac{1}{8}\min_{k\in [n]}\{\Delta,\mu_{m_kw_k}\}$. Then, the expected regret of Algorithm \ref{alg:main} in hierarchical matching markets is at most $R(T)=\tilde{O}\Big(\frac{n^3}{c^{n+2}}\log T+\frac{n^2}{c}\log^3 T\Big)$, where $\tilde{O}$ hides the $\log(\log T)$ terms.  
\end{theorem}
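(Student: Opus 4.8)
The plan is to assemble the two preceding lemmas into the regret bound by conditioning on the good event $\Omega$ and splitting the horizon at the burn-in time $t_n$ furnished by Lemma~\ref{lemm:key-hierarchical}. Writing the per-step failure event as $B_t=\{\exists k\in[n]: \alpha^t_{m_k}\neq w_k\}$, I would first decompose
\begin{align}\nonumber
R(T)=\sum_{t=1}^T\mathbb{E}[\boldsymbol{1}_{B_t}]\leq \sum_{t=1}^T\mathbb{E}[\boldsymbol{1}_{B_t}\boldsymbol{1}_{\Omega}]+\sum_{t=1}^T\mathbb{P}\{\Omega^c\}\leq \sum_{t=1}^T\mathbb{E}[\boldsymbol{1}_{B_t}\boldsymbol{1}_{\Omega}]+T\delta,
\end{align}
and choose $\delta=1/T$; by Lemma~\ref{lemm:omega} this is precisely the setting $M=\frac{4n}{c}\log T$ in the statement, and it makes the $\Omega^c$ contribution $O(1)$.

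I would then split the surviving sum at $t_n$. For $t\leq t_n$ each indicator is trivially at most $1$, so this block contributes at most $t_n=O\big(\frac{nM}{c^{n+1}}\log^2(M/c)\big)$; substituting $M=\frac{4n}{c}\log T$ gives a burn-in cost of order $\frac{n^2}{c^{n+2}}\log T$ up to factors polylogarithmic in $T$, which via the $c^{-(n+2)}$ factor matches and dominates the first term of the claimed bound. For $t>t_n$ the key subtlety is that $\Omega$ is \emph{not} $\mathcal{F}^{t-1}$-measurable, but its relevant consequence is: on $\Omega$ the inequalities of Lemma~\ref{lemm:key-hierarchical} hold, and since each $\hat X^t_{m_kw_k}$ is $\mathcal{F}^{t-1}$-measurable, $\Omega$ is contained in the $\mathcal{F}^{t-1}$-measurable event on which $\hat X^t_{m_kw_k}\ge \frac{1-\gamma^t}{1+(n-1)e^{-\eta^t a_k t}}+\frac{\gamma^t}{n}$ for all $k$. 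Because the proposals $\alpha^t_{m_k}$ are drawn independently from $\hat X^t_{m_k}$ given $\mathcal{F}^{t-1}$, the conditional failure probability is $1-\prod_k\hat X^t_{m_kw_k}$, so a union bound yields
\begin{align}\nonumber
\mathbb{E}[\boldsymbol{1}_{B_t}\boldsymbol{1}_{\Omega}]\leq \mathbb{E}\Big[\boldsymbol{1}_{\Omega}\sum_{k=1}^n\big(1-\hat X^t_{m_kw_k}\big)\Big]\leq n\gamma^t+(n-1)\sum_{k=1}^n e^{-\eta^t a_k t},
\end{align}
where the final step uses the elementary estimate $1-\frac{1-\gamma^t}{1+\beta}-\frac{\gamma^t}{n}\le \beta+\gamma^t$ with $\beta=(n-1)e^{-\eta^t a_k t}$, together with $a_k\ge a_n=2c$.

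Finally I would bound the two tail sums by integral comparison. With $\gamma^t=M\frac{\log t}{t}$ we get $\sum_{t>t_n}^T n\gamma^t\leq nM\int_1^T\frac{\log \tau}{\tau}\,d\tau=O(nM\log^2 T)=O\big(\tfrac{n^2}{c}\log^3 T\big)$, which is the second term of the claimed bound; and with $\eta^t a_n t=2c\sqrt t$ we get $\sum_{t>t_n}^T n(n-1)e^{-2c\sqrt t}\leq n^2\int_0^{\infty} e^{-2c\sqrt\tau}\,d\tau=O(n^2/c^2)$, which (as $c<1$) is absorbed into the burn-in term. Collecting $t_n$, the two tail contributions, and the $O(1)$ from $\Omega^c$ gives the stated estimate. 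I expect the only genuinely delicate point to be the measurability bookkeeping for $\Omega$ above—pulling the deterministic concentration of Lemma~\ref{lemm:key-hierarchical} inside the conditional expectation despite $\Omega$ being a trajectory-wide event—while the rest reduces to the two integral bounds and the observation that the exponentially-in-$n$ burn-in $t_n$ dominates the logarithmic tail.
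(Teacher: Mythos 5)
Your proposal is correct and follows essentially the same route as the paper's proof: decompose the regret via a union bound over the $n$ pairs, split the horizon at the burn-in time $t_n$ from Lemma~\ref{lemm:key-hierarchical}, control the tail using the lower bounds on $\hat X^t_{m_kw_k}$ on the event $\Omega$, bound $\sum_t\gamma^t$ and $\sum_t e^{-\eta^t a_n t}$ by integral comparison, and set $\delta=1/T$. Your explicit handling of the measurability of $\Omega$ (enlarging it to the $\mathcal{F}^{t-1}$-measurable event on which the Lemma~\ref{lemm:key-hierarchical} bounds hold before invoking the tower property) is a point the paper passes over more casually, but the resulting bounds coincide.
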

\begin{proof}
Using the definition of the expected regret, we can write 
\begin{align}\label{eq:expected-regret-first}
\mathbb{E}[R(T)]&= \sum_{t=1}^T \mathbb{E}[\boldsymbol{1}_{\{\exists k: \alpha^{t}_{m_k}\neq w_k\}}]\cr
&\leq \sum_{t=1}^T\sum_{k=1}^n \mathbb{E}[\boldsymbol{1}_{\{\alpha^{t}_{m_k}\neq w_k\}}]\cr 
&= \sum_{t=1}^T\sum_{k=1}^n \mathbb{E}\big[\mathbb{E}[\boldsymbol{1}_{\{\alpha^{t}_{m_k}\neq w_k\}}|\hat{X}^{t}_{m_k}]\big]\cr
&=\sum_{t=1}^T\sum_{k=1}^n \mathbb{E}\big[\mathbb{P}\{\alpha^{t}_{m_k}\neq w_k|\hat{X}^{t}_{m_k}\}\big]\cr
&=\sum_{t=1}^T\sum_{k=1}^n \mathbb{E}\big[1-\hat{X}^{t}_{m_kw_k}\big]\cr
&\leq nt_n+\sum_{t=t_n+1}^{T}\sum_{k=1}^n \mathbb{E}\big[1-\hat{X}^{t}_{m_kw_k}\big]
\end{align}
On the other hand, by conditioning on the event $\Omega$, and using the fact that $\mathbb{P}\{\bar{\Omega}\}\leq \delta$, we have
\begin{align}\label{eq:expect-condition-omega}
\mathbb{E}\big[1-\hat{X}^{t}_{m_kw_k}\big]&=\mathbb{P}\{\Omega\}\mathbb{E}\big[1-\hat{X}^{t}_{m_kw_k}|\Omega\big]+\mathbb{P}\{\bar{\Omega}\}\mathbb{E}\big[1-\hat{X}^{t}_{m_kw_k}|\bar{\Omega}\big]\cr
&\leq \mathbb{E}\big[1-\hat{X}^{t}_{m_kw_k}|\Omega\big]+\delta 
\end{align}
Substituting \eqref{eq:expect-condition-omega} into \eqref{eq:expected-regret-first} and using Lemma \ref{lemm:key-hierarchical}, we obtain
\begin{align}
\mathbb{E}[R(T)]&\leq nt_n+\sum_{t=t_n+1}^{T}\sum_{k=1}^n \mathbb{E}\big[1-\hat{X}^{t}_{m_kw_k}|\Omega\big]+\delta nT\cr
&\leq O\Big(\frac{n^2M}{c^{n+1}}\log^2(\frac{M}{c})\Big)+\sum_{t=t_n+1}^T\sum_{k=1}^n \big(1-\frac{\gamma^t}{n}-\frac{1-\gamma^t}{1+(n-1)e^{-\eta^t a_k t}}\big)+\delta nT\cr
&\leq O\Big(\frac{n^2M}{c^{n+1}}\log^2(\frac{M}{c})\Big)+\sum_{t=t_n+1}^T\sum_{k=1}^n\frac{(n-1)e^{-\eta^t a_k t}+\gamma_t}{1+(n-1)e^{-\eta^t a_k t}}+\delta nT\cr
&\leq O\Big(\frac{n^2M}{c^{n+1}}\log^2(\frac{M}{c})\Big)+n^2\sum_{t=1}^T e^{-\eta^t a_{n} t}+n\sum_{t=1}^T\gamma_t+\delta nT\cr
&\leq O\Big(\frac{n^2M}{c^{n+1}}\log^2(\frac{M}{c})\Big)+\frac{2n^2}{c^2}+nM\log^2 T+\delta nT\cr
&=O\Big(\frac{n^2M}{c^{n+1}}\log^2(\frac{M}{c})+nM\log^2 T+\delta nT\Big), 
\end{align}
where the fourth inequality uses the fact that $a_{n}:=\min_{k\in [n]}\{a_k\}$, and the last inequality is obtained by choosing $\gamma^t=M\frac{\log t}{t}, \eta^t=\frac{1}{\sqrt{t}}$, and noting that $c\leq a_n$. Finally, since $\delta\in (0,1)$ can be chosen arbitrarily, by choosing $\delta=\frac{1}{T}$ and noting that $M=\frac{4n}{c}\log \frac{1}{\delta}=\frac{4n}{c}\log T$, we can write
\begin{align}\nonumber
\mathbb{E}[R(T)]&\leq O\Big(\frac{n^2M}{c^{n+1}}\log^2(\frac{M}{c})+nM\log^2 T+\delta nT\Big) \cr
&=O\Big(\frac{n^3}{c^{n+2}}\log T\cdot\log^2(\log T)+\frac{n^2}{c}\log^3 T\Big).
\end{align}
\end{proof}

\section{Exponential Local Convergence for General Matching Markets} \label{sec:dual-mirror}

In this section, we consider learning a stable matching in general matching markets (i.e., without any hierarchical assumption) and show that the same decentralized and uncoordinated algorithm EXP (Algorithm \ref{alg:main}) converges locally at an exponential rate to a stable matching when players' strategies get sufficiently close to one of the stable matchings. To that end, we first consider the following technical lemma, which shows that if a strategy profile is sufficiently close to a stable matching, then the reward of choosing an action according to that stable matching is strictly the best decision. 

\smallskip   
\begin{lemma}\label{lemm:local_neighbor}
Let $X^*$ be a pure NE of the stable matching game and set $c=\frac{1}{8}\min\{\Delta,\mu_{\min}\}$. For any strategy profile $x$ that satisfies $\|x-X^*\|_1\leq \frac{c}{\mu_{\max}}$, we have $v_{mw^*}(x)-v_{mw}(x)>c\ \forall w\neq w^*$, where $w^*$ is the woman that $m$ is matched to her under the pure NE $X^*$. 
\end{lemma}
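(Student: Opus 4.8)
The plan is to localize the estimate: first compute the gap $v_{mw^*}(\cdot)-v_{mw}(\cdot)$ exactly at the vertex $x=X^*$, then show each coordinate $v_{mw}(\cdot)$ is Lipschitz in $x$, so a small $\ell_1$ perturbation cannot erode the gap below $c$. Since $X^*$ is a pure NE, Theorem \ref{thm:NE-Stable} identifies it with the characteristic vector of a stable matching, so $X^*_{mw^*}=1$ and $X^*_{m'w^*}=0$ for all $m'\neq m$; hence every factor in $\prod_{k>_{w^*}m}(1-X^*_{kw^*})$ is $1$ and $v_{mw^*}(X^*)=\mu_{mw^*}\geq \mu_{\min}$.

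For the vertex gap, I would fix a rival $w\neq w^*$, let $m'$ be $w$'s partner under $X^*$, and split on $m$'s preference. If $\mu_{mw}>\mu_{mw^*}$, then stability forbids $(m,w)$ from being a blocking pair, which forces $m'>_w m$; the factor $1-X^*_{m'w}=0$ then appears in the product, giving $v_{mw}(X^*)=0$ and a gap of $\mu_{mw^*}\geq \mu_{\min}$. If instead $\mu_{mw}<\mu_{mw^*}$, then since the product is at most one we have $v_{mw}(X^*)\leq \mu_{mw}$, and the gap is at least $\mu_{mw^*}-\mu_{mw}\geq \Delta$. Either way, $v_{mw^*}(X^*)-v_{mw}(X^*)\geq \min\{\Delta,\mu_{\min}\}=8c$.

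For the perturbation, I would invoke the elementary telescoping bound $|\prod_i(1-a_i)-\prod_i(1-b_i)|\leq \sum_i|a_i-b_i|$ (valid for $a_i,b_i\in[0,1]$) together with $\mu_{mw}\leq \mu_{\max}$ to get $|v_{mw}(x)-v_{mw}(X^*)|\leq \mu_{\max}\sum_{k>_w m}|x_{kw}-X^*_{kw}|$. Crucially, $v_{mw^*}$ and $v_{mw}$ depend on the \emph{disjoint} columns $w^*$ and $w$, so the two perturbation sums together are bounded by $\|x-X^*\|_1\leq c/\mu_{\max}$. Thus the combined deviation $|v_{mw^*}(x)-v_{mw^*}(X^*)|+|v_{mw}(x)-v_{mw}(X^*)|$ is at most $c$, and writing the gap at $x$ as the vertex gap plus these two terms yields $v_{mw^*}(x)-v_{mw}(x)\geq 8c-c=7c>c$, as claimed.

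The one step needing genuine care is the vertex computation: correctly using the no-blocking-pair property in both directions of $m$'s preference to convert stability into the two clean lower bounds $\mu_{\min}$ and $\Delta$. Everything afterward is a routine Lipschitz estimate, and the comfortable $8c$ headroom means the radius $c/\mu_{\max}$ need not be tuned at all.
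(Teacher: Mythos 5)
Your proposal is correct and follows essentially the same route as the paper's proof: establish the gap at the pure NE $X^*$ via a stability/no-blocking-pair case analysis (yielding the lower bounds $\mu_{\min}$ and $\Delta$), then show that an $\ell_1$-perturbation of size $c/\mu_{\max}$ changes each $v_{mw}(\cdot)$ by at most $O(c)$ via a product (Lipschitz) bound. The only differences are cosmetic — you organize the cases by $m$'s preference rather than by the value of $v_{mw}(X^*)$, and you perturb both columns symmetrically using disjointness of the coordinates, whereas the paper perturbs only the $w^*$ column and bounds $v_{mw}(x)$ directly — but the argument and constants go through either way.
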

\begin{proof}
Please see Appendix \ref{sec:local_neighbor} for the proof.\end{proof}

Next, we consider the following lemma that will be used to capture the local behavior of the dynamics of Algorithm \ref{alg:main} around a pure NE. The proof of this lemma can be found in Appendix \ref{sec:proof-lemma}.  

\smallskip
\begin{lemma}\label{lemm:tech-bound}
Suppose $X^*$ is a pure NE and $\{X^t\}$ be the sequence of iterates generated by Algorithm \ref{alg:main}. Moreover, let $A^{t-1}= \sum_{\tau=1}^{t-1}\big(v_{mw^*}(X^{\tau})-v_{mw}(X^{\tau})\big)$, where $w^*$ is the woman that man $m$ is matched to her under the pure NE $X^*$. Then, the following statements hold:
\begin{itemize}
\item i) If $\|X^t-X^*\|_1\leq \frac{c}{250n^2}$, then $\eta^{t}\hat{L}^{t-1}_{mw^*}-\eta^{t}\hat{L}^{t-1}_{mw}\ge \ln (\frac{n^2}{2c})+6 \ \forall m,w\neq w^*$.
\item ii) If $\eta^{t}\hat{L}^{t-1}_{mw^*}-\eta^{t}\hat{L}^{t-1}_{mw}\ge \ln(\frac{n^2}{2c})\ \forall m,w\neq w^*$, then $\|X^{t}-X^*\|_1\leq c$.
\item iii) Let $t_0$ be the first time such that $1/\eta^{t_0+1}-1/\eta^{t_0}\leq c/(\ln(\frac{n^2}{2c})+3)$. If for some $t\ge t_0$ we have $\eta^{t} A^{t-1}\ge \ln(\frac{n^2}{2c})+3$ and $v_{mw^*}(X^{t})-v_{mw}(X^{t})\ge c\ \forall m,w\neq w^*$, then $\eta^{t+1} A^{t}\ge \ln(\frac{n^2}{2c})+3$.
\end{itemize} 
\end{lemma}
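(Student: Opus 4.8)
The three parts all hinge on inverting the logit map \eqref{eq:logit}, which for any two women $w^*,w$ gives the identity $\eta^t(\hat{L}^{t-1}_{mw^*}-\hat{L}^{t-1}_{mw})=\ln(X^t_{mw^*}/X^t_{mw})$. Parts (i) and (ii) are the two directions translating between ``$X^t$ is close to the pure NE $X^*$ in $\ell_1$'' and ``the score gaps $\eta^t(\hat{L}^{t-1}_{mw^*}-\hat{L}^{t-1}_{mw})$ are large,'' while part (iii) is a self-improvement step showing that the accumulated gradient gap $A^{t-1}$ stays above the threshold $B:=\ln(\frac{n^2}{2c})+3$ once the stepsize has decayed enough. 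The plan is to prove them in the order (i), (ii), (iii).

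For part (i), since $X^*$ is the characteristic vector of a matching, $X^*_{mw^*}=1$ and $X^*_{mw}=0$ for $w\neq w^*$, so $\|X^t-X^*\|_1\leq\frac{c}{250n^2}$ forces $X^t_{mw^*}\geq 1-\frac{c}{250n^2}$ and $X^t_{mw}\leq\frac{c}{250n^2}$ coordinatewise. Plugging into the logit identity,
\begin{align}\nonumber
\eta^t(\hat{L}^{t-1}_{mw^*}-\hat{L}^{t-1}_{mw})=\ln\frac{X^t_{mw^*}}{X^t_{mw}}\geq\ln\frac{1-c/(250n^2)}{c/(250n^2)}=\ln\Big(\frac{250n^2}{c}-1\Big),
\end{align}
and since $250>e^6/2\approx 202$, the right-hand side exceeds $\ln(\frac{n^2}{2c})+6$; the constant $250$ is exactly what is needed to absorb the $e^6$ slack. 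For part (ii) I would run this backward: the hypothesis gives $X^t_{mw}\leq\frac{2c}{n^2}X^t_{mw^*}$ for every $w\neq w^*$, so summing over $w$ yields $1-X^t_{mw^*}=\sum_{w\neq w^*}X^t_{mw}\leq\frac{2c(n-1)}{n^2}$; the per-man $\ell_1$ error is $2(1-X^t_{mw^*})$, and summing over the $n$ men gives $\|X^t-X^*\|_1=O(c)$, with the threshold constant chosen so the bound is at most $c$.

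For part (iii), I would decompose $A^t=A^{t-1}+(v_{mw^*}(X^t)-v_{mw}(X^t))$ and lower-bound each piece. From $\eta^t A^{t-1}\geq B$ we get $A^{t-1}\geq B/\eta^t>0$, and the hypothesis $v_{mw^*}(X^t)-v_{mw}(X^t)\geq c$ gives
\begin{align}\nonumber
\eta^{t+1}A^t\geq\frac{\eta^{t+1}}{\eta^t}B+\eta^{t+1}c.
\end{align}
Requiring the right-hand side to be at least $B$ reduces, after dividing by $\eta^{t+1}$, to $\big(\frac{1}{\eta^{t+1}}-\frac{1}{\eta^t}\big)B\leq c$, i.e.\ $\frac{1}{\eta^{t+1}}-\frac{1}{\eta^t}\leq c/B$. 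This is precisely the defining inequality of $t_0$; moreover, since $\eta^t=1/\sqrt{t}$, the increment $\frac{1}{\eta^{t+1}}-\frac{1}{\eta^t}=\sqrt{t+1}-\sqrt{t}=(\sqrt{t+1}+\sqrt{t})^{-1}$ is strictly decreasing in $t$, so the inequality holds for every $t\geq t_0$, closing the argument.

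I expect the genuinely delicate step to be part (iii), not the constant-chasing in (i)--(ii). The point is to see that the loss incurred by the decaying stepsize ratio $\eta^{t+1}/\eta^t<1$ on the already-accumulated term is exactly offset by the fresh gain of $c$ per round, and that the monotonicity of $\sqrt{t+1}-\sqrt{t}$ lets $t_0$ serve as a single once-and-for-all threshold rather than a per-round condition. Parts (i) and (ii) are then routine manipulations of the logit map once the correct magnitudes of the constants $250$ and $e^6$ are tracked.
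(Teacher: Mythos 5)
Your proof is correct and follows essentially the same route as the paper's: parts (i)--(ii) invert the logit map exactly as the paper does (your constant $250$ versus the paper's $500$ reflects only whether one exploits $\|X^t_m-X^*_m\|_1=2(1-X^t_{mw^*})$ on the hypothesis side, and both arguments share the same harmless looseness of landing at $4c$ rather than $c$ in part (ii)), while part (iii) is the identical one-step self-improvement bound, which you streamline slightly by skipping the explicit $\beta$-decomposition. The only slip is invoking $\eta^t=1/\sqrt{t}$ to argue that the increment condition $1/\eta^{t+1}-1/\eta^t\le c/(\ln(\frac{n^2}{2c})+3)$ persists for all $t\ge t_0$ --- that specific stepsize belongs to Section \ref{sec:hierarchical}, whereas here $\{\eta^t\}$ is a general decreasing sequence --- but the paper's own proof makes the same implicit persistence assumption, so nothing essential is lost.
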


Now we are ready to state the main result of this section. 

\smallskip
\begin{theorem}\label{thm:local-exponential}
Fix an arbitrary $\delta\in (0, 1)$, and suppose each player follows Algorithm \ref{alg:main} with some mixing sequence $\{\gamma^t\}$ and decreasing stepsize sequence $\{\eta^t\}$ that satisfie the following conditions:
\begin{align}\label{eq:loca-step-condition}
\Big(\sum_{\tau=1}^{t-1} \frac{\gamma^{t-1}}{\gamma^{\tau}}\Big)^{-1}\leq \eta^{t}\leq \min \left\{\Big(2n^2\sum_{\tau=1}^{t-1} \gamma^{\tau}\Big)^{-1},\ \Big(8n\ln(\frac{n\pi}{\sqrt{\delta}}t)\sum_{\tau=1}^{t-1} \frac{1}{\gamma^{\tau}}\Big)^{-1/2}\right\}. 
\end{align}
Let $t_0$ be the first time such that $1/\eta^{t_0+1}-1/\eta^{t_0}\leq c/(\ln(\frac{n^2}{2c})+3)$ and $\|X^{t_0}-X^*\|\leq \frac{c}{250n^2}$. Then with probability at least $1-\delta$, the sequence $\{X^{t}\}$ converges to the pure NE $X^*$ exponentially fast, i.e.,
\begin{align}\nonumber
\mathbb{P}\Big\{\|X^{t+1}-X^*\|_1\leq 41 n\exp(-ct\eta^{t+1})\ \forall t\ge t_0\ \big|\ \|X^{t_0}-X^*\|_1\leq \frac{c}{250n^2} \Big\}\ge 1-\delta.
\end{align}  
\end{theorem}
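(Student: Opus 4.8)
The plan is to track the log-ratios $\eta^t(\hat L^{t-1}_{mw^*}-\hat L^{t-1}_{mw})$ that drive the logit map \eqref{eq:logit}, and show that once $X^{t_0}$ is close to $X^*$ these ratios grow linearly in $t$, forcing each player's mass onto $w^*$ exponentially fast. The starting point is the decomposition
\[
\eta^t(\hat L^{t-1}_{mw^*}-\hat L^{t-1}_{mw})=\eta^t A^{t-1}+\eta^t D^{t-1}_{mw}+\eta^t\big(S^{t-1}_{mw^*}-S^{t-1}_{mw}\big),
\]
where $A^{t-1}$ is the cumulative true-gradient gap of Lemma \ref{lemm:tech-bound}, $D^{t-1}_{mw}=\sum_{\tau}[(v_{mw^*}(\hat X^\tau)-v_{mw^*}(X^\tau))-(v_{mw}(\hat X^\tau)-v_{mw}(X^\tau))]$ is the drift from evaluating gradients at the mixed iterate $\hat X^\tau$ rather than $X^\tau$, and $S^{t-1}_{mw}=\sum_\tau(\hat v^\tau_{mw}-v_{mw}(\hat X^\tau))$ is the martingale noise of the importance-weighted feedback. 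Since $\|\hat X^\tau-X^\tau\|_\infty=O(\gamma^\tau)$, each summand of $D^{t-1}_{mw}$ is $O(n\gamma^\tau)$, so the upper bound $\eta^t\le(2n^2\sum_{\tau<t}\gamma^\tau)^{-1}$ in \eqref{eq:loca-step-condition} forces $|\eta^t D^{t-1}_{mw}|$ below a small constant.

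For the noise term I would apply Freedman's inequality (Lemma \ref{lemm:freedman}) to the martingale difference $\hat v^\tau_{mw}-v_{mw}(\hat X^\tau)$, whose increments and conditional second moments are both at most $n/\gamma^\tau$ by Lemma \ref{lemm:unbiased-martingale}. With predictable quadratic variation at most $n\sum_{\tau<t}1/\gamma^\tau$, the second upper bound $\eta^t\le(8n\ln(\tfrac{n\pi}{\sqrt\delta}t)\sum_{\tau<t}1/\gamma^\tau)^{-1/2}$ is exactly calibrated so that $\mathbb P\{\eta^t|S^{t-1}_{mw}|\ge 1\}\le 2(\tfrac{n\pi}{\sqrt\delta}t)^{-4}$; a union bound over all $t\ge t_0$ and all pairs then defines a good event $\mathcal G$ with $\mathbb P(\mathcal G)\ge 1-\delta$ on which $|\eta^t(S^{t-1}_{mw^*}-S^{t-1}_{mw})|$ is uniformly small. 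Consequently, on $\mathcal G$ the drift and noise together contribute at most $3$, so $\eta^t A^{t-1}$ and $\eta^t(\hat L^{t-1}_{mw^*}-\hat L^{t-1}_{mw})$ differ by at most $3$.

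The heart of the argument is then a deterministic bootstrap on $\mathcal G$, by induction on $t\ge t_0$, maintaining the invariant $\eta^t A^{t-1}\ge\ln(\tfrac{n^2}{2c})+3$ for all $m,w\neq w^*$. For the base case, $\|X^{t_0}-X^*\|_1\le \tfrac{c}{250n^2}$ together with Lemma \ref{lemm:tech-bound}(i) gives $\eta^{t_0}(\hat L^{t_0-1}_{mw^*}-\hat L^{t_0-1}_{mw})\ge\ln(\tfrac{n^2}{2c})+6$, and peeling off the drift and noise ($\le 3$) yields $\eta^{t_0}A^{t_0-1}\ge\ln(\tfrac{n^2}{2c})+3$. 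For the inductive step, the invariant at $t$ gives $\eta^t(\hat L^{t-1}_{mw^*}-\hat L^{t-1}_{mw})\ge\ln(\tfrac{n^2}{2c})$, so Lemma \ref{lemm:tech-bound}(ii) yields $\|X^t-X^*\|_1\le c\le \tfrac{c}{\mu_{\max}}$ (using $\mu_{\max}\le 1$); Lemma \ref{lemm:local_neighbor} then gives $v_{mw^*}(X^t)-v_{mw}(X^t)>c$, and since $t\ge t_0$ meets the stepsize-decay condition, Lemma \ref{lemm:tech-bound}(iii) propagates the invariant to $t+1$. To upgrade this into exponential decay, I would note that the bootstrap guarantees $v_{mw^*}(X^\tau)-v_{mw}(X^\tau)>c$ for every $\tau\ge t_0$ while $A^{t_0-1}>0$, whence $A^t\ge c(t-t_0+1)\ge ct$ up to an additive $O(t_0)$ absorbed by the constant. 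Plugging this into the decomposition and the logit map gives, on $\mathcal G$,
\[
\frac{X^{t+1}_{mw}}{X^{t+1}_{mw^*}}=\exp\!\big(-\eta^{t+1}(\hat L^{t}_{mw^*}-\hat L^{t}_{mw})\big)\le e^{3}\exp(-ct\,\eta^{t+1}),
\]
so $1-X^{t+1}_{mw^*}\le(n-1)e^{3}\exp(-ct\eta^{t+1})$, and collecting the off-equilibrium mass yields $\|X^{t+1}-X^*\|_1\le 41\,n\exp(-ct\eta^{t+1})$ (using $2e^3<41$). As this holds on $\mathcal G$, whose conditional probability is at least $1-\delta$, the claim follows.

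The main obstacle, I expect, is the simultaneous control of the two error sources inside the two-sided stepsize window \eqref{eq:loca-step-condition}: the invariant is stated for the true-gradient sum $A^{t-1}$, whereas the iterates $X^t$ are driven by the noisy scores $\hat L^{t-1}$, so everything hinges on keeping the drift and martingale fluctuations below the $+3$ buffer built into Lemma \ref{lemm:tech-bound}, uniformly over all $t\ge t_0$. The lower bound on $\eta^t$ prevents the stepsize from decaying so fast that $\eta^{t+1}A^t$ slips below threshold in the inductive step, while the two upper bounds independently tame drift and noise; checking that these constraints are mutually consistent and that the Freedman tail remains summable is the delicate part of the proof.
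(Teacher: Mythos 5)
Your proof follows essentially the same route as the paper's: the same decomposition of the score differences into the true cumulative gap $A^{t-1}$, a drift term tamed by the first upper bound on $\eta^t$, and Freedman-controlled martingale noise tamed by the second, followed by the identical bootstrap through Lemma \ref{lemm:local_neighbor} and parts (i)--(iii) of Lemma \ref{lemm:tech-bound} and the same conversion of the invariant into exponential decay of the off-equilibrium mass. The only quibbles are cosmetic (the Freedman tail comes out as $2\delta/(n\pi t)^2$ rather than a fourth power, and the per-step drift bound is $2n^2\gamma^{\tau}$ rather than $O(n\gamma^{\tau})$), and neither affects the argument.
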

\begin{proof}
Fix an arbitrary man $m$, and assume $X^*_{mw^*}=1$. For any $w\neq w^*$, we have
\begin{align}\label{eq:prem:L-S}
\eta^t&\hat{L}^{t-1}_{mw^*}-\eta^t\hat{L}^{t-1}_{mw}=\eta^t\sum_{\tau=1}^{t-1}\big(\hat{v}^{\tau}_{mw^*}-\hat{v}^{\tau}_{mw}\big)\cr
&=\eta^t\sum_{\tau=1}^{t-1}\big(v_{mw^*}(\hat{X}^{\tau})-v_{mw}(\hat{X}^{\tau})\big)+\eta^t\sum_{\tau=1}^{t-1}\big(\hat{v}^{\tau}_{mw^*}-v_{mw^*}(\hat{X}^{\tau})\big)-\eta^t\sum_{\tau=1}^{t-1}\big(\hat{v}^{\tau}_{mw}-v_{mw}(\hat{X}^{\tau})\big)\cr
&=\eta^t\sum_{\tau=1}^{t-1}\big(v_{mw^*}(\hat{X}^{\tau})-v_{mw}(\hat{X}^{\tau})\big)+\eta^t S^{t-1}_{mw^*}-\eta^t S^{t-1}_{mw},
\end{align} 
where $S^{t-1}_{mw}:=\sum_{\tau=1}^{t-1} \big(\hat{v}^{\tau}_{mw}-v_{mw}(\hat{X}^{\tau})\big)$. Moreover, for any $m$ and $w$, using the mean-value theorem, there exists a $\theta=\lambda \hat{X}^{\tau}+(1-\lambda)X^{\tau}$ such that 
\begin{align}\nonumber
|v_{mw}(\hat{X}^{\tau})-v_{mw}(X^{\tau})|&=|\nabla v_{mw}(\theta)\cdot (\hat{X}^{\tau}-X^{\tau})|\leq \|\nabla v_{mw}(\theta)\|_{\infty}\|\hat{X}^{\tau}-X^{\tau}\|_1\cr
&\leq n \|\hat{X}^{\tau}-X^{\tau}\|_1=n\gamma^{\tau}\sum_m\|X_m^{\tau}-\frac{1}{n}\boldsymbol{1}\|_1 \leq n^2\gamma^{\tau}.
\end{align}
Substituting this relation into \eqref{eq:prem:L-S} and using the triangle inequality, we can write
\begin{align}\label{eq:L-S}
\big|\eta^t\hat{L}^{t-1}_{mw^*}-\eta^t\hat{L}^{t-1}_{mw}-\eta^t\sum_{\tau=1}^{t-1}\big(v_{mw^*}(X^{\tau})-v_{mw}(X^{\tau})\big)\big|&\leq \eta^t |S^{t-1}_{mw^*}|+\eta^t |S^{t-1}_{mw}|+2n^2\eta^t\sum_{\tau=1}^{t-1} \gamma^{\tau}\cr
&\leq \eta^t |S^{t-1}_{mw^*}|+\eta^t |S^{t-1}_{mw}|+1,
\end{align}
where the last inequality follows from the upper bound on the choice of stepsize $\eta^t$. 

Next, notice that according to Lemma \ref{lemm:unbiased-martingale}, $\{\hat{v}^{\tau}_{mw}-v_{mw}(\hat{X}^{\tau})\}$ is a martingale difference sequence that almost surely satisfies 
\begin{align}\nonumber
|\hat{v}^{\tau}_{mw}-v_{mw}(\hat{X}^{\tau})|\leq \frac{n}{\gamma^{\tau}},\ \ \ \ \mathbb{E}[\big(\hat{v}^{\tau}_{mw}-v_{mw}(\hat{X}^{\tau})\big)^2|\mathcal{F}^{\tau-1}]\leq \mathbb{E}[(\hat{v}_{mw}^{\tau})^2|\mathcal{F}^{\tau-1}]\leq \frac{n}{\gamma^{\tau}}.
\end{align}
Therefore, by taking $u=\frac{n}{\gamma^{t-1}}$, $v^2=\sum_{\tau=1}^{t-1} \frac{n}{\gamma^{\tau}}$, and $x=\frac{1}{\eta^t}$ in Lemma \ref{lemm:freedman}, we obtain
\begin{align}\nonumber
\mathbb{P}\{\eta^t |S_{mw}^{t-1}|\ge 1\}\leq 2\exp\Big(-\frac{(1/\eta^t)^2}{2n(\sum_{\tau=1}^{t-1} \frac{1}{\gamma^{\tau}}+\frac{1}{\eta^t\gamma^{t-1}})}\Big)\leq 2\exp\big(\frac{-(1/\eta^t)^2)}{4n\sum_{\tau=1}^{t-1} \frac{1}{\gamma^{\tau}}}\big),
\end{align} 
where the second inequality holds because using the lower bound on $\eta^t$ we have $\frac{1}{\eta^{t}\gamma^{t-1}}\leq \sum_{\tau=1}^{t-1} \frac{1}{\gamma^{\tau}}$. Since by the choice of stepsize $\frac{(1/\eta^t)^2}{\sum_{\tau=1}^{t-1} 1/\gamma^{\tau}}\ge 8n\ln(\frac{n\pi}{\sqrt{\delta}}t)$, we get $\mathbb{P}\{\eta^t |S_{mw}^{t-1}|\ge 1\}\leq \frac{2\delta}{(n\pi t)^2}$. In particular, using the union bound, the probability that $\eta^t |S_{mw}^{t-1}|< 1$ for any $t\ge 1$ and any $m, w$, is at least
\begin{align}\label{eq:S-m-w-union}
\mathbb{P}\{\eta^t |S_{mw}^{t-1}|< 1\  \forall t\ge 1, \forall m, w\}\geq 1-\sum_{t=1}^{\infty} \frac{2n^2\delta}{(n\pi t)^2}> 1-\delta.
\end{align}
Let $A^{t-1}= \sum_{\tau=1}^{t-1}\big(v_{mw^*}(X^{\tau})-v_{mw}(X^{\tau})\big)$. Using \eqref{eq:L-S} and \eqref{eq:S-m-w-union}, with probability at least $1-\delta$, we have
\begin{align}\label{eq:L-A}
|\eta^t\hat{L}^{t-1}_{mw^*}-\eta^t\hat{L}^{t-1}_{mw}-\eta^{t}A^{t-1}|\leq 3, \ \forall t\ge 1,\ \forall m, w.
\end{align}    

Next, let us assume $\|X^t-X^*\|_1\in \frac{c}{250n^2}$. Then, for any $m, w\neq w^*$, we have $v_{mw^*}(X^{t})-v_{mw}(X^{t})\ge c$ and $\eta^{t}\hat{L}^{t-1}_{mw^*}-\eta^{t}\hat{L}^{t-1}_{mw}\ge \ln(\frac{n^2}{2c})+6$, by Lemma \ref{lemm:local_neighbor} and Lemma \ref{lemm:tech-bound} (part i), respectively. Using relation \eqref{eq:L-A}, it means that $\eta^{t}A^{t-1} \ge \ln(\frac{n^2}{2c})+3$, and thus by Lemma \ref{lemm:tech-bound} (part iii), we have $\eta^{t+1} A^{t}\ge \ln(\frac{n^2}{2c})+3$. This in view of Lemma \ref{lemm:tech-bound} (part ii) shows that $\|X^{t+1}-X^*\|_1\leq c$. Similarly, Lemma \ref{lemm:local_neighbor} implies $v_{mw^*}(X^{t+1})-v_{mw}(X^{t+1})\ge c\ \forall m,w\neq w^*$, which in view of Lemma \ref{lemm:tech-bound} (part iii) shows that $\eta^{t+2} A^{t+1}\ge \ln(\frac{n^2}{2c})+3$. Hence, using relation \eqref{eq:L-A}, $\eta^{t+2}\hat{L}^{t+1}_{mw^*}-\eta^{t+2}\hat{L}^{t+1}_{mw}\ge \ln(\frac{n^2}{2c})\ \forall m,w\neq w^*$, and thus $\|X^{t+2}-X^*\|_1\leq c$ by Lemma \ref{lemm:tech-bound} (part ii). 

The above inductive argument shows that if $\|X^{t_0}-X^*\|\leq \frac{2c}{n^2}$ for some $t_0\ge 1$ such that $1/\eta^{t_0+1}-1/\eta^{t_0}\leq c/(\ln(\frac{n^2}{2c})+3)$, then with probability at least $1-\delta$, we have $\|X^{t}-X^*\|\leq c\ \forall t\ge t_0$, and in particular, using Lemma \ref{lemm:local_neighbor}, $v_{mw^*}(X^{t})-v_{mw}(X^{t})\ge c\ \forall m,w, t\ge t_0$ . Therefore, with probability at least $1-\delta$, for any $m, w\neq w^*, t\ge t_0$, we have 
\begin{align}\nonumber
\frac{X^{t+1}_{mw^*}}{X^{t+1}_{mw}}=\exp(\eta^{t+1}\hat{L}^t_{mw^*}-\eta^{t+1}\hat{L}^t_{mw})\ge \exp(\eta^{t+1}A^t-3)\ge \exp(ct\eta^{t+1}-3).  
\end{align}
Therefore, $\|X^{t+1}_{m}-X^*_{m}\|_1=2(1-X^{t+1}_{mw^*})\leq 2n\exp(-ct\eta^{t+1}+3)$, which shows that
\begin{align}\nonumber
\mathbb{P}\Big\{\|X^{t+1}-X^*\|_1\leq 2n\exp(-ct\eta^{t+1}+3)\ \forall t\ge t_0\big|\ \|X^{t_0}-X^*\|_1\leq \frac{c}{250n^2}\Big\}\ge 1-\delta.
\end{align}
\end{proof}

\begin{remark}
    In the statement of Theorem \ref{thm:local-exponential}, the condition that $t\ge t_0$  is not restrictive because if Algorithm \ref{alg:main} converges to a pure NE, then eventually for some $t_0$, we have $1/\eta^{t_0+1}-1/\eta^{t_0}\leq c/(\ln (\frac{n^2}{2c})+3)$ and  $\|X^{t_0}-X^*\|\leq \frac{c}{250n^2}$. In particular, $t_0$ is a constant that scales polynomially in terms of other parameters (see, e.g., Remark \ref{re:specific-stepsize}). We note that it is possible to further relax this condition by a constant scaling of $\eta^t$, i.e., by setting $\eta^t\leftarrow \frac{1}{c}(\ln (\frac{n^2}{2c})+3)\eta^t$ and rescaling all other parameters such as the mixing sequence $\gamma^t$ accordingly. However, that requires the players to know $c=\frac{1}{8}\min\{\Delta,\mu_{\min}\}$ to choose their stepsize $\eta^t$, while knowing $c$ is not required in the stepsize condition given in \eqref{eq:loca-step-condition}.    
\end{remark}

\begin{remark}\label{re:specific-stepsize}
In fact, there exist feasible choices for $\eta^{t}$ and $\gamma^t$ that satisfy the stepsize condition \eqref{eq:loca-step-condition} in Theorem \ref{thm:local-exponential}. For instance, if for some $\alpha,\beta\in (0,1), \ \alpha> \max\{1-\beta, \frac{1+\beta}{2}\}$, we choose $\eta^{t}=\Theta\big(\frac{1}{\ln( t/\sqrt{\delta})t^{\alpha}}\big)$ and $\gamma^{t}=\Theta(\frac{1}{t^{\beta}})$, then condition \eqref{eq:loca-step-condition} is satisfied for any $t_0\ge (\frac{\alpha b}{c})^{1/(1-\alpha)}$, where $b=\ln(\frac{n^2}{2c})+3$. More specifically, by choosing $\gamma^{t}=\frac{1}{t^{1/3}}$ and $\eta^{t}=\frac{1}{t^{3/4}}$, the exponential convergence rate of Theorem \ref{thm:local-exponential} would be in the order of $\exp(-ct^{1/4})$ for any $t_0\ge (\frac{b}{c})^4$. We note that condition \eqref{eq:loca-step-condition} provides only a sufficient condition to achieve local exponential convergence and we did not attempt to fully characterize all feasible choices of stepsize/mixing parameters. However, there could be other choices of stepsize that do not satisfy condition \eqref{eq:loca-step-condition} while still achieving local exponential convergence.
\end{remark}

Although Theorem \ref{thm:local-exponential} shows that Algorithm \ref{alg:main} achieves an exponential convergence rate, its convergence guarantee holds in a local sense, i.e., when the dynamics get sufficiently close to a neighborhood of a stable matching.
In the following theorem, we complement this result by showing that if the dynamics of Algorithm \ref{alg:main} converge with positive probability, it must converge to a mixed NE of the stable matching game. 

\smallskip
\begin{theorem}\label{thm:positive-prob-mixed-NE}
Under the same stepsize assumptions of Theorem \ref{thm:local-exponential}, if Algorithm \ref{alg:main} converges to a limit point $X^*$ with positive probability, then $X^*$ must be a NE of the stable matching game. 
\end{theorem}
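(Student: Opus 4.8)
The plan is to argue by contradiction, exploiting that, since each player's payoff is linear in his own strategy (properties (i)--(ii)), a profile $X^*$ is a NE if and only if for every man $m$ the support of $X^*_m$ is contained in $\argmax_{w} v_{mw}(X^*)$. So I would suppose that $X^*$ is a limit point reached with probability $\mathbb{P}(C)>0$ but is \emph{not} a NE. Then there exist a man $m$, a woman $w^-$ with $X^*_{mw^-}>0$, and a woman $w^+\in\argmax_{w} v_{mw}(X^*)$ with $\delta_0:=v_{mw^+}(X^*)-v_{mw^-}(X^*)>0$. The goal is then to show that, on $C$, the mass $X^t_{mw^-}$ is driven to $0$, contradicting $X^t_{mw^-}\to X^*_{mw^-}>0$.

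First I would assemble the two facts I need on a single positive-probability set of paths. On $C$ we have $X^t\to X^*$, and since $\hat{X}^t=(1-\gamma^t)X^t+\tfrac{\gamma^t}{n}\boldsymbol{1}$ with $\gamma^t\to 0$, also $\hat{X}^t\to X^*$; continuity of the polynomials $v_{mw}$ then gives a (path-dependent) time $T_0$ with $v_{mw^+}(\hat{X}^\tau)-v_{mw^-}(\hat{X}^\tau)\ge \delta_0/2$ for all $\tau\ge T_0$. For the noise, $\{\hat{v}^\tau_{mw}-v_{mw}(\hat{X}^\tau)\}$ is a bounded martingale difference sequence by Lemma \ref{lemm:unbiased-martingale}, so I would reuse the Freedman estimate behind \eqref{eq:S-m-w-union}: the stepsize condition \eqref{eq:loca-step-condition} makes $\sum_t \mathbb{P}\{\eta^t|S^{t-1}_{mw}|\ge 1\}$ summable, so Borel--Cantelli upgrades the $1-\delta$ bound to an \emph{almost sure} statement, namely that a.s. there is a finite $T^*$ with $\eta^t|S^{t-1}_{mw}|<1$ for all $t\ge T^*$ and all $m,w$. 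This control therefore holds on almost all of $C$, letting me fix a single path on which both it and $X^t\to X^*$ occur.

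Then I would decompose the accumulated score gap as in \eqref{eq:prem:L-S}, writing $\hat{v}^\tau_{mw}=v_{mw}(\hat{X}^\tau)+(\hat{v}^\tau_{mw}-v_{mw}(\hat{X}^\tau))$ and summing:
\[
\hat{L}^{t-1}_{mw^+}-\hat{L}^{t-1}_{mw^-}=\sum_{\tau=1}^{t-1}\big(v_{mw^+}(\hat{X}^\tau)-v_{mw^-}(\hat{X}^\tau)\big)+\big(S^{t-1}_{mw^+}-S^{t-1}_{mw^-}\big).
\]
The drift is $\ge \tfrac{\delta_0}{2}(t-T_0)-\mu_{\max}T_0$ (each of the $T_0$ early terms loses at most $\mu_{\max}$), while the noise is $\ge -2/\eta^t$ for $t\ge T^*$. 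Bounding $X^t_{mw^-}$ through the softmax ratio against $w^+$ in \eqref{eq:logit},
\[
X^t_{mw^-}\le \exp\!\big(-\eta^t(\hat{L}^{t-1}_{mw^+}-\hat{L}^{t-1}_{mw^-})\big)\le \exp\!\Big(-\tfrac{\delta_0}{2}\eta^t(t-T_0)+\mu_{\max}T_0\,\eta^t+2\Big).
\]
Since the stepsizes satisfy $\eta^t\to 0$ while $\eta^t t\to\infty$ (as for the feasible choices in Remark \ref{re:specific-stepsize}, e.g.\ $\eta^t=t^{-3/4}$), the exponent tends to $-\infty$, so $X^t_{mw^-}\to 0$, giving the contradiction and forcing $X^*$ to be a NE.

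The hard part will be the stepsize bookkeeping, i.e.\ the tension that the \emph{same} $\eta^t$ must be small enough to annihilate the martingale fluctuation (the upper bound in \eqref{eq:loca-step-condition}, through Freedman) yet large enough that the linearly growing drift survives multiplication by $\eta^t$, which is exactly where $\eta^t t\to\infty$ is used; checking that \eqref{eq:loca-step-condition} admits such a regime and pinpointing where this divergence is invoked is the crux. A secondary subtlety is that the convergence event $C$ and the concentration event are a priori distinct, which I would resolve by the Borel--Cantelli upgrade above so that the noise bound holds almost surely and can be intersected with $C$ without erasing its positive probability.
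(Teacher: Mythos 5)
Your proposal is correct and follows essentially the same route as the paper: the paper likewise decomposes $\eta^t(\hat{L}^{t-1}_{mw'}-\hat{L}^{t-1}_{mw})$ into a drift term $\eta^t A^{t-1}$ plus Freedman-controlled martingale noise, bounds the drift via the softmax ratio $\ln(X^t_{mw'}/X^t_{mw})$ along the convergent path, and concludes that $v_{mw'}(X^*)-v_{mw}(X^*)\le 0$ whenever $X^*_{mw}>0$ because otherwise $\eta^t A^{t-1}\gtrsim t\eta^t\to\infty$ — the same reliance on $t\eta^t\to\infty$ that you correctly identify as the crux (the paper simply asserts it "given the choice of stepsize"). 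Your Borel--Cantelli upgrade of the concentration bound to an almost-sure statement is a slightly cleaner way to intersect the noise-control event with the positive-probability convergence event than the paper's $1-\delta$ bookkeeping, but the substance of the argument is identical.
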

\begin{proof}
Using identical arguments as in the proof of Theorem \ref{thm:local-exponential}, with probability at least $1-\delta$,
\begin{align}\label{eq:pp-L-A}
\big|\eta^t\hat{L}^{t-1}_{mw'}-\eta^t\hat{L}^{t-1}_{mw}-\eta^{t}A^{t-1}\big|\leq 3, \ \forall t\ge 1,\ \forall m, w\ne w',
\end{align}
where we recall that $A^{t-1}= \sum_{\tau=1}^{t-1}\big(v_{mw'}(X^{\tau})-v_{mw}(X^{\tau})\big)$. Now let $\Omega$ be the event such that $X^{\tau}\to X^*$, and assume $X^*_{mw}>0$ for some $m,w$. Conditioned on the event $\Omega$, we have
\begin{align}\nonumber
\eta^{t}A^{t-1}\leq \eta^t\hat{L}^{t-1}_{mw'}-\eta^t\hat{L}^{t-1}_{mw}+3=\ln\Big(\frac{X^t_{mw'}}{X^t_{mw}}\Big)+3, \ \forall t\ge 0.
\end{align}
For any $w'\neq w$, by taking limit from the above relation as $t\to \infty$, we get
$\limsup_{t\to \infty}\eta^{t}A^{t-1}\leq \ln(\frac{X^*_{mw'}}{X^*_{mw}})+3\leq O(1)$, where if $X^*_{mw'}=0$, by convention we define $\ln(\frac{X^*_{mw'}}{X^*_{mw}})=-\infty$. Now let $\kappa:=v_{mw'}(X^*)-v_{mw}(X^*)$. We claim $\kappa\leq 0$. Otherwise, if $\kappa>0$, then for any sufficiently large $t>\bar{t}$ we must have $A^{t-1}>t\kappa/2$. Given the choice of stepsize, we obtain $\lim_{t\to \infty}\eta^t A^{t-1}=\lim_{t\to \infty}t\eta^t\kappa/2=\infty$, which contradicts the fact that $\limsup_{t\to \infty}\eta^{t}A^{t-1}\leq O(1)$. Thus, $\kappa\leq 0$, which shows that if $X^*_{mw}>0$, then $v_{mw}(X^*)=\max_{w'}v_{mw'}(X^*)$, i.e., $X^*$ is a NE.\end{proof}

\section{Global Learning of Stable Matchings in General Markets}\label{sec:global}

In the previous section, we showed that the decentralized and uncoordinated EXP algorithm converges \emph{locally} to a stable matching at an exponential rate in general matching markets. This prompts the question of whether there is an uncoordinated and decentralized algorithm capable of \emph{globally} learning a stable matching in general markets, regardless of the convergence rate. In this section, we raiser this question affirmatively by introducing an alternative algorithm that leverages the weak acyclic property of the stable matching game. To that end, we first present several useful properties of the stable matching game, which will be instrumental in establishing our main convergence results.

\smallskip
\begin{definition} 
In a pure strategy profile $x=(x_m, m\in M)$, a matched man is the one whose pure strategy $x_m$ has exactly one coordinate equal to 1 (say $w$th coordinate), and all other coordinates are zero, and no other man of higher preference proposes to $w$, i.e., $x_{kw}=0\ \forall k>_w m.$ A pure strategy profile $x$ is a \emph{good state} if all the men matched under $x$ are at their best responses. In other words, a good state $x$ refers to a partial matching in which the matched men are at their best responses.
\end{definition} 

\smallskip
The following theorem shows that if the stable matching game with known preferences starts from a good state, then any sequence of best response dynamics by the players reaches a pure NE. Unless stated otherwise, we assume that each player $m$ is aware of both his own true preferences and those of all the women, enabling him to compute his payoff function. This assumption will be relaxed in Theorem \ref{thm:global-convergence-weakly}, where players seek to learn these unknown preferences through interactions with others. 

\smallskip
\begin{lemma}\label{the:potential}
Consider the stable matching game that starts from a good initial state $x^0$. Consider any sequence of updates where at each time $t$, an arbitrary subset of players that includes at least one unsatisfied player updates their actions by playing their best responses. Then, the sequence of updates converges to a pure NE in no more than $|M||W|$ iterations.
\end{lemma}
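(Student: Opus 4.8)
The plan is to prove the lemma by a deferred-acceptance-style monotonicity argument, organized around two invariants maintained along the best-response trajectory $x^0, x^1, \ldots$: (i) every iterate $x^t$ remains a \emph{good state}, and (ii) the man currently ``held'' by each woman (her most preferred proposer) never changes for the worse. First I would record the structural facts that drive everything. In a pure profile the gradient simplifies to $v_{mw}(x)=\mu_{mw}\,\mathbf{1}\{\text{no } k>_w m \text{ proposes to } w\}$, so a man's unique best response (unique since there are no ties in $\mu$) is to propose to the most preferred woman who is \emph{available} to him, i.e.\ to whom no higher-ranked man currently proposes; his best-response payoff is positive iff such a woman exists. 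Consequently, in a good state a matched man is always satisfied, so the only unsatisfied players are \emph{unmatched} men, and an unmatched man is unsatisfied precisely when some woman is available to him.

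Next I would establish the invariants across one update round, in which a subset $S$ containing at least one unsatisfied player replays best responses. Two observations are central. First, a satisfied (matched) man recomputing his best response simply keeps his current woman, so effectively only unmatched men relocate; in particular, for every woman $w$ her current top proposer $h_{t-1}(w)$ is matched, hence satisfied, hence does not move, so $w$ never loses her held man and her top proposer in $x^t$ is the best among $h_{t-1}(w)$ and any newcomers from $S$ --- a weak improvement for $w$, and a \emph{strict} improvement for every woman targeted by a mover, since a woman available to an incoming proposer is by definition one whose new proposer outranks her previous top proposer. Second, I would argue that $x^t$ is again a good state: for any man $p$ matched in $x^t$ (an old holder who kept his woman, or a newcomer who just displaced one), every woman $w'$ that $p$ prefers to his current match was unavailable to him in $x^{t-1}$, so the top proposer of $w'$ in $x^{t-1}$ outranks $p$ at $w'$; that top proposer is matched and therefore stays, so $w'$ remains unavailable to $p$ in $x^t$. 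Hence no more-preferred woman ever opens up, and $p$ is still best-responding.

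Finally I would convert monotonicity into a termination bound. Let $r_t(w)\in\{1,\dots,n+1\}$ be the rank of $w$'s held man in her preference order ($n+1$ when she is unheld), and set $\Phi_t=\sum_{w\in W} r_t(w)$. By the invariants $\Phi_t$ is non-increasing, and in any round containing an unsatisfied mover at least one woman strictly improves, so $\Phi_t$ drops by at least $1$. Since each woman's rank can decrease a total of at most $n$ times (from $n+1$ down to $1$) across $n$ women, there can be at most $n^2=|M||W|$ such rounds; thereafter no unsatisfied player remains, and a profile in which every man is best-responding is by definition a pure NE (equivalently, by Theorem~\ref{thm:NE-Stable}, a stable matching). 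The main obstacle --- and the step deserving the most care --- is the good-state invariance under \emph{simultaneous} best responses: a priori, several men relocating at once could collide, or could render a previously satisfied man unsatisfied by freeing up a better woman for him. The resolution is the single robust fact that the top proposer at any woman is matched and hence never moves, which simultaneously prevents any held woman from regressing and prevents any desirable woman from ever becoming available to an already-satisfied man; I would make sure this argument nowhere secretly assumes sequential (one-at-a-time) updates.
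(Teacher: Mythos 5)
Your proof is correct and follows essentially the same strategy as the paper's: the same potential $\Phi=\sum_{w} r_w(x)$ tracking each woman's most preferred proposer, the same monotonicity claim (women's held partners only improve, strictly so for at least one woman per round), and the same $|M||W|$ counting. The only difference is in how the monotonicity is justified --- the paper argues by contradiction from the first round at which some woman's held partner worsens and traces back to when the deviating man first matched with her, whereas you prove directly by forward induction that good states are preserved (each woman's top proposer is matched, hence satisfied, hence stationary), which makes the handling of simultaneous updates more explicit but does not change the substance of the argument.
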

\begin{proof}
Let us encode the ordinal preferences of each woman $w$ into cardinal values $\lambda_{wm}\in \mathbb{R}^+$ such that $m_1>_w m_2$ if and only if $\lambda_{wm_1}>\lambda_{wm_2}$. Given a  pure strategy profile $x$, let $r_w(x)\in \{\lambda_{wm}, m\in M\}$ be woman $w$'s reward for her most preferred man who is matched to her under $x$. By convention, we set $r_w(x)=0$ if $w$ is not matched to any man under the pure-strategy profile $x$. We show that $r_w(x)$ is nondecreasing for each woman $w$ (and for at least one $w$ it is strictly increasing) such that the potential function $\Phi(x)=\sum_wr_w(x)$ strictly increases after each iteration. Since each $r_w(x)$ can increase at most $|M|$ times, the sequence must terminate at a pure NE after no more than $|M||W|$ iterations. We show this by arguing that during the sequence of updates, no matched man $m$ wants to deviate unless he is rejected by his partner $w$ because she has received a better offer, implying that $r_w(x)$ must have increased. Note that this statement holds initially because $x^0$ is a good state where no matched man wants to deviate.    

More precisely, let us by contradiction assume that $t$ is the first iteration at which some woman $w$ loses her more preferred man during the sequence of updates, i.e., $r_w(x^t)<r_w(x^{t-1})$. This can happen only if man $m$ that is matched to $w$ at time $t-1$ (i.e., $x^{t-1}_{mw}=1$) is selected to update at time $t$ whose best response is to leave $w$ for another women $w'$ (i.e., $x^t_{mw'}=1$). Otherwise, $r_w(x^{t-1})$ can only increase or remain the same because women reject any proposal other than their most preferred one. Note that since the game starts from a good state $x^0$, we must have $t\ge 2$.  

According to the payoff structures, the best response of player $m$ at time $t$ is to propose to his most preferred \emph{achievable} woman $w'$, i.e., among all women who rank $m$ higher or equal to their match under $x^{t-1}$, $m$ proposes to the woman $w'$ which brings him the highest utility $\mu_{mw'}$. Note that in particular $\mu_{mw'}>\mu_{mw}$. Thus, the first time $\tau\leq t-1$ when man $m$ proposed and got matched to woman $w$, the woman $w'$ was his more preferred choice and the reason why $m$ did not propose to $w'$ at time $\tau$ was because $w'$ was not achievable, i.e., $x^{\tau-1}_{m'w'}=1$ for some man $m'>_{w'} m$. Since $w'$ was not achievable for $m$ in $x^{\tau-1}$ but she is achievable in $x^{t-1}$ implies that $r_{w'}(x^{t-1})\!<\!r_{w'}(x^{\tau-1})$. Therefore, $r_{w'}(x^{\tau'})\!<\!r_{w'}(x^{\tau'-1})$ for some $\tau\leq \tau'\leq t-1$, contradicting the choice of $t$.\end{proof}

\smallskip
In fact, one can view the DA algorithm of \cite{gale1962college} as a special version of the above lemma in which all the players are selected to best respond at each round, and the initial good state is the empty matching. Lemma \ref{the:potential} shows that the stable matching game admits an ordinal potential function if the game starts from a good state (we refer to Appendix \ref{appx:potential} for another interpretation of this potential function). However, that does not imply that the stable matching game is an \emph{ordinal potential game} \cite{monderer1996potential}, which requires the existence of a potential function whose changes are aligned with players' payoff changes from any initial state. Nonetheless, the existence of such a ``weak'' potential function allows us to show the weak acyclicity of the stable matching game as defined next.

\smallskip
\begin{definition}[\cite{marden2007regret}]
A better response path in a finite action noncoopertaive game is a sequence of pure strategy profiles $x^1,x^2,\ldots,x^{L}$ such that for each $\ell=1,\ldots,L-1$, $x^{\ell+1}$ is obtained from $x^{\ell}$ by letting some player $i_{\ell}$ to play a better response. A game is called \emph{weakly acyclic} if from any pure strategy profile $x^0$, there is a finite length better response path to a pure NE.
\end{definition}

\smallskip
\begin{definition}
A pure NE $x^*=(x^*_m,x^*_{-m})$ of the stable matching game is called \emph{strict} if each player has a unique best response at the equilibrium $x^*$, i.e., for every $m$ and any pure strategy $x_m\neq x^*_m$, we have $u_m(x_m,x^*_{-m})<u_m(x^*_m,x^*_{-m})$.   
\end{definition}

In a weakly acyclic game, an arbitrarily high probability convergence can be guaranteed to a strict NE under appropriate learning dynamics \cite{marden2009payoff}. If we do not have strictness property, the learning dynamics may oscillate between two nonstrict NE points with a positive probability. Using the above definitions, we are now ready to state the main result of this section, which provides a decentralized and uncoordinated algorithm for learning stable matchings in general markets with unknown preferences. The algorithm is an adaptation of the sample experimentation algorithm for weakly acyclic games \cite{marden2009payoff} to the stable matching game. Intuitively, each player $m$ goes through a sequence of phase-dependent exploration/exploitation. At the end of each episode, player $m$ evaluates his payoff for each action he has taken during the last episode. He then updates his baseline action (an action that brings him the highest reward in the past episode up to some tolerance) and sticks to that baseline action most of the time in the next episode while still exploring new actions with a small probability. The detailed description of the algorithm is summarized in Algorithm \ref{alg:weak-acyclic}.   

\begin{algorithm}[t]\caption{A Globally Convergent Decentralized and Uncoordinated Algorithm for Player $m$}\label{alg:weak-acyclic}
{\bf Input:} A randomly chosen initial action $\alpha^0_m\in [W]$, an initial baseline action $b^1_m=\alpha^0_m$, episode length $\tau_m$, tolerance level $\delta >0$, exploration probability $\epsilon\in (0,1)$, and an inertia probability $\omega\in (0, 1)$.

\noindent
{\bf For} $s=1,2,\ldots$, do the following
\begin{itemize} 
\item During the $s$th episode, player $m$ selects his baseline action obtained at the end of the previous episode with probability $1-\epsilon$ or explores a new uniformly sampled action with probability $\epsilon$: 
\begin{align}\nonumber
\alpha^t_m=\begin{cases}
b^{s}_m &\mbox{w.p.}\ 1-\epsilon,  \\
w\in \mbox{Unif}[W] & \mbox{w.p.}\ \frac{\epsilon}{n}, 
\end{cases}\ \ \ \ \ \ \ \ \ \ \ \ \forall t\in \{(s-1)\tau_m,\ldots,s\tau_m-1\}.
\end{align}
\item  At the end of episode $s$, player $m$ evaluates his average utility when playing each action as
\begin{align}\nonumber
u^s_{mw}=\frac{1}{n^s_{mw}}\sum_{t=(s-1)\tau_m}^{s\tau_m-1} \mu^t_{mw}\boldsymbol{1}_{\{\alpha^t_k\neq w\ \forall k>_w m\}}\boldsymbol{1}_{\{\alpha_m^t=w\}}\ \ \ \ \forall w,
\end{align}
where $n^s_{mw}=\sum_{t=(s-1)\tau_m}^{s\tau_m-1}\boldsymbol{1}_{\{\alpha_m^t=w\}}$, and computes the set $\mathcal{A}^s_m=\{w: u^s_{mw}\ge u^s_{mb_m^s}+\delta\}$. 
\item If $\mathcal{A}^s_m\neq \emptyset$, player $m$ uniformly samples an action $w\in \mathcal{A}^s_m$, and updates his baseline action to $b_m^{s+1}=w$ with probability $1-\omega$. Otherwise, he sets $b_m^{s+1}=b^{s}_m$.
\end{itemize}
\end{algorithm}

%\mu^t_{mw}\boldsymbol{1}_{\{\alpha^t_k\neq w\ \forall k>_{w} m,\ \alpha_m^t=w\}}

\smallskip
\begin{theorem}\label{thm:global-convergence-weakly}
Let $p,\omega\in (0,1)$ be arbitrary probabilities and $\epsilon\leq \min\{\frac{1-p}{n}, \frac{\delta}{4n}, \frac{\Delta-\delta}{4n}\}$, where $\delta\in (0, \Delta)$. For each man $m$, there exists a sufficiently large episode length $\tau_m=\tau_m(p,\epsilon)$, such that if each player $m$ follows Algorithm \ref{alg:weak-acyclic} with episode length $\tau_m$, for all sufficiently large times $t>0$, the pure strategy profile $\alpha^t$ will be a (fixed) stable matching with probability at least $p$.
\end{theorem}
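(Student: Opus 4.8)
The plan is to adapt the convergence analysis of the sample-experimentation dynamics for weakly acyclic games in \cite{marden2009payoff} to the stable matching game, handling the bandit-style feedback through a concentration argument. Two structural facts come first. By Theorem \ref{thm:NE-Stable} the pure NE are exactly the stable matchings, and since preferences have no ties, at any stable matching each man $m$ matched to $w^*$ strictly prefers proposing to $w^*$: any other woman $w$ either already has a proposer she ranks above $m$ (so $m$ would earn $0$), or ranks $m$ highest among her proposers, in which case stability forces $\mu_{mw}<\mu_{mw^*}$. Hence every pure NE is \emph{strict}. Second, the game is weakly acyclic: from any pure profile one can first reach a good state and then, by Lemma \ref{the:potential}, follow best responses to a pure NE along a path of length at most $|M||W|$ (this is also consistent with the classical convergence of uncoordinated better-response matching dynamics \cite{roth1990random}). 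Let $L$ denote a uniform bound on the length of such a better-response path.

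The core of the argument is to show that the process $\{b^s\}_s$ of baseline profiles behaves, with high probability, like an (inexact) better-response dynamic. First I would fix an episode and condition on the baseline profile $b^s$. During episode $s$ every player plays $b^s_m$ with probability $1-\epsilon$ and explores uniformly with probability $\epsilon$, so for $\tau_m$ large each action $w$ is sampled at least order $\epsilon\tau_m/n$ times; a Hoeffding/martingale concentration bound then guarantees that every empirical payoff $u^s_{mw}$ lies within, say, $\delta/4$ of its episode-conditional mean with probability that can be driven above any target by enlarging $\tau_m$. The subtlety is that this conditional mean is the payoff of $w$ against a \emph{perturbed} version of $b^s_{-m}$, since other players also explore; because at most $n$ players perturb their action with probability $\epsilon$, the conditional mean differs from the exact payoff $v_{mw}$ evaluated at the baseline profile by at most a quantity of order $n\epsilon$. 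The constraints $\epsilon\le \delta/(4n)$ and $\epsilon\le(\Delta-\delta)/(4n)$ are precisely what make this perturbation, together with the $\delta/4$ estimation error, small enough that the test $u^s_{mw}\ge u^s_{mb^s_m}+\delta$ is passed exactly by the genuinely improving deviations (whose true gap is bounded below by the preference gap $\Delta$) and failed by all non-improving ones. Thus, with high probability, $\mathcal{A}^s_m$ equals the set of strict best-response improvements of $m$ against $b^s$.

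Granting this, the baseline process is an approximate better-response dynamic governed by weak acyclicity. Following \cite{marden2009payoff}, I would establish a uniform positive lower bound $\rho>0$ on the probability that, over any block of $L$ consecutive episodes, the baselines trace a better-response path to a strict NE: in each episode the inertia mechanism lets exactly the one player designated by the path update (probability at least $(1-\omega)\omega^{n-1}$) while the others keep their baseline, and once a strict NE is reached $\mathcal{A}^s_m=\emptyset$ for all $m$ (no $\delta$-improving deviation exists), so the profile is absorbing with high probability. Hence the probability that $b^s$ has not settled at a strict NE decays geometrically in the number of blocks, so for $s$ large enough $b^s$ is a fixed stable matching with probability as close to $1$ as desired. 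The final step is the easy one: once $b^s$ equals a stable matching, at any time $t$ in that episode all $n$ players simultaneously play their baseline with probability $(1-\epsilon)^n\ge 1-n\epsilon\ge p$, using $\epsilon\le(1-p)/n$, so the realized profile $\alpha^t$ is exactly that stable matching with probability at least $p$. The main obstacle is the second paragraph: cleanly separating the two error sources --- statistical estimation error, controllable by $\tau_m$, and the systematic bias from other players' exploration, controllable only by $\epsilon$ --- so that the discrete decision $\mathcal{A}^s_m$ is decided correctly, while simultaneously keeping $\epsilon$ large enough ($\epsilon\le(1-p)/n$) that the baseline is played often enough to yield the stated guarantee.
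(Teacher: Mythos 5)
Your proposal is correct and follows essentially the same route as the paper: establish weak acyclicity of the stable matching game via Lemma \ref{the:potential} (first reaching a good state, then following best responses), show every pure NE is strict because men's preferences have no ties, and then invoke the convergence of sample-experimentation dynamics for weakly acyclic games. The only difference is that the paper cites \cite[Theorem 3.4]{marden2009payoff} as a black box after verifying the payoff-estimation condition \eqref{eq:weak-law} by the law of large numbers, whereas you additionally unpack that theorem's internal inertia/block argument and separate the statistical estimation error from the $O(n\epsilon)$ exploration bias more explicitly.
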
   
\begin{proof}
First, we show that the stable matching game is a weakly acyclic game. If the game starts from a good state, by Lemma \ref{the:potential}, any sequence of best responses by the players is a better response path that leads to a pure NE in $|M||W|$ iterations. Thus, we only need to show that from any initial state, there is a better response path to a good state. This is also true by sequentially letting a matched man who wants to deviate (if there is such a man) to play his best response. In this way, either the number of matched men decreases due to collisions (which can happen at most $|M|$ times), or between every two consecutive collisions, there could be at most $|M|$ best response updates. Therefore, $|M|^2$ steps are enough to bring us from any arbitrary state to a good state, and the overall number of best response iterates to obtain a pure NE starting from any initial state is at most $|M||N|+|M|^2$.

%\footnote{The existence of such a better response path can be shown using more complex update rules (see, e.g., \cite{roth1990random}).}

Next, we note that any pure NE in the stable matching game is a strict NE. This follows directly by the fact that there are no ties between men's preferences in the stable matching game. More precisely, let $\alpha^*$ be a pure NE (stable matching) and assume $u_m(\alpha^*)=\mu_{mw}$. For any $\alpha_m\neq \alpha^*_m$, either $u_m(\alpha_m,\alpha^*_{-m})$ equals $0$ or $\mu_{mw'}$ for some $w'\neq w$. Because $\alpha^*_m$ is the best response of player $m$ at the NE and there are no ties between man $m$'s preferences, we must have $\mu_{mw'}<\mu_{mw}$. Since $0<\mu_{mw}$, in either case we have $u_m(\alpha_m,\alpha^*_{-m})<u_m(\alpha^*_m,\alpha^*_{-m})$.

Finally, it was shown in \cite{marden2009payoff} that the class of sample experimentation dynamics in weakly acyclic games converges with arbitrarily high probability to a strict NE if for any joint baseline action $b^s$ at the start of an episode $s$, $u^s_{mw}$ is an arbitrarily close estimation of $u_{m}(w,b^s_{-m})$ for sufficiently large episode length $\tau_m=\tau_m(p,\epsilon)$. Since Algorithm \ref{alg:weak-acyclic} is an adaptation of the sample experimentation dynamics to the stable matching game, the convergence result follows from \cite[Theorem 3.4]{marden2009payoff} if we can show that for any $\delta^*=\min\{\frac{\delta}{4}, \frac{\Delta-\delta}{4}\}>0$, $p<1$, and $1-(1-\epsilon)^{n-1}<\frac{\delta^*}{2}$, for sufficiently large $\tau_m$ we have
\footnote{This is an equivalent version of Claim 3.4 in \cite{marden2009payoff} to the stable matching game.}  
\begin{align}\label{eq:weak-law}
\mathbb{P}\Big\{\big|u_{mw}^s-u_{m}(w,b^s_{-m})\big|\ge \delta^*\Big\}\leq 1-p.
\end{align}
This statement is also true because during each episode, players sample their baseline actions (or explore a new action) independently of others, and moreover, $\mu^t_{mw}$ are i.i.d. realizations of a random variable with mean $\mu_{mw}$. Thus, relation \eqref{eq:weak-law} holds using the law of large numbers, which completes the proof.\end{proof}

\smallskip
While Algorithm \ref{alg:weak-acyclic} provides a decentralized and uncoordinated algorithm for learning stable matchings in general markets with arbitrarily high probability, the convergence time to a NE may take a long time. This seems inevitable to some extent in view of the exponential lower bound in terms of $n$ such that the best response dynamics in general matching markets converge to a stable matching \cite{ackermann2008uncoordinated}. 

%Nevertheless, it is an interesting future research direction to show that such an impossibility lower bound holds for any decentralized and uncoordinated algorithm for global convergence in general matching markets. 

\section{Conclusions}\label{sec:conclusion}

In this work, we considered the problem of learning stable matchings with unknown preferences in a fully decentralized and uncoordinated manner. Using a game-theoretic formulation, we showed how to leverage the rich literature for learning NE in noncooperative games to design principled learning algorithms for driving the matching market to its stable points. In particular, we established several global and local convergence results with performance guarantees for learning stable matchings in hierarchical markets, general markets, and under weak/strong information feedback. Our results provide new insights for learning stable matchings through NE learning in noncooperative games and bridge the discrete problem of learning a stable matching with learning NE in continuous-action games.  

This work opens several avenues for future research, some of which are discussed below.

\begin{itemize}
    \item We showed in Theorem \ref{thm:logarithmic-regret} that for hierarchical markets it is possible to learn a stable matching with logarithmic regret using a fully decentralized and uncoordinated algorithm. However, the regret bound includes a constant factor that scales exponentially with the size of the market $n$. While we believe this exponential factor is necessary when using the EXP algorithm, an interesting question is whether there exist other fully decentralized and uncoordinated algorithms that can eliminate this exponential factor from the regret bound. More generally, it is important to characterize the information-theoretic limits above and below which efficient learning of a stable matching in a fully decentralized and uncoordinated fashion may or may not be possible.    
    \item In Theorem \ref{thm:positive-prob-mixed-NE}, we showed that if the EXP algorithm converges with positive probability, it must converge to a stable matching. We conjecture that, for general markets, Algorithm \ref{alg:main} converges globally to a stable matching with arbitrarily high probability. In that case, Algorithm \ref{alg:main} provides a simpler decentralized and uncoordinated alternative to Algorithm \ref{alg:weak-acyclic} that globally converges to a stable matching in general matching markets.  
    \item The waiting list feedback provided in Appendix \ref{appx:monotone-game} is merely an initial attempt to design faster decentralized and uncoordinated algorithms under richer information feedback. Extending our results to include other types of information feedback and studying their effect on the convergence speed of the market to its stable points would be another interesting research direction.
\end{itemize}

%\THEEndNotes
%\begingroup \parindent 0pt \parskip 4ex
%\def\enotesize{\normalsize} 
%\theendnotes
%\endgroup

% Appendix here
% Options are (1) APPENDIX (with or without general title) or
%             (2) APPENDICES (if it has more than one unrelated sections)
% Outcomment the appropriate case if necessary
%

%\section*{Acknowledgment}
%This work is supported by the Air Force Office of Scientific Research under award number FA9550-23-1-0107 and the NSF CAREER Award under Grant No. EPCN-1944403.

\bibliographystyle{IEEEtran}
\bibliography{thesisrefs}

\appendices

\section{Matching Markets with Stronger Information Feedback}\label{appx:monotone-game}

So far in our work, we have assumed that men receive the minimum amount of information feedback as they interact with the market; namely, men only get to observe whether their proposal gets rejected or, if matched, they observe a random realization of their preference. In this appendix, we consider the learning problem from a new perspective by addressing the following question: \emph{Is there stronger information feedback that will allow us to derive the market faster globally to a stable matching in a decentralized and uncoordinated fashion?}  Therefore, we address the problem of learning 
a stable matching from an information feedback design perspective by showing that if the players are allowed to receive a bit more information as feedback, then the stable matching game will become a \emph{monotone game}, hence allowing a wide range of decentralized and uncoordinated learning algorithms for obtaining its NE points.

Let us consider a simplified version of the stable matching game with the same set of players $M$ and strategy spaces $\mathcal{X}_m, m\in M$, except that the payoff of player $m$ is now given by   
\begin{align}\label{eq:simplified}
u_m(x)=\sum_{w\in W}\mu_{mw}\big(1-\sum_{k>_w m}x_{kw}\big)x_{mw}.
\end{align}

The simplified payoff function \eqref{eq:simplified} can be viewed as the expected perceived reward for player $m$ if he randomly and independently of others proposes to women according to his mixed strategy $x_m\in \mathcal{X}_m$ but under slightly stronger feedback: if player $m$'s proposal gets rejected from a woman $w$, he also gets to know how many men higher ranked than him proposed to $w$. In this way, each man $m$ learns how many more preferred men compete with him to get matched to woman $w$. Such information feedback can be implemented by using a waiting list order for each woman, which shows the ranking of a rejected man $m$ in that list. Henceforth, we refer to such information feedback as the \emph{waiting list} feedback. 

Let $\alpha_k\in W$ denote the random woman that player $k$ proposes to her, which is drawn independently from his mixed strategy distribution $x_k$. Under the waiting list feedback information, player $m$ has access to the information $\sum_{k>_w m}\boldsymbol{1}_{\{\alpha_k=w\}}$ if he proposes to $w$, and hence can estimate his perceived payoff by proposing to a woman $w$ as $\mu_{mw}(1-\sum_{k>_w m}\boldsymbol{1}_{\{\alpha_k=w\}})$,  Therefore, the expected perceived payoff of player $m$ under the waiting list feedback equals
\begin{align}\nonumber
\mathbb{E}[\sum_{w\in W}\mu_{mw}(1-\sum_{k>_w m}\boldsymbol{1}_{\{\alpha_k=w\}})\boldsymbol{1}_{\{\alpha_m=w\}}]&=\sum_{w\in W}\mathbb{E}[\mu_{mw}(1-\sum_{k>_w m}\boldsymbol{1}_{\{\alpha_k=w\}})|\alpha_m=w]\mathbb{P}(\alpha_m=w)\cr
&=\sum_{w\in W}\mu_{mw}\big(1-\sum_{k>_w m}x_{kw}\big)x_{mw}=u_m(x),
\end{align}    
where the second equality holds by independency of $\alpha_k\ \forall k\in M$. 

The motivation for waiting list feedback and the payoff function \eqref{eq:simplified} is that if we view $1-\sum_{k>_w m}x_{kw}$ as the remaining fractional capacity of women $w$ for being matched, using the approximation $1-\sum_{k>_w m}x_{kw}\sim e^{-\sum_{k>_w m}x_{kw}},$ one can think that man $m$ loses his interest to offer women $w$ exponentially in terms of the occupied capacity of woman $w$ by higher preferred men. One immediate consequence of such simplified payoff functions is the following proposition whose proof follows nearly verbatim arguments as those in Theorems \ref{thm:NE-Stable} and \ref{thm:mixed-pure}, and is omitted from here for the sake of brevity.

\begin{proposition}\label{prop:list}
Consider the stable matching game with payoff functions \eqref{eq:simplified}. Then, $x^*$ is a pure NE if and only if it is the characteristic vector of a stable matching. Moreover, any mixed NE of this game with full support can be rounded in a decentralized manner to a pure NE.  
\end{proposition}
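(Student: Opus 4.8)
The plan is to mirror the two-part structure of Theorems~\ref{thm:NE-Stable} and \ref{thm:mixed-pure}, exploiting that the only change in the payoff is the replacement of the product $\prod_{k>_w m}(1-x_{kw})$ by the linear factor $(1-\sum_{k>_w m}x_{kw})$. First I would record that the three structural properties (i)--(iii) listed after \eqref{eq:game-payoff} persist verbatim: the coefficient $\mu_{mw}(1-\sum_{k>_w m}x_{kw})$ is independent of $x_m$, so $u_m$ is linear in $x_m$ with gradient $v_{mw}(x)=\mu_{mw}(1-\sum_{k>_w m}x_{kw})$, and consequently each player always has a pure best response obtained by placing all mass on $\argmax_w v_{mw}(x)$. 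The crucial observation is that on any pure matching the two factors coincide: if no higher-ranked man proposes to $w$ then both equal $1$, and otherwise the product is $0$ while the linear factor is $\le 0$. Hence the reward of a matched man is exactly $\mu_{mw}$, just as before.

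For the pure NE characterization I would repeat the proof of Theorem~\ref{thm:NE-Stable}. If $x^*$ is a pure NE it must be a perfect matching, since if two men proposed to the same $w$, the one ranked lower would receive payoff $\le 0$ from the linear factor and could strictly improve to some $\mu_{mw'}>0$ at a woman $w'$ receiving no proposal (one exists by pigeonhole among $n$ men and $n$ women). For stability, the blocking-pair deviation computes identically: deviating to a woman $w$ for whom no higher-ranked man is present gives factor $1$ and payoff $\mu_{mw}$, so both directions transfer unchanged. The sign of the linear factor, which unlike the product may be negative, only strengthens the ``deviate away from a contested woman'' step.

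For the rounding statement I would follow Theorem~\ref{thm:mixed-pure} line by line on the support graph $\mathcal{H}(x)$. At a NE, $v_{mw}(x)$ is constant and equal to $u_m(x)$ across $w\in N_m(x)$ by linearity. A man cannot be the top choice of two women, since top-choice status forces the factor to equal $1$, yielding $\mu_{mw_1}=u_m(x)=\mu_{mw_2}$ and contradicting the no-ties assumption. The matchings $\mathcal{M}_1=\{(m_w,w)\}$ and $\mathcal{M}_2=\{(m,w_m)\}$ then coincide by the same computation, where the key inequality $v_{mw'}(x)\le \mu_{mw'}(1-x_{m_{w'}w'})<\mu_{mw'}$ now holds because the single term $x_{m_{w'}w'}>0$ is one of the nonnegative summands in $\sum_{k>_{w'}m}x_{kw'}$. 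Finally, any profitable deviation from the rounded profile $\hat x$ lifts to a profitable deviation in $x$ exactly as before: a beneficial move of $m_w$ to $w'$ forces $m_{w'}<_{w'}m_w$, whence $m_w$ outranks every proposer of $w'$ in $N_{w'}(x)$, so solely proposing to $w'$ in $x$ makes $\sum_{k>_{w'}m_w}x_{kw'}=0$ and yields payoff $\mu_{m_ww'}>u_{m_w}(x)$, contradicting that $x$ is a NE.

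The argument has no genuine obstacle, which is precisely why the paper defers it, but the one point demanding care is that, unlike the product, the linear factor $1-\sum_{k>_w m}x_{kw}$ can dip below zero in a mixed profile. I would verify that this never manufactures a spurious profitable deviation and never flips any inequality used: every comparison invoked is of the form ``factor $=1$ when no higher-ranked competitor is present'' or ``factor strictly below $1$ (indeed $\le 0$) when one is,'' and both hold for the linear expression exactly as for the product, so each step of the two templates transfers verbatim once this sign bookkeeping is checked.
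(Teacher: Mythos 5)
Your proposal is correct and follows exactly the route the paper intends: the paper omits this proof, stating only that it ``follows nearly verbatim arguments as those in Theorems~\ref{thm:NE-Stable} and~\ref{thm:mixed-pure},'' and your write-up is precisely that adaptation, with the one genuine point of care --- that the linear factor $1-\sum_{k>_w m}x_{kw}$, unlike the product, can be negative --- correctly identified and checked not to disturb any inequality used.
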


Next, we consider the following definition of monotone games \cite{rosen1965existence}. 
\begin{definition}
A continuous-action game $(M,\{U_m\}_{m\in M}, \{\mathcal{X}\}_{m\in M})$ is called monotone if for any two action profiles $x,x'\in \mathcal{X}_1\times\cdots\times \mathcal{X}_{|M|}$, we have $(x'-x)^T\big(F(x')-F(x)\big)\leq 0$, where $$F(x)=(\nabla_m U_m(x), m\in M)^T.$$  
\end{definition}

An important property of monotone games is that they admit a variety of decentralized learning algorithms that converge globally to a NE, which, under further assumptions, it can be shown that the converge rate could be exponential or polynomial \cite{tatarenko2024payoff,rosen1965existence,hsieh2021adaptive,gao2020continuous,tatarenko2022rate,mertikopoulos2019learning}. The following theorem shows that the \emph{regularized} stable matching game with waiting list feedback is indeed a monotone game under a suitable assumption on the preferences. 

\begin{theorem}\label{lemm:list-monotone}
Consider the regularized stable matching game with list feedback, where each player aims to maximize its regularized payoff function given by 
\begin{align}\nonumber
U_m(x_m,x_{-m}):=u_m(x_m,x_{-m})-\frac{\beta}{2}\|x_{m}\|^2.
\end{align}
Then, for any two strategy profiles $x',x$, we have
\begin{align}\nonumber
(x'-x)^T\big(F(x')-F(x)\big)=-\frac{1}{2}\sum_{w\in W}(x_{w}'-x_{w})^TQ^w(x_{w}'-x_{w}),
\end{align} 
where $x_w$ is the vector $x_w=(x_{mw}, m\in M)^T$, and $Q^w$ is a symmetric matrix whose diagonal entries are $2\beta$ and all the entries in row $m$ up to the diagonal entry in that row equal $\mu_{mw}$.  In particular, if $\beta>\frac{n\mu_{\max}}{2}$, the regularized stable matching game with waiting list feedback is a monotone game.  
\end{theorem}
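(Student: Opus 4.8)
The plan is to compute the pseudo-gradient field $F$ explicitly, expand the bilinear expression $(x'-x)^T\big(F(x')-F(x)\big)$ coordinate-wise, regroup the resulting terms woman-by-woman, and recognize each group as a symmetric quadratic form whose matrix is exactly $Q^w$; monotonicity then follows by establishing positive semidefiniteness of each $Q^w$ via strict diagonal dominance. First I would differentiate. Since only the $w$-th summand of $u_m$ depends on $x_{mw}$, and the regularizer contributes $-\beta x_{mw}$, the $(m,w)$ component of $F$ is
$$F_{mw}(x)=\nabla_{mw}U_m(x)=\mu_{mw}\Big(1-\sum_{k>_w m}x_{kw}\Big)-\beta x_{mw}.$$
Writing $d_{mw}:=x'_{mw}-x_{mw}$ and using that $F$ is affine in $x$, I obtain
$$F_{mw}(x')-F_{mw}(x)=-\mu_{mw}\sum_{k>_w m}d_{kw}-\beta\,d_{mw}.$$

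Next I would form the inner product and split it across women, yielding
$$(x'-x)^T\big(F(x')-F(x)\big)=-\sum_{w\in W}\Big(\beta\sum_{m}d_{mw}^2+\sum_{m}\mu_{mw}\,d_{mw}\sum_{k>_w m}d_{kw}\Big).$$
The crux is to recognize the parenthesized term as $\tfrac12\,d_w^T Q^w d_w$ with $d_w=(d_{mw})_{m}$. The diagonal part $\beta\sum_m d_{mw}^2$ matches a diagonal of $2\beta$ under the overall factor $\tfrac12$. For the cross term, note that the double sum runs over each unordered pair $\{m,k\}$ with $k>_w m$ exactly once, contributing the single product $\mu_{mw}\,d_{mw}d_{kw}$; symmetrizing this so that the pair $\{m,k\}$ contributes $\tfrac12(Q^w_{mk}+Q^w_{km})d_{mw}d_{kw}$ with $Q^w_{mk}=Q^w_{km}=\mu_{mw}$ reproduces it exactly. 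Ordering the men by $w$'s preference, the entries of row $m$ below the diagonal are then precisely the $\mu_{mw}$'s (one for each $k>_w m$), which is the stated structure of $Q^w$, with the entries above the diagonal fixed by symmetry. I expect this symmetrization to be the main bookkeeping obstacle: the raw gradient yields a triangular, non-symmetric coefficient matrix, and one must pair ordered index tuples correctly to land on $Q^w$ together with the factor $\tfrac12$.

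Finally, for monotonicity I would verify that each $Q^w$ is positive semidefinite whenever $\beta>\tfrac{n\mu_{\max}}{2}$. Each row of $Q^w$ has diagonal entry $2\beta$ and at most $n-1$ off-diagonal entries, each of absolute value at most $\mu_{\max}$, so the off-diagonal row-sum is bounded by $(n-1)\mu_{\max}$. The hypothesis gives $2\beta>n\mu_{\max}>(n-1)\mu_{\max}$, so $Q^w$ is strictly diagonally dominant with positive diagonal, hence positive definite by the Gershgorin circle theorem, and in particular $Q^w\succeq 0$. Consequently $d_w^T Q^w d_w\ge 0$ for every $w$, and therefore
$$(x'-x)^T\big(F(x')-F(x)\big)=-\tfrac12\sum_{w\in W}d_w^T Q^w d_w\le 0,$$
which is exactly the monotonicity condition for the regularized stable matching game with waiting list feedback.
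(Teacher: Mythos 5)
Your proposal is correct and follows essentially the same route as the paper: compute the affine pseudo-gradient, expand the inner product woman-by-woman, and symmetrize the triangular coefficient matrix into $Q^w=(A^w)^T+A^w+2\beta I$ with the factor $\tfrac12$. The one place you go beyond the paper is in explicitly justifying positive semidefiniteness via strict diagonal dominance and Gershgorin (the paper simply asserts $Q^w\succeq 0$ for $\beta>\tfrac{n\mu_{\max}}{2}$), and your bound $(n-1)\mu_{\max}<2\beta$ correctly closes that small gap.
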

\begin{proof}
Note that due to the linearity of $U_m(x)$ with respect to player $m$'s strategy $x_m$, we have 
\begin{align}\nonumber
\nabla_{mw} U_m(x)=\mu_{mw}(1-\sum_{k>_w m}x_{kw})-\beta x_{mw},\  \forall m,w,
\end{align}
where $\nabla_{mw} U_m(x)$ denotes the gradient of $U_{m}(x)$ with respect to $x_{mw}$. Thus, we can write
\begin{align}\label{eq:monotone-first}
(x'&-x)^T\big(F(x')-F(x)\big)=\sum_m (x_m'-x_m)^T \big(\nabla_m U_m(x')-\nabla_m U_m(x)\big)\cr 
&=\sum_m\sum_w (x_{mw}'-x_{mw})\big(\nabla_{mw} U_m(x')-\nabla_{mw} U_m(x)\big)\cr 
&=\sum_m\sum_w -\mu_{mw}(x_{mw}'-x_{mw})\big(\sum_{k>_w m}x'_{kw}-\sum_{k>_w m}x_{kw}+\beta(x'_{mw}-x_{mw})\big)\cr
&=-\sum_m\sum_w\sum_{k>_w m} \mu_{mw}(x_{mw}'-x_{mw})(x'_{kw}-x_{kw})-\beta\sum_m\sum_w (x'_{mw}-x_{mw})^2\cr
&=-\sum_w\Big(\sum_m\sum_{k>_w m} \mu_{mw}(x_{mw}'-x_{mw})(x'_{kw}-x_{kw})\Big)-\beta\sum_m\sum_w (x'_{mw}-x_{mw})^2\cr 
&=-\sum_w\Big(\sum_m\sum_{k} \mu_{mw}\boldsymbol{1}_{\{k>_w m\}}(x_{mw}'-x_{mw})(x'_{kw}-x_{kw})\Big)-\beta\sum_w\sum_m (x'_{mw}-x_{mw})^2.
\end{align}
For each woman $w$, let us define a vector $x_w=(x_{mw}, m\in [M])^T$, and an $n\times n$ incidence preference matrix $A^{w}$ whose entries are given by $A^w_{mk}=\mu_{mw}\boldsymbol{1}_{\{k>_w m\}}, \forall m,k\in [n]$. Then, using \eqref{eq:monotone-first} we have
\begin{align}\nonumber
(x'-x)^T\big(F(x')-F(x)\big)&=-\sum_w\Big(\sum_m\sum_{k} A^w_{mk}(x_{mw}'-x_{mw})(x'_{kw}-x_{kw})\Big)\cr
&\qquad-\sum_w (x'_{w}-x_w)^T(\beta I)(x'_{w}-x_w)\cr 
&=-\sum_w(x_{w}'-x_{w})^T(A^w+\beta I) (x_{w}'-x_{w})\cr
&=-\frac{1}{2}\sum_w(x_{w}'-x_{w})^TQ^w(x_{w}'-x_{w}),
\end{align}
where $Q^w=(A^w)^T+A^w+2\beta I$. Next, we note that permuting the rows and columns of a matrix by the same permutation matrix preserves the semi-definiteness property as it only changes the basis of its eigenvectors. Thus, without loss of generality, we may assume $1>_w 2>_w \ldots>_w n$. However, in that case, we may assume that for any $w$, the matrix $A^{w}$ will be a lower triangular matrix whose diagonal entries are all zero. In particular, $Q^w=(A^w)^T+A^w+2\beta I$ would be a symmetric matrix whose elements in each row $m$ up to the diagonal entry are identical and equal to $\mu_{mw}$. Finally, for sufficiently large $\beta>\frac{n\mu_{\max}}{2}$, the matrix $Q^w$ is positive semidefinite for all $w$, and we have $-(x_{w}'-x_{w})^TQ^w(x_{w}'-x_{w})\leq 0 \ \forall w$. This shows that $(x'-x)^T\big(F(x')-F(x)\big)\leq 0$, and completes the proof.
\end{proof}

The monotonicity property in Theorem \ref{lemm:list-monotone} holds when the regularizer parameter $\beta$ is sufficiently large. The regularizer implies that players tend to maximize their approximate payoff functions. Thus, instead of learning a NE for the payoff functions \eqref{eq:simplified}, they learn an $\epsilon$-NE where the approximation error $\epsilon$ grows as $\beta$ increases. As a result, applying decentralized learning algorithms to learn a NE of the regularized game only guarantees an $\epsilon$-stable matching in the sense that the players will achieve a solution in which each player would be $\epsilon$ worse off than what he could achieve in a stable matching. Therefore, the regularizer parameter trades off between the market's convergence speed and the accuracy of the learned stable matching. Finally, we note that the assumption of waiting list feedback is not necessary for faster convergence in any general market; it is merely one choice that we consider in this work, and there could be other types of information feedback that allow faster convergence. Nevertheless, our game-theoretic approach to the stable matching problem provides valuable insights on how to properly design the utilities and the information feedback to devise faster decentralized and uncoordinated algorithms for learning stable matchings.

\section{Supplementary Materials and Omitted Proofs}\label{appx:omitted}
 
\medskip
\subsection{{\bf Freedman's Concentration 
Inequality for Martingales}}\label{appx:freedman} 
\begin{lemma}\label{lemm:freedman}
Let $\{\zeta_{\tau}\}_{\tau=1}^t$ be a martingale difference sequence with $\max_{\tau\in [t]}|\zeta_{\tau}|\leq u$, and consider the corresponding martingale $S^t=\sum_{\tau=1}^{t}\zeta_{\tau}$ with $S^1=0$. Let $V^t=\sum_{\tau=1}^{t}\mathbb{E}[\zeta^2_{\tau}|\mathcal{F}^{t-1}]$ be the sum of the conditional variances. Then, for any $x>0$ and $v\in \mathbb{R}$, we have $$\mathbb{P}\{|S^t|\ge x,\ V^t\leq v^2\}\leq 2\exp\Big(\frac{-x^2}{2(v^2+ux)}\Big).$$
\end{lemma}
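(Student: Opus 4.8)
The plan is to prove the inequality by the classical exponential-supermartingale method (the Freedman/Bernstein approach), first establishing a one-sided tail bound and then recovering the two-sided statement by symmetry. Since $\{-\zeta_\tau\}$ is again a martingale difference sequence with the same almost-sure bound $u$ and the same conditional variances, a union bound over $\{S^t\ge x\}$ and $\{-S^t\ge x\}$ supplies the factor of $2$; hence it suffices to bound $\mathbb{P}\{S^t\ge x,\ V^t\le v^2\}$ by $\exp\!\big(-x^2/(2(v^2+ux))\big)$.

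For the one-sided bound, the first step is a conditional moment-generating-function estimate. Writing $g(\lambda)=(e^{\lambda u}-1-\lambda u)/u^2$ and using that $a\mapsto (e^a-1-a)/a^2$ is nondecreasing, one gets $e^{\lambda z}\le 1+\lambda z+g(\lambda)z^2$ for every $z\le u$ and $\lambda>0$. Applying this with $z=\zeta_\tau$, taking $\mathbb{E}[\cdot\mid\mathcal{F}^{\tau-1}]$, and invoking the martingale-difference property $\mathbb{E}[\zeta_\tau\mid\mathcal{F}^{\tau-1}]=0$ together with $1+s\le e^s$ yields
\begin{align*}
\mathbb{E}[e^{\lambda\zeta_\tau}\mid\mathcal{F}^{\tau-1}]\le \exp\big(g(\lambda)\,\mathbb{E}[\zeta_\tau^2\mid\mathcal{F}^{\tau-1}]\big).
\end{align*}
This is exactly what makes $M^s:=\exp(\lambda S^s-g(\lambda)V^s)$ a supermartingale: a direct computation gives $\mathbb{E}[M^s\mid\mathcal{F}^{s-1}]\le M^{s-1}$, because the increment $V^s-V^{s-1}=\mathbb{E}[\zeta_s^2\mid\mathcal{F}^{s-1}]$ cancels the exponent produced by the MGF bound. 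Iterating from the initial value $1$ gives $\mathbb{E}[M^t]\le 1$.

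The second step converts this into a tail bound. On the event $E=\{S^t\ge x,\ V^t\le v^2\}$ we have, for $\lambda>0$ and $g(\lambda)\ge 0$, the pointwise lower bound $M^t\ge\exp(\lambda x-g(\lambda)v^2)$, since $S^t\ge x$ inflates the first term while $V^t\le v^2$ controls the subtracted term. Hence $\mathbf{1}_E\le M^t\exp(-\lambda x+g(\lambda)v^2)$, and taking expectations gives $\mathbb{P}(E)\le\exp(-\lambda x+g(\lambda)v^2)$ for every $\lambda>0$. Optimizing over $\lambda$ with the choice $\lambda=\tfrac1u\ln(1+ux/v^2)$ produces the Bennett-type exponent $\tfrac{v^2}{u^2}\,h(ux/v^2)$, where $h(y)=(1+y)\ln(1+y)-y$.

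The final step, which I expect to be the only genuinely computational part, is to weaken the Bennett bound to the stated Bernstein form. Setting $y=ux/v^2$, the desired inequality $\tfrac{v^2}{u^2}h(y)\ge\tfrac{x^2}{2(v^2+ux)}$ reduces to the scalar inequality $h(y)\ge\tfrac{y^2}{2(1+y)}$ for $y\ge 0$. This can be verified by letting $\phi(y)=h(y)-\tfrac{y^2}{2(1+y)}$, checking $\phi(0)=\phi'(0)=0$ and computing $\phi''(y)=\tfrac{y(2+y)}{(1+y)^3}\ge 0$, so that $\phi\ge 0$ on $[0,\infty)$. Combining this with the one-sided bound and the symmetrization described above completes the proof.
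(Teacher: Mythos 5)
Your proof is correct. Note that the paper does not actually prove this lemma; it simply cites Freedman (1975), Theorem 1.6, so there is nothing to compare against line by line. What you have written is precisely the standard exponential-supermartingale argument that underlies that citation, and every step checks out: the pointwise bound $e^{\lambda z}\le 1+\lambda z+g(\lambda)z^2$ for $z\le u$ via monotonicity of $a\mapsto(e^a-1-a)/a^2$, the conditional MGF estimate, the supermartingale property of $\exp(\lambda S^s-g(\lambda)V^s)$ (which uses that $V^s$ is $\mathcal{F}^{s-1}$-measurable, as it is), the Markov step on the event $\{S^t\ge x,\ V^t\le v^2\}$, the Bennett exponent from the optimal $\lambda=\tfrac1u\ln(1+ux/v^2)$, and the scalar inequality $h(y)\ge y^2/(2(1+y))$ verified by the second-derivative computation. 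The symmetrization supplying the factor $2$ is also fine since $\{-\zeta_\tau\}$ has the same envelope and conditional variances. The only cosmetic gap is the degenerate case $v^2=0$, where your optimal $\lambda$ is undefined; there the bound $\mathbb{P}(E)\le\exp(-\lambda x)$ holds for every $\lambda>0$, so letting $\lambda\to\infty$ gives $\mathbb{P}(E)=0$, which is stronger than the claimed right-hand side $\exp(-x/(2u))$. With that one-line remark added, the proof is complete and self-contained, which is arguably an improvement over the paper's bare citation.
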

\begin{proof}
The proof can be found in \cite[Theorem 1.6]{freedman1975tail}.\end{proof}

%\section{Proof of Theorem \ref{thm:mixed-pure}}\label{appx:mixed-pure}

\medskip
\subsection{{\bf Proof of Lemma \ref{lemm:local_neighbor}}}\label{sec:local_neighbor}

\begin{proof}
Fix a player $m$ and assume $X^*_{mw^*}=1$. Since $X^*$ is a pure NE, for any $w\neq w^*$ we have
\begin{align}\nonumber
v_{mw^*}(X^*)=\mu_{mw^*}> v_{mw}(X^*)=\mu_{mw}\prod_{k>_w m}(1-X^*_{kw})\in \{0, \mu_{mw}\}. 
\end{align}
On the other hand, we have 
\begin{align}\label{eq:m-w-w}
v_{mw^*}(x)-v_{mw^*}(X^*)&=\mu_{mw^*}\prod_{k>_{w^*} m}(1-x_{kw^*})-\mu_{mw^*}\cr
&\ge -\mu_{mw^*}\sum_{k>_{w^*} m}x_{kw^*}\cr
&\ge -\frac{\mu_{mw^*}}{2}\|x-X^*\|_1,
\end{align}
where the second inequality holds because $X^*_{mw^*}=1$ implies $X^*_{kw^*}=0\ \forall k\neq m$, and hence $\frac{1}{2}\|x-X^*\|_1\ge \sum_{k>_{w^*} m}x_{kw^*}$. Now we consider two cases: 

i) If $v_{mw}(X^*)=\mu_{mw}$, then $v_{mw}(x)\leq \mu_{mw}< \mu_{mw^*}\ \forall x$, which implies $v_{mw^*}(X^*)-v_{mw}(x)\ge \Delta$. Combining this relation with \eqref{eq:m-w-w} and because $\|x-X^*\|_1\leq \frac{\Delta}{\mu_{\max}}$, we get $v_{mw^*}(x)-v_{mw}(x)>\frac{\Delta}{2}>c$. 

ii) If $v_{mw}(X^*)=0$, there exists $k>_w m$ such that $X^*_{kw}=1$. Then, $v_{mw}(x)\leq \mu_{mw}(1-x_{kw})= \mu_{mw}(X^*_{kw}-x_{kw}) \leq \frac{1}{2}\mu_{mw}\|X^*-x\|_1$, and we have $v_{mw^*}(X^*)-v_{mw}(x)\ge \mu_{mw^*}-\frac{1}{2}\mu_{mw}\|X^*-x\|_1$. Combining this relation with \eqref{eq:m-w-w} and because $\|x-X^*\|_1\leq \frac{\mu_{\min}}{\mu_{\max}}$, we get $v_{mw^*}(x)-v_{mw}(x)>\frac{\mu_{\min}}{2}> c.$
\end{proof} 

\medskip
\subsection{{\bf Proof of Lemma \ref{lemm:tech-bound}}}\label{sec:proof-lemma}

\begin{proof}
i) Suppose $\|X^t-X^*\|_1\leq \frac{c}{250n^2}$, then $\|X^t_m-X^*_m\|_1=2(1-X^t_{mw^*})\leq \frac{c}{250n^2}\ \forall m$. Since $X^t_{m}$ is a probability distribution, we can write $$\frac{X^t_{mw^*}}{X^t_{mw}}\ge \frac{1-c/500n^2}{c/500n^2}= \frac{500n^2}{c}-1, \forall m,w\neq w^*.$$ 
Therefore, for any $m,w\neq w^*$, we have $\eta^{t}\hat{L}^{t-1}_{mw^*}-\eta^{t}\hat{L}^{t-1}_{mw}\ge \ln (\frac{500n^2}{c}-1)\ge \ln (\frac{n^2}{2c})+6$.

\noindent
ii) Suppose $\eta^{t}\hat{L}^{t-1}_{mw^*}-\eta^{t}\hat{L}^{t-1}_{mw}\ge \ln(\frac{n^2}{2c})\ \forall m,w$. Then, we can write
\begin{align}\nonumber
\frac{X^t_{mw^*}}{X^t_{mw}}=\exp\big(\eta^{t}\hat{L}^{t-1}_{mw^*}-\eta^{t}\hat{L}^{t-1}_{mw}\big)\ge \frac{n^2}{2c} \ \forall m,w\neq w^*, 
\end{align}
which implies $X^t_{mw^*}\ge \frac{1}{1+2c/n}\ge 1-\frac{2c}{n}$. Thus, $\|X^t_m-X^*_m\|_1=2(1-X^t_{mw^*})\leq \frac{4c}{n}\ \forall m$, and hence $\|X^t-X^*\|_1\leq c$.

\noindent
iii) Suppose $\eta^{t}A^{t-1}=b+\beta$, where $b=\ln(\frac{n^2}{2c})+3$ and $\beta\ge 0$. Then, we have
\begin{align}\nonumber
\eta^{t+1}A^{t}&=\eta^{t+1}A^{t-1}+\eta^{t+1}(v_{mw^*}(X^{t})-v_{mw}(X^{t}))\cr
&\ge \eta^{t+1}A^{t-1}+\eta^{t+1} c\cr
&= \frac{\eta^{t+1}}{\eta^{t}}(b+\beta)+\eta^{t+1} c\cr
&=b+\frac{\eta^{t+1}}{\eta^{t}}\beta+\big(\eta^{t+1} c-(1-\frac{\eta^{t+1}}{\eta^{t}})b\big)\cr 
&\ge b+\eta^{t+1}\big(c-(\frac{1}{\eta^{t+1}}-\frac{1}{\eta^{t}})b\big)\cr
&\ge b,
\end{align} 
where the last inequality holds by the fact that $\frac{1}{\eta^{t+1}}-\frac{1}{\eta^{t}}\leq \frac{c}{b}\ \forall t\ge t_0$.
\end{proof}

%\section{Proof of Theorem \ref{thm:positive-prob-mixed-NE}}\label{appx:mixed-positive}

\medskip
\subsection{{\bf An Alternative Form for the Potential Function in Lemma \ref{the:potential}}}\label{appx:potential}
One can represent the potential function in Lemma \ref{the:potential} directly for womens' \emph{ordinal} preferences. Given a mixed strategy profile $x$, let us associate a vector $p_w(x)\in \mathbb{R}_+^{|M|}$ to each woman $w$ whose coordinates are sorted in the order of $w$'s preferences, such that if $m_1>_wm_2>_w\ldots>_w m_n$, we let the $j$th coordinate of $p_w(x)$ be given by $p_{jw}(x)=x_{m_jw}\prod_{k>_w m_j}(1-x_{kw}),$ which is the probability that woman $w$ is matched to her $j$th favorite man, assuming that the players propose to women independently and according to their mixed strategies $x$. We note that for a pure-strategy profile $x\in \{0,1\}^{|M||W|}$, $p_w(x)$ would be a binary vector with all entries being zero except the $j$th one being one, where $j$ is the index of the highest ranked man proposing to $w$. Therefore, Lemma \ref{the:potential} asserts that after each best response move, the vectors $\{p_w(x), w\in W\}$ increase lexicographically with at least one of the vectors increases strictly. Now, if we again encode woman $w$'s preferences in a cardinal vector $\lambda_w=(\lambda_{wm_1},\ldots,\lambda_{wm_n})$, then for a pure strategy profile $x\in \{0,1\}^{|M||W|}$, we have $$r_w(x)=\langle p_w(x),\lambda_w \rangle=\sum_{m}\lambda_{wm}x_{mw}\prod_{k>_w m}(1-x_{kw}),$$ which coincides with the definition of $r_w(x)$ in Lemma \ref{the:potential} that picks the reward of the highest-ranked man proposing to $w$ under $x$. Therefore, in the space of cardinal preferences and for any pure-strategy profile $x\in \{0,1\}^{|M||W|}$, one can rewrite the potential function in Theorem \ref{the:potential} in a compact as 
\begin{align}\label{eq:potential-function}
\Phi(x)=\sum_w r_w(x)=\sum_{m,w}\lambda_{wm}x_{mw}\prod_{k>_w m}(1-x_{kw}),   
\end{align}
whose value increases after every best response update by a player. We note the similarity between the potential function \eqref{eq:potential-function} and sum of players' payoffs $\sum_{m}u_m(x)=\sum_{m,w}\mu_{mw}x_{mw}\prod_{k>_w m}(1-x_{kw})$, where one is written using mens' preferences $\{\mu_{mw}\}$ and the other using  womens' preferences $\{\lambda_{wm}\}$.

\medskip
\subsection{{\bf A Gap Estimation Oracle for Unknown $\Delta$.}}\label{appx:gap-oracle}

Motivated by the Upper Confidence Bound (UCB) for stochastic bandits \cite{seldin2017improved}, here we provide a gap estimation oracle, which, in combination with Algorithm \ref{alg:main}, allows each player to adaptively adjust his mixing parameter based on the estimated gap he has obtained so far. The integration of this oracle with Algorithm \ref{alg:main} allows the algorithm to be more practical to the situation when a lower bound for the gap $\Delta$ is unknown. In particular, using standard concentration bounds for the UCB algorithm, one can show that the event that the estimated gap is larger than the true gap decays fast as time increases, hence only introducing a small error term to the overall performance of the algorithm.

\begin{algorithm}[H]\caption{A Gap Estimation Oracle for Player $m$}\label{alg:gap-estimation}
\noindent
{\bf For} $t=1,2,\ldots$, player $m$ independetly performas the following steps:

\begin{itemize}
    \item player $m$ independently computes his estimated preferences $\bar{\mu}^{t-1}_{mw}$ as well as his upper and lower confidence bounds as
\begin{align}\nonumber
\bar{\mu}^{t-1}_{mw}=\frac{M^{t-1}_{mw}}{N^{t-1}_{mw}}, \ \ \ \ 
U^t_{mw}=\min\{1, \bar{\mu}^{t-1}_{mw}+\sqrt{\frac{2\log t}{N^{t-1}_{mw}}}\}, \ \ \ \ L^t_{mw}=\max\{0, \bar{\mu}^{t-1}_{mw}-\sqrt{\frac{2\log t}{N^{t-1}_{mw}}}\},
\end{align}
where $N^{t-1}_{mw}$ denotes the number of times that $m$ successfully gets matched to $w$ up to time $t-1$, and $M^{t-1}_{mw}$ denote the accumulated rewards by player $m$ of proposing to $w$ up to time $t-1$.
\item At time $t$, player $m$ proposes to a woman $\alpha^t_m$ according to Algorithm \ref{alg:main} using the mixing parameter $\gamma^t=\hat{M}\frac{\log t}{t}$, where $\hat{M}$ is the value of $M$ in which $\Delta$ and $\mu_{mw}$ are replaced by the estimated values $\hat{\Delta}^t_m=\min_w \hat{\Delta}^t_{mw}$, and $\bar{\mu}^{t-1}_{mw}$, where 
\begin{align}\nonumber
\hat{\Delta}^t_{mw}=\max\{0, L^t_{mw}-\min_{w'}U^t_{mw'}\}.   
\end{align}
\item Player $m$ updates its parameters as
\begin{align}\nonumber
&N^t_{mw}=N^{t-1}_{mw}+\boldsymbol{1}_{\{\alpha^t_k\neq w\ \forall k>_{w} m,\ \alpha_m^t=w\}},\cr
&M^t_{mw}=M^{t-1}_{mw}+\mu^t_{mw}\boldsymbol{1}_{\{\alpha^t_k\neq w\ \forall k>_{w} m,\ \alpha_m^t=w\}}. 
\end{align}
\end{itemize} 
\end{algorithm}

\end{document}